\def\dOi{9(4:23)2013}
\subjclass{D3.3, F3.3}
\def\emptyarg{}
\newcommand{\itp}[2][]{\llbracket #2 \rrbracket_{#1}}
\newcommand{\C}{\underline{C}}
\newcommand{\type}[1]{\mathbf{#1}}
\newcommand{\unit}{\type{1}}
\newcommand{\void}{\type{0}}
\newcommand{\nat}{\type{nat}}
\newcommand{\bool}{\type{bool}}
\newcommand{\exc}{\type{exc}}
\newcommand{\loc}{\type{loc}}
\newcommand{\chr}{\type{chr}}
\newcommand{\prodtype}[1]{\textstyle{\prod}_{#1}}
\newcommand{\sumtype}[1]{\textstyle{\sum}_{#1}}
\newcommand{\optype}[2]{#1 \rightarrowtriangle #2}
\newcommand{\conttype}[2]{#1 \to #2}
\newcommand{\handtype}[1]{#1\> \type{handler}}
\newcommand{\kord}[1]{\mathsf{#1}}
\newcommand{\kop}[1]{\>\mathsf{#1}\>}
\newcommand{\kpre}[1]{\mathsf{#1}\>}
\newcommand{\abs}[1]{{#1 .\>}}
\newcommand{\f}[1][f]{\mathsf{#1}}
\newcommand{\tru}{\kord{true}}
\newcommand{\fls}{\kord{false}}
\newcommand{\one}{\pair{}}
\newcommand{\pair}[1]{\langle #1 \rangle}
\newcommand{\thunk}{\kpre{thunk}}
\newcommand{\maps}{\mapsto}
\newcommand{\bind}[2]{#1 \kop{to} \abs{#2}}
\newcommand{\seq}{;\>}
\newcommand{\letin}[2]{\kpre{let} #1 \kop{be} #2 \kop{in}}
\newcommand{\lam}[1]{\mathord{\lambda \abs{#1}}}
\newcommand{\ret}{\kpre{return}}
\newcommand{\gen}[1]{\underline{\opsym{#1}}}
\newcommand{\opsym}[1]{\mathsf{#1}}
\newcommand{\opterm}[2]{\opsym{#1}_{#2}}
\newcommand{\newop}[1]{\@ifstar{\opsym{#1}}{\opterm{#1}}}
\newcommand{\op}{\newop{op}}
\newcommand{\handleop}{\newop{handle}}
\newcommand{\raiseop}{\newop{raise}}
\newcommand{\chooseop}{\newop{choose}}
\newcommand{\getop}{\newop{get}}
\newcommand{\setop}{\newop{set}}
\newcommand{\delayop}{\newop{delay}}
\newcommand{\readop}{\newop{read}}
\newcommand{\writeop}{\newop{write}}
\newcommand{\botop}{\newop{div}}
\newcommand{\ifThenelse}[3]{\kpre{if} #1 \kop{then} #2 \kop{else} #3}
\newcommand{\force}{\kpre{force}}
\newcommand{\match}[1]{\kpre{match} #1 \kop{with}}
\newcommand{\matchcase}[3]{\match{#1} \{ #2 \}_{#3}}
\newcommand{\matchpair}[3]{\match{#1} \pair{#2} \maps #3}
\newcommand{\prj}[1]{\mathord{\mathsf{prj}}_{#1}\,}
\newcommand{\handle}[2]{#1 \kop{handled} \kpre{with} #2}
\newcommand{\handleto}[3]{\bind{\handle{#1}{#2}}{#3}}
\newcommand{\rec}[1]{\kpre{rec} \abs{#1}}
\renewcommand{\H}[1]{H_{\text{#1}}}
\newcommand{\thy}{\mathcal{T}}
\newcommand{\ctx}{\Gamma}
\newcommand{\tctx}{Z}
\newcommand{\kctx}{K}
\newcommand{\T}{\mathop{:}}
\newcommand{\stoup}{\mid}
\newcommand{\ent}{\vdash}
\newcommand{\cond}[1]{\quad (#1)}
\newcommand{\deduct}[3][]{%
  \def\arg{#1}%
  \inferrule{#2}{#3}%
  \ifx\arg\emptyarg\else\cond{#1}\fi%
}
\newcommand{\defeq}{\stackrel{\mathrm{def}}{=}}
\newcommand{\bnfis}{\mathrel{{:}{:}\!=}}
\newcommand{\bnfor}{\mathrel{\mid}}
\newcommand{\from}{\colon}
\newcommand{\Cat}[1]{\mathord{\mathbf{#1}}}
\newcommand{\Set}{\Cat{Set}}
\newcommand{\wCpo}{\Cat{\omega Cpo}}
\newcommand{\Mod}[1][]{\Cat{Mod}_{#1}}
\newcommand{\Map}[2][]{\mathop{\mathrm{#2}_{#1}}\nolimits}
\renewcommand{\Pr}[1]{\Map[#1]{pr}}
\newcommand{\In}[1]{\Map[#1]{in}}
\newcommand{\Ev}{\mathord{\mathrm{ev}}}
\newcommand{\Ds}{\mathord{\mathrm{dist}}}
\newcommand{\Tr}[1]{\Map{tr}(#1)}
\newcommand{\Id}[1][]{\mathord{\mathrm{id}_{#1}}}
\newcommand{\St}[1]{\mathord{\mathrm{st}}}
\newcommand{\model}{\mathcal{M}}
\newcommand{\homo}{h}
\newcommand{\lol}{\multimap}
\newcommand{\fra}[1]{\forall \abs{#1}}
\newcommand{\exs}[1]{\exists \abs{#1}}
\begin{document}

\title[Handling Algebraic Effects]{Handling Algebraic Effects\rsuper*}

\author[G.~D.~Plotkin]{Gordon D.\ Plotkin\rsuper a}
\address{{\lsuper a}Laboratory for Foundations of Computer Science, School of Informatics, University of Edinburgh, Scotland}
\email{gdp@inf.ed.ac.uk}
\thanks{{\lsuper a}This research was supported by EPSRC grant GR/586371/01 and by a Royal Society-Wolfson Award.}

\author[M.~Pretnar]{Matija Pretnar\rsuper b}
\address{{\lsuper b}Faculty of Mathematics and Physics, University of Ljubljana, Slovenia}
\email{matija.pretnar@fmf.uni-lj.si}

\keywords{algebraic effects, exception handlers, generalised handlers}
\titlecomment{{\lsuper*}A preliminary version of this work was presented at ESOP 2009, see~\cite{plotkin09handlers}.}

\begin{abstract} \noindent
  Algebraic effects are computational effects that can be represented by an equational theory whose operations produce the effects at hand.
  The free model of this theory induces the expected computational monad for the corresponding effect.
  Algebraic effects include exceptions, state, nondeterminism, interactive input/output, and time, and their combinations.
  Exception handling, however, has so far received no algebraic treatment.

  We present such a treatment, in which
    each handler yields a model of the theory for exceptions, and
    each handling construct yields the homomorphism induced by the universal property of the free model.
  We further generalise exception handlers to arbitrary algebraic effects.
  The resulting programming construct includes many previously unrelated examples from both theory and practice,
    including relabelling and restriction in Milner's CCS, timeout, rollback, and stream redirection.
\end{abstract}

\maketitle

\section*{Introduction}

In seminal work~\cite{moggi91notions}, Moggi proposed a uniform representation of computational effects by monads~\cite{benton00monads}.
For example, working in the category of sets, a computation that returns values from a set $A$ is modelled by an element of $T A$ for a suitable monad $T$.
Examples of such effects include exceptions, state, nondeterminism, interactive input/output, time, continuations, and combinations of them.
Later, Plotkin and Power proposed to represent effects by
\begin{enumerate}
  \item a set of operations that represent the sources of effects; and
  \item an equational theory for these operations %GDP Nov 25 
  that describes their properties~\cite{plotkin04computational}.
\end{enumerate}

The basic operational intuition is that each computation either returns a value or performs an operation with an outcome that determines a continuation of the computation.
The arguments of the operation represent the possible continuations.
For example, using a binary choice operation $\chooseop*$, a computation that nondeterministically chooses a boolean is:
\[
  \chooseop{}(\ret \tru, \ret \fls)
\]
The outcome of making the choice is binary:
  either to continue with the computation given by the first argument
  or else to continue with that given by the second.

A computation that returns values from a set $A$ is modelled by an element of the free model $F A$, generated by the equational theory.
In the case of nondeterminism, these equations state that $\chooseop*$ is a semilattice operation.
Modulo the forgetful functor, the free model functor is exactly the monad proposed by Moggi to model the corresponding effect~\cite{plotkin02notions}.
Effects whose monad can be obtained by such an equational presentation are called \emph{algebraic}; the relevant monads are exactly the ranked ones.
With the notable exception of continuations~\cite{flanagan93the-essence,hyland07combining},
  all of the above effects are algebraic, and, indeed, have natural equational presentations.

The algebraic view has given ways of combining effects~\cite{hyland06combining} and reasoning about them~\cite{plotkin08a-logic}.
However, exception handlers provided a challenge to the algebraic approach.

The monad $\mathord{-} + \exc$ for a given set of exceptions $\exc$ is presented by a nullary exception raising operation $\raiseop{e}()$ for each $e \in \exc$ and no equations.
The operation $\raiseop{e}()$ takes no arguments as there is no continuation immediately after an exception has been raised.

The question then arises how to deal with exception handling.
One approach would be to consider a binary exception handling construct
\[
  \handleop{e}(M, N)
\]
which proceeds as $M$ unless the exception $e$ is raised, when it proceeds as $N$.
This construct has a standard interpretation as a binary operation using the exception monad;
  however, as explained in~\cite{plotkin03algebraic}, it lacks a certain naturality property characterising equationally specified operations.
In programming terms this corresponds to the operation commuting with evaluation contexts $\mathcal{E}[-]$.
For example we would expect each of the following two equations to hold:
\[
  \mathcal{E}[\chooseop{}(M, N)] = \chooseop{}(\mathcal{E}[M], \mathcal{E}[N])
  \qquad
  \mathcal{E}[\raiseop{e}()] = \raiseop{e}()
\]
but not:
\[
  \mathcal{E}[\handleop{e}(M, N)] = \handleop{e}(\mathcal{E}[M],\mathcal{E}[N])
\]
Since the naturality property is common to all equationally specified operations, it follows that no alternative (ranked) monad will suffice either.

In this paper we give an algebraic account of exception handling.
The main idea is that
\begin{enumerate}
  \item handlers correspond to (not necessarily free) models of the equational theory; and
  \item the semantics of handling is given using unique homomorphisms that target such models and are induced by the universal property of the free model.
\end{enumerate}
The usual exception handling construct corresponds to the application of the unique homomorphism that preserves returned values.
We, however, adopt a more general approach, suggested by Benton and Kennedy~\cite{benton01exceptional}, and which was an inspiration for the present work.
In Benton and Kennedy's approach returned values are passed to a user-defined continuation;
  this amounts to an application of an arbitrary induced homomorphism to a computation.

As we shall see, this idea generalises to all algebraic effects, yielding a new programming concept enabling one to handle any algebraic effect.
%GDP J3 go thru' all references to UNIX stuff and put in appropriate order, or otherwise consider.
%MP J3 TODO
Examples include relabelling and restriction in CCS~\cite{milner89-calculus}, timeout, rollback, stream redirection of shell processes, and many others.
Conceptually, algebraic operations and effect handlers are dual:
  the former could be called \emph{effect constructors} as they give rise to the effects;
  the latter could be called \emph{effect deconstructors} as the computations they provide proceed according to the effects already created.
Filinski's reflection and reification operations provide general effect constructors and deconstructors in the context of layered monads~\cite{filinski99representing}.

In Section~\ref{sec:exception_handlers}, we illustrate the main semantic ideas via an informal discussion of exception handlers.
Then, in Section~\ref{sec:syntax}, we give a calculus extending Levy's call-by-push-value~\cite{levy06call-by-push-value} with operations, handler definitions, and an effect handling construct, which handles computations using a given handler.
In Section~\ref{sec:examples}, we give some examples that demonstrate the versatility of our handlers.
Next, in Section~\ref{sec:semantics}, we provide a denotational semantics, and define a notion of handler correctness;  an informal introduction to handler correctness is given in Section~\ref{sub:handling_arbitrary_effects}.
%GDP4 
The denotational semantics is given in terms of sets and functions although a more general categorical semantics should also be possible.

In Section~\ref{sec:reasoning_about_handlers}, we sketch some reasoning principles for handlers, and then, in Section~\ref{sec:correctness_of_handlers}, we give some results on the difficulty of deciding handler correctness.
%GDP J3 I noticed that handler correctness was already mentioned here, and as it would have been weird to not %mention  Section 6 at this point I brought forward some points about handler correctness from Section 1.3, and omitted %them below.
%GDP J1 I made the theorems a new section below, so mention the new section here.
%MP  J1 A reasonable choice. I added a comma after "in Section..." to make it consistent with other paragraphs.
In Section~\ref{sec:recursion}, we describe the inclusion of recursion; to do this we switch from the category of sets and functions to that of \emph{$\omega$-cpos} (partial orders with suprema of increasing countable chains) and \emph{continuous functions} (monotone functions preserving such suprema).
In the conclusion, we list open questions and briefly discuss some possible answers.
At various points in the paper we use operational ideas to aid understanding; we do not however present a formal operational semantics of effect handlers.

\section{Exception Handlers}
\label{sec:exception_handlers}

We start our study with exception handlers,
  both because they are an established concept~\cite{benton01exceptional, levy06monads}
  and also since exceptions are the simplest example of an algebraic effect.
To focus on the exposition of ideas, we write this section in a rather informal style, mixing syntax and semantics.

We consider a finite set of exceptions $\exc$.
Computations returning values from a set $A$ are modelled by elements of the exception monad $T A \defeq A + \exc$.
This has unit
  $\eta_A \defeq \In{1} \from A \to A + \exc$,
and the computation $\ret V$ is interpreted by
  $\eta_A(V) = \In{1}(V)$,
while $\raiseop{e}()$ is interpreted by $\In{2}(e)$.

\subsection{Simple handling construct}

\newcommand{\exchandler}{\{ \raiseop{e}() \maps M_e \}_{e \in \exc}}

Fixing $A$, the simple, standard, handling construct is
\[
  \handle{M} \exchandler
\]
where $\{ \mathord{\cdots} \}_{e \in \exc}$ represents a set of computations, one for each exception $e \in \exc$.
The construct proceeds by carrying out the computation $M \in A + \exc$,
  intercepting raised exceptions $e \in \exc$ by carrying out predefined computations $M_e \in A + \exc$ instead.
If we choose not to handle a particular exception $e$, we take $M_e$ to be $\raiseop{e}()$.
The handling construct satisfies two equations:
\begin{align*}
  \handle{\ret V} \exchandler &= \In{1}(V) \\
  \handle{\raiseop{e'}()} \exchandler &= M_{e'}
\end{align*}

From an algebraic point of view, the computations $M_e$ give a new model $\model$ for the theory of exceptions.
The carrier of this model is $A + \exc$ as before; however, for each $e$, $\raiseop{e}()$ is instead interpreted by $M_e$.
We then see from the above two equations that
\[
  \homo(M) \defeq \handle{M} \exchandler
\]
is the unique homomorphism (a map preserving operations) from $A + \exc$ to $\model$ that extends the map
%GDP A5 It is not an inclusion. Also it is good to write generally here as we are making a general point, and, in general, %it will just be some function with  no special properties. inclusion 
$\In{1} \from A \to A + \exc$, i.e., so %GDP A5 such
that the following diagram commutes:
\begin{diagram}
  A             &                 & \\
  \dTo<{\eta_A} & \rdTo^{\In{1}}  & \\
  A + \exc      & \rDotsto_\homo  & \model
\end{diagram}

%MP  This sentence is a bit hard to parse.
%GDP J1 I changed it a little, did it help?
%MP  J1 Yes.
%GDP A5 So the idea is 
The idea is therefore to obtain such a homomorphism~$h$ using the freeness of the model $A + \exc$, used for computations, and then interpret the handling construct as an application of $h$ to the computation being handled.
The existence of $h$, in turn, requires a model $\model$ on $A + \exc$ for its target;
  that model is supplied via the handling construct.

\subsection{Extended handling construct}

\renewcommand{\exchandler}{\{ \raiseop{e}() \maps N_e \}_{e \in \exc}}

Benton and Kennedy~\cite{benton01exceptional} generalised the handling construct to one of the form
\[
  \handleto{M}{\exchandler}{x \T A} N(x)
\]
(written using our syntax).
Here returned values are passed to a user-defined continuation,
  a map $N \from A \to B + \exc$, where $B$ is a set that may differ from $A$; 
  %GDP J1 I introduced the set B just after its first use.
  handling computations $N_e$ return values in $B$, if they do not themselves raise exceptions: thus $N_e \in B + \exc$.
These two facts can be expressed equationally:
\begin{align*}
  \handleto{\ret V}{\exchandler}{x \T A} N(x) &= N(V) \\
  \handleto{\raiseop{e'}()}{\exchandler}{x \T A} N(x) &= N_{e'}
\end{align*}
As discussed in~\cite{benton01exceptional}, this construct
  captures a programming idiom that was cumbersome to write with the simpler construct,
  allows additional program optimisations, and
  has a stack-free small-step operational semantics.

Algebraically we again have a model $\model$, this time on $B + \exc$, interpreting $\raiseop{e}()$ by $N_e$.
The handling construct can be interpreted as $h(M)$,
  where $\homo \from A + \exc \to \model$ is the unique homomorphism that extends $N$,
  i.e., such that the following diagram commutes:
\begin{diagram}
  A             &                 & \\
  \dTo<{\eta_A} & \rdTo^{N}       & \\
  A + \exc      & \rDotsto_\homo  & \model
\end{diagram}
Note that \emph{all} the homomorphisms from the free model to a model on a  given carrier are obtained in this way.
So Benton and Kennedy's handling construct is the most general one possible from the algebraic point of view.

\subsection{Handling arbitrary algebraic effects}
\label{sub:handling_arbitrary_effects}

We can now see how to give handlers for other algebraic effects.
A model of an equational theory is an interpretation, i.e., a set and a set of maps, one for each operation, that satisfies the equations;
handlers give definitions of such interpretations.
As before, computations are interpreted in the free model and handling constructs are interpreted by the induced homomorphisms.
Where exceptions were replaced by handling computations,
  operations are now replaced by the handling maps;
  as computations are built from combinations of operations,
  handling a computation may involve several such replacements.

Importantly however, not all interpretations yield a model of the theory, and so, in that sense, not all handlers need be correct (see Section~\ref{sec:semantics}). If a handler is not correct then handling constructs using it have no meaning.
Even more, whether or not a handler is correct  may be undecidable  (see Section~\ref{sec:correctness_of_handlers}).

We can see two general approaches to this difficulty when designing programming languages with facilities to handle effects.
One is to make the language designer responsible:
  the definable families should be restricted so that only models can be defined; this is the approach taken in~\cite{plotkin09handlers}.
Another, the one adopted in this paper, is to allow complete freedom in the language: all possible definitions are permitted.
In this case not all handlers are correct, 
  and it is not the responsibility of the language designer to ensure that all handlers defined are correct.

One can envisage responsibility being assigned variously
  to the language designer,
  to the compiler,
  to the programmer,
  to a protocol for establishing program correctness,
  or, in varying degrees, to all of them.
For example, some handlers may be ``built-in",  and so the responsibility of the language designer,
  while others may be defined by the programmer, and so their responsibility.

%GDP J1 I have somewhat rewritten the above to give a clearer preview of the notion of handler correctness
%MP  J1 Yes, good to introduce the notion of correctness soon. I added a missing period at the end of the sentence.
\section{Syntax}
\label{sec:syntax}

%GDP J1 I thought maybe give a small introduction for this section? I am not sure there is anything to be gained, %especially as the referee did not make any relevant point.
%MP  J1 That is probably not necessary. Everyone expects the Syntax section to be a big dump of all the data.

\subsection{Signatures}
%GDP J1 I have no idea what the previous title meant. \subsection{Representing the underlying system}
%MP  J1 I agree.

\emph{Signature types} $\alpha, \beta$ are given by:
\[
  \alpha, \beta \bnfis
    \mathbf{b} \bnfor
    \unit \bnfor
    \alpha \times \beta \bnfor
    \sumtype{\ell \in L} \alpha_\ell
\]
  where $\mathbf{b}$ ranges over a given set of \emph{base types},
  where $L$ ranges over finite subsets of \emph{labels} (taken from a fixed set $\mathrm{Lab}$ of all possible labels);
  we do not specify the set of labels in detail, but assume such ones as needed are available.
We specify a subset of the base types as \emph{arity base types},
  and say that a signature type is an \emph{arity signature type} if the only base types it contains are arity ones.

To represent finite data such as
  booleans~$\bool = \{ \tru, \fls \}$,
  finite subsets of integers~$\type{n} = \{ 1, 2, \dots, n \}$,
  the empty set~$\void$,
  characters~$\chr$,
  or memory locations~$\loc$,
we may confuse a finite set $L$ with the type $\sumtype{\ell \in L} \unit$.
For infinite sets, such as natural numbers~$\nat$ or strings, we take base types as needed.

Next, we assume given a set of typed
  \emph{function symbols}~$\f \T \alpha \to \beta$.
These represent pure built-in functions, for example
  arithmetic function symbols, such as $+ \T \nat \times \nat \to \nat$, 
 or   function symbols for arithmetic relations, such as $< \T \nat \times \nat \to \bool$.
%  logical function symbols $\lnot \T \bool \to \bool$, \dots,
%  or an equality on locations $= \T \loc \times \loc \to \bool$.
%We may use infix or other suitable notation when writing functions.
%GDP J1 This material is moved to 2.5, abbreviations, below where it belongs.

Finally, we assume given a finite set of typed
  \emph{operation symbols}~$\op* \T \optype{\alpha}{\beta}$,
    where each such $\beta$ is an arity signature type,
    and an \emph{effect theory}~$\thy$.
The operations represent sources of effects and the effect  theory determines their properties.
In order to to focus on handlers, we postpone the consideration of effect theories to Section~\ref{sub:effect_theory}.

The typing $\op* \T \optype{\alpha}{\beta}$ indicates that the operation represented by $\op*$ accepts a parameter of type $\alpha$ and,
  after performing the relevant effect, its outcome, of type $\beta$, determines its continuation;
we say that $\op*$ is \emph{parameterised} on $\alpha$ and has \emph{arity} $\beta$, or is $\beta$-ary.
In case $\alpha = \unit$, we may just write $\op* \T {\beta}$.

The given sets of base types, 
arity base types, 
%GDP J1 we forgot this!!
typed function symbols, and typed operation symbols constitute a \emph{signature};
the syntax is parameterised by the choice of such a signature.
The choice of signature (and theory) that we take evidently depends on the effects we want to represent.

We now give some examples, taken from~\cite{hyland06combining}.
\begin{exas}
  \label{exa:effects} \hfill
  \begin{description}
    \item[Exceptions]
      We take a single nullary (i.e., $\void$-ary) operation symbol $\raiseop* \T \optype{\exc}{\void}$, parameterised on $\exc$, for raising exceptions.
      Here $\exc$ is a finite set of exceptions; we could instead take it to be a base type if we wanted an infinite set of exceptions.
      The operation symbol is nullary as there is no continuation after raising an exception: instead the exception has to be handled.
    \item[State]
      We take an arity base type $\nat$ for natural numbers and  read and write operation symbols
        $\getop* \T \optype{\loc}{\nat}$ and $\setop* \T \optype{\loc \times \nat}{\unit}$,
      where $\loc$ is a finite set of \emph{locations}.
      The idea is that there is a state holding natural numbers in the locations, and $\getop*$ retrieves the number from a given location,
      while $\setop*$ sets a given location to a given number and returns nothing.
    \item[Read-only state]
      Here we  only take an arity base type $\nat$ and a read operation symbol $\getop* \T \optype{\loc}{\nat}$.
   \item[(Binary) nondeterminism]
      We take a binary (i.e., $\type{2}$-ary) operation symbol $\chooseop*$ for nondeterministic choice.
      The operation symbol is binary as the outcome of making a choice is to decide between one of two choices.
    \item[Interactive input and output (I/O)]
      We take a finite set $\chr$ of characters, and operation symbols
        $\readop* \T {\chr}$, for reading characters, and
        $\writeop* \T \optype{\chr}{\unit}$, for writing them.
      The operation symbol $\readop*$ is $\chr$-ary, as the outcome of reading is to obtain a character;
      the operation symbol $\writeop*$ is unary as there will be just one continuation after writing a character.
  \end{description}
\end{exas}

\subsection{Types}

Our language follows Levy's call-by-push-value approach~\cite{levy06call-by-push-value} and so has a strict separation between \emph{value types}~$A$, $B$ and \emph{computation types}~$\C$.
These types are given by:
\begin{align*}
  A, B &\bnfis
    \mathbf{b} \bnfor
    \unit \bnfor
    A \times B \bnfor
    \sumtype{\ell \in L} A_\ell \bnfor
    U \C \\
  \C &\bnfis
    F A \bnfor
    \prodtype{\ell \in L} \C_\ell \bnfor
    A \to \C
\end{align*}
The value types extend the signature types, as there is an additional type constructor $U \mathord{-}$.
The type $U \C$ classifies computations of type $\C$ that have been \emph{thunked} (or frozen) into values;
such computations can be passed around and later \emph{forced} back into evaluation.

The computation type $F A$ classifies the computations that return values of type $A$.
The product computation type $\prodtype{\ell \in L} \C_\ell$ classifies finite indexed products of computations, of types $\C_\ell$, for $\ell \in L$.
These tuples are not evaluated sequentially as in a call-by-value setting;
instead, a component of a tuple is evaluated only once it is selected by a projection.
Finally, the function type $A \to \C$ classifies computations of type $\C$ parametric on values of type $A$.

\subsection{Terms} \label{terms}

The terms of our language consist of \emph{value} terms~$V$, $W$, \emph{computation} terms~$M$, $N$, and \emph{handler} terms~$H$.
They are given by:
\begin{align*}
  V, W \bnfis {}
    &x \bnfor
    \f(V) \bnfor
    \one \bnfor
    \pair{V, W} \bnfor
    \ell(V) \bnfor
    \thunk M \\
  M, N \bnfis {}
    &\matchpair{V}{x, y}{M} \bnfor
    \matchcase{V}{\ell(x_\ell) \maps M_\ell}{\ell \in L} \bnfor
    \force V \bnfor \\
    &\ret V \bnfor
    \bind{M}{x \T A} N \bnfor
    \pair{M_\ell}_{\ell \in L} \bnfor
    \prj{\ell} M \bnfor
    \lam{x \T A} M \bnfor
    M \, V \bnfor \\
    &\op{V}(\abs{x \T \beta} M) \bnfor
    k(V) \bnfor
    \handleto{M}{H}{x \T A} N \\
  H \bnfis {}
    &\{ \op{x \T \alpha}(k \T \conttype{\beta}{\C}) \maps M_{\op*} \}_{\op* \T \optype{\alpha}{\beta}}
\end{align*}
where $x, y, \dots$ range over an assumed set of \emph{value variables}, and $k$ ranges over an assumed set of \emph{continuation variables}.
Here $\{ \mathord{\cdots} \}_{\ell \in L}$ represents a set of computations, one for each label $\ell \in L$;
similarly, $\{ \mathord{\cdots} \}_{\op* \T \optype{\alpha}{\beta}}$ represents a set of computations, one for each operation symbol $\op* \T \optype{\alpha}{\beta}$.

We may omit type annotations in bindings if it does not cause ambiguity;
we may also speak of values, computations, or handlers instead of value terms, computation  terms, or handler terms, respectively.

All the syntax is standard from call-by-push-value, other than that for the handler terms and the last line of the computation terms.
Value terms are built from value variables, the usual constructs for finite products, constructs for indexed sums, and thunked computations.
Note that value terms only involve constructors, for example, pairing $\pair{V, W}$,
  while the corresponding destructor terms, for example matching $\matchpair{V}{x, y}{M}$, are computations.

Next, there are constructor and destructor computation terms for computation types.
Perhaps the most interesting ones are the ones for the type~$F A$.
The constructor computation $\ret V$ returns the value $V$, while the sequencing construct $\bind{M}{x \T A} N$ evaluates $M$, binds the result to $x$, and proceeds as $N$.

\newcommand{\locvar}{l}

Next, there is an \emph{operation application} computation term:
\[
  \op{V}(\abs{x \T \beta} M)
\]
This first triggers the operation $\op*$ with parameter $V$ and then binds the outcome to $x$, proceeding as the \emph{continuation} $M$.
\begin{exas}
  \label{exa:operations} \hfill
  \begin{itemize}
    \item
      The computation
      \[
        \readop{\pair{}}(\abs{c \T \chr} \writeop{c}(\abs{x \T \unit} \writeop{c}(\abs{y \T \unit} \ret \one)))
      \]
      reads a character~$c$, entered by the user, prints it out twice, and returns the unit value.

    \item
      The computation
      \[
        \getop{\locvar}(\abs{n \T \nat} \setop{\pair{\locvar, n + 1}}(\abs{x \T \unit} \ret n))
      \]
      increments the number~$x$, stored in location~$\locvar$, and returns the old value.
  \end{itemize}
\end{exas}

\noindent Next, a handler term
\[
  \{ \op{x \T \alpha}(k \T \conttype{\beta}{\C}) \maps M_{\op*} \}_{\op* \T \optype{\alpha}{\beta}}
\]
is given by a finite set of \emph{handling operation definitions}
\[
  \op{x \T \alpha}(k \T \conttype{\beta}{\C}) \maps M_{\op*}
\]
one for each operation symbol~$\op*$.
The \emph{handling terms} $ M_{\op*}$ are dependent on their parameters, captured in their \emph{parameter variables}~$x$, and on the continuations of the handled operations, captured in their {continuation variables}~$k$.
Note that continuation variables never appear independently, only in the form~$k(V)$, where they are applied to a value~$V$.

\begin{exas}
  \label{exa:handlers} \hfill
  \begin{itemize}
    % MP J1 I have now rewritten this according to the discussion below.
    \item
      % An exception handler dependent on a variable $f \T U (\exc \to \C)$, which determines how to handle exceptions, is written as:
      % \[
      %   \H{exc} = \{
      %     \raiseop{e \T \exc}(k \T \conttype{\void}{\C}) \maps (\force f) \, e
      %   \}
      % \]
      % MP J1 Replaced this with what follows below:

      The exception handler
      \[
        \{ \raiseop{e}() \maps N_e \}_{e \in \exc}
      \]
      given in Section~\ref{sec:exception_handlers}, can be written as:
      \[
        \H{exc} = \{
          \raiseop{y \T \exc}(k \T \conttype{\void}{\C}) \maps \matchcase{y}{e(z) \maps N_e}{e \in \exc}
        \}
      \]
      where $y$ is the raised exception and $k$ is the continuation.
      As $\exc = \sumtype{e \in \exc} \unit$,
      we match $y$ against all possible cases $e(z)$, where $z$ is a dummy variable of type $\unit$.
      %
      % MP J1 This part remains valid.
      %
      Note that we do not use the continuation~$k$ in the handling term.
      Indeed, $\raiseop*$ is a nullary operation symbol and there are no values of type $\void$ we could feed to the continuation, hence we cannot use it.

    \item
      Even though we cannot modify read-only state, we can still evaluate a computation with the state temporarily set to a different value.
      To do so, we use the \emph{temporary-state} handler that is dependent on a variable~$n \T \nat$; it is
      \[
        \H{temporary} = \{
          \getop{\locvar \T \loc}(k \T \conttype{\nat}{\C}) \maps k(n)
        \}
      \]
  \end{itemize} 
\end{exas}\medskip

%REV In addition to value variables there are continuation variables. The CBPV
%REV construct "- to x:A -" could be subsumed by "- handled with - to x:A -" (this
%REV should be made precise, e.g. when you give the typing rules on page 10, or in
%REV Sec 2.5 - abbreviations). The main addition are handler terms.

%MP  I agree.

%GDP me too. 

%MP  We already have this in the below paragraph. I wrote out the explicit
%MP  Syntax to make it more obvious.

% DONE

%GDP J1 I moved your remark to the end, so as not to interrupt the explanation of the handling construct.
%MP  J1 Very good.

\noindent Finally, we have the \emph{handling} computation term
\[
  \handleto{M}{H}{x \T A} N
\]
This evaluates the computation $M$, handling all operation application computations according to $H$, binds the result to $x$ and proceeds as $N$.
%(Sequencing $\bind{M}{x \T A} N$ is equivalent to the special case of handling in which $H$ handles all operations by %themselves: see the discussion of this point in Section~\ref{sub:equations}.)
In more detail, handling works as follows.
Assume that $M$ triggers an operation application $\op{V}(\abs{y} M')$ and that the corresponding handling term is $\op{z}(k) \maps M_{\op*} \in H$.
Then, the operation is handled by evaluating $M_{\op*}$ instead, with the parameter variable $z$ bound to $V$ and with each occurrence of a term of the form $k(W)$ in $M_{\op*}$ replaced by
\[
  \handleto{M'[W / y]}{H}{x \T A} N
\]
Thus, the continuation $k$ receives an outcome $W$, determined by $M_{\op*}$, and is handled in the same way as $M$.
The handling term $M_{\op*}$ may use the continuation~$k$ any number of times and the behaviour of the handling construct can be very involved.
Note that while continuations are handled by $H$, the handling term $M_{\op*}$ itself is not.
Any operations it triggers or values it returns escape the handler.
They could however be handled by an enclosing handler.

We remark that sequencing $\bind{M}{x \T A} N$ is equivalent to the special case of handling in which $H$ handles all operations by themselves: see the discussion of this point in Section~\ref{sec:reasoning_about_handlers}.

\begin{exa}

  The simplest use for the handling construct is handling exceptions.
  Using handlers, we would write the computation
  \[
    \handleto{M}{\{ \raiseop{e}() \maps N_e \}_{e \in \exc}}{x \T A} N
  \]
  given in Section~\ref{sec:exception_handlers}, as
  %
  % \begin{align*}
  %   \handleto{M}{\{
  %     \raiseop{y}(k) \maps \matchcase{y}{e(z) \maps N_e}{e \in \exc}
  %   \}}{x \T A} N
  % \end{align*}
  % %
  % where $y \T \exc$ is the raised exception and $k$ is the continuation.
  % Recall that $\exc = \sumtype{e \in \exc} \unit$,
  % so we match $y$ against all possible cases $e(z)$, where $z$ is a dummy variable of type $\unit$.
  %
  % MP J1 Replaced the above with:
  %
  \begin{align*}
    \handleto{M}{\H{exc}}{x \T A} N
  \end{align*}
  %
%  where $\H{exc}$ is as  in Example~\ref{exa:handlers}.
%GDP J2 why refer back to the definition of this handler if we don't in other cases?
%MP J2 Good point, especially as it is defined 16cm above.
%
  In this case, the behaviour matches the one given by Benton and Kennedy.
  If the computation $M$ returns a value $V$ then $N[V / x]$ is evaluated.
  If, instead, the computation $M$ triggers an exception $e$ then the replacement term $N_e$ is evaluated instead; so, if a value is then returned, that value is the final result of the entire computation and is not bound in $N$.
  %
  % Alternatively, one may use the handler $\H{exc}$ from Example~\ref{exa:handlers},
  % with the handling function $f$ defined as the thunk of
  % \[
  %   \lam{y \T \exc} \matchcase{y}{e(z) \maps N_e}{e \in \exc}
  % \]
  %MP J1 Commented this out as well.
\end{exa}

\begin{rem}
  The syntax of our handling construct differs from that of Benton and Kennedy:
  \[
    \kpre{try} x \T A \Leftarrow M \kop{in} N \kop{unless} \{ e \Rightarrow M_e \}_{e \in \exc}
  \]
  They noted some programming concerns regarding their syntax~\cite{benton01exceptional}.
  In particular, is not obvious that $M$ is handled but $N$ is not;
  this is especially the case when $N$ is large and the handler is obscured.
  An alternative they propose is:
  \[
    \kpre{try} x \T A \Leftarrow M \kop{unless} \{ e \Rightarrow M_e \}_{e \in \exc} \kop{in} N
  \]
  but then it is not obvious that $x$ is bound in $N$ but not in the handler.
  The syntax of our construct $\handleto{M}{H}{x \T A} N$ addresses both those issues.
  It also clarifies the order of evaluation:
    $M$ is handled with $H$ and its results are bound to $x$ and then used in $N$.
\end{rem}

%REV Example 2.3 (second item) should consider an M that uses the handler H_temporary
%REV twice, e.g. M=get_l1(n1.get_l2(n2.return n1+n2)).

%MP  I agree.

%GDP I expect I will too. 

%MP  I changed the below example. I did keep a single location even though
%MP  our handler handles get's on all locations in the same way. If we put in
%MP  two locations, I think we should write an extra sentence emphasising this
%MP  behaviour.

% DONE

%GDP J1 I have no idea what the point of using the handler twice rather than once is.  Can you explain? I can vaguely %see it would be nicer, if doing this, to use two different locations as then one will see that the same thing happens no %matter what the location.

%MP  J1 Just to see how the handling mechanism works - a handler continues to handle all operations in the continuation.

In the next example  we use a standard  let binding abbreviation, defined as follows:
\[
  \letin{x \T A}{V} M \defeq (\lam{x \T A} M) \, V
\]

\begin{exa}
  Consider the temporary state handler $\H{temporary}$ given in Example~\ref{exa:handlers}.
  Then the computation
  \begin{align*}
    &\letin{n \T \nat}{20} \\
    &\handleto{\getop{\locvar}(\abs{x \T \nat} \getop{\locvar}(\abs{y \T \nat} \ret x + y))\\
    &}{\H{temporary}}{z \T A} \ret z + 2
  \end{align*}
  involving the stateful computation of Example~\ref{exa:operations} evaluates as follows:
    the first $\getop*$ operation is handled, with an outcome $20$ bound to $x$;
    the second $\getop*$ operation in the continuation is handled in the same way, again with an outcome $20$ bound to $y$;
    next, $\ret 20 + 20$ is handled with result $40$;
    this is substituted for $z$ in $\ret z + 2$, and the final result is $42$.
\end{exa}

As remarked in Section~\ref{sub:handling_arbitrary_effects},
  the language considered in~\cite{plotkin09handlers}  has restricted facilities for defining handlers.
In more detail, two levels of language are considered there.
In the first there are no handlers, and so no handling constructs.
The first level is used to define handlers,
  which, if they give models, are then used in handling constructs in the second level
  (which has no facilities for the further definition of handlers).
This can be considered a \emph{minimal} approach in contrast to that considered here,
  which can rather be considered \emph{maximal} as handlers and handling can be nested arbitrarily deeply.
The advantage of the maximal approach is that it accommodates all possible ways of treating the problem of ensuring that handlers give models.

\subsection{Typing judgements}

All typing judgements are made in \emph{value contexts}
\[
  \ctx = x_1 \T A_1, \dots, x_m \T A_m
\]
of value variables~$x_i$ bound to value types~$A_i$ and \emph{continuation contexts}
\[
  \kctx = k_1 \T \conttype{\alpha_1}{\C_1}, \dots, k_n \T \conttype{\alpha_n}{\C_n}
\]
of continuation variables~$k_j$ bound to \emph{continuation types} $\conttype{\alpha_j}{\C_j}$.
Continuation types have the form~$\conttype{\alpha}{\C}$ and type continuations $k$ that accept a value of an arity signature type $\alpha$ and proceed as a computation of type $\C$.
Values are typed as $\ctx \stoup \kctx \ent V \T A$,
  computations are typed as $\ctx \stoup \kctx \ent M \T \C$,
  and handlers are typed as $\ctx \stoup \kctx \ent H \T \handtype{\C}$.
Values are typed  according to the following rules:
\begin{mathpar}
  \deduct[x \T A \in \ctx]{}{
    \ctx \stoup \kctx \ent x \T A
  }

  \deduct[\f \T \alpha \to \beta]{
    \ctx \stoup \kctx \ent V \T \alpha
  }{
    \ctx \stoup \kctx \ent \f(V) \T \beta
  }

  \deduct{}{
    \ctx \stoup \kctx \ent \one \T \unit
  }

  \deduct{
    \ctx \stoup \kctx \ent V \T A \\
    \ctx \stoup \kctx \ent W \T B
  }{
    \ctx \stoup \kctx \ent \pair{V, W} \T A \times B
  }

  \deduct[\ell \in L]{
    \ctx \stoup \kctx \ent V \T A_\ell
  }{
    \ctx \stoup \kctx \ent \ell(V) \T \sumtype{\ell \in L} A_\ell
  }

  \deduct{
    \ctx \stoup \kctx \ent M \T \C
  }{
    \ctx \stoup \kctx \ent \thunk M \T U \C
  }
\end{mathpar}
Next, computations are typed according to the following rules:
\begin{mathpar}
  \deduct{
    \ctx \stoup \kctx \ent V \T A \times B \\
    \ctx, x \T A, y \T B \stoup \kctx \ent M \T \C
  }{
    \ctx \stoup \kctx \ent \matchpair{V}{x, y}{M} \T \C
  }

  \deduct{
    \ctx \stoup \kctx \ent V \T \sumtype{\ell \in L} A_\ell \\
    \ctx, x_\ell \T A_\ell \stoup \kctx \ent M_\ell \T \C \cond{\ell \in L}
  }{
    \ctx \stoup \kctx \ent \matchcase{V}{\ell(x_\ell) \maps M_\ell}{\ell \in L} \T \C
  }

  \deduct{
    \ctx \stoup \kctx \ent V \T U \C
  }{
    \ctx \stoup \kctx \ent \force V \T \C
  }

  \deduct{
    \ctx \stoup \kctx \ent V \T A
  }{
    \ctx \stoup \kctx \ent \ret V \T F A
  }

  \deduct{
    \ctx \stoup \kctx \ent M \T F A \\
    \ctx, x \T A \stoup \kctx \ent N \T \C
  }{
    \ctx \stoup \kctx \ent \bind{M}{x \T A} N \T \C
  }

  \deduct{
    \ctx \stoup \kctx \ent M_\ell \T \C_\ell \cond{\ell \in L}
  }{
    \ctx \stoup \kctx \ent \pair{M_\ell}_{\ell \in L} \T \prodtype{\ell \in L} \C_\ell
  }

  \deduct[\ell \in L]{
    \ctx \stoup \kctx \ent M \T \prodtype{\ell \in L} \C_\ell
  }{
    \ctx \stoup \kctx \ent \prj{\ell} M \T \C_\ell
  }

  \deduct{
    \ctx, x \T A \stoup \kctx \ent M \T \C
  }{
    \ctx \stoup \kctx \ent \lam{x \T A} M \T A \to \C
  }

  \deduct{
    \ctx \stoup \kctx \ent M \T A \to \C \\
    \ctx \stoup \kctx \ent V \T A
  }{
    \ctx \stoup \kctx \ent M \, V \T \C
  }

  \deduct[\op* \T \optype{\alpha}{\beta}]{
    \ctx \stoup \kctx \ent V \T \alpha \\
    \ctx, x \T \beta \stoup \kctx \ent M \T \C
  }{
    \ctx \stoup \kctx \ent \op{V}(\abs{x \T \beta} M) \T \C
  }

  \deduct[k \T \conttype{\alpha}{\C} \in \kctx]{
    \ctx \stoup \kctx \ent V \T \alpha
  }{
    \ctx \stoup \kctx \ent k(V) \T \C
  }

  \deduct{
    \ctx \stoup \kctx \ent M \T F A \\
    \ctx \stoup \kctx \ent H \T \handtype{\C} \\
    \ctx, x \T A \stoup \kctx \ent N \T \C
  }{
    \ctx \stoup \kctx \ent \handleto{M}{H}{x \T A} N \T \C
  }
\end{mathpar}
Finally, handlers are typed according to the following rule:
\[
  \deduct{
    \ctx, x \T \alpha \stoup \kctx, k \T \conttype{\beta}{\C} \ent
      M_{\op*} \T \C \cond{\op* \T \optype{\alpha}{\beta}}
  }{
    \ctx \stoup \kctx \ent \{\op{x \T \alpha}(k \T \conttype{\beta}{\C}) \maps M_{\op*} \}_{\op* \T \optype{\alpha}{\beta}} \T \handtype{\C}
  }
\]
Observe that $\kctx$ may contain more than one continuation variable when the handler being defined is used in handling definitions of other handlers.

\subsection{Abbreviations}
\label{sub:abbreviations}

Before we continue, let us introduce a few abbreviations to help make examples more readable.
First, we obtain arbitrary finite products from binary products:
\begin{align*}
  A_1 \times \dots \times A_n &\defeq (A_1 \times \dots \times A_{n - 1}) \times A_n \cond{n \geq 3} \\
  \pair{V_1, \dots, V_n} &\defeq \pair{\pair{V_1, \dots, V_{n - 1}}, V_n} \cond{n \geq 3}
\end{align*}
understanding binary product, where $n = 2$, as before,
the unit product, where $n = 1$ as simply $A_1$, and
the empty product, where $n = 0$, as $\unit$.

The main use of products is to pass around multiple values as one, so we set:
\begin{align*}
  \f(V_1, \dots, V_n) &\defeq \f(\pair{V_1, \dots, V_n}) \\
  \ell(V_1, \dots, V_n) &\defeq \ell(\pair{V_1, \dots, V_n}) \\
  \op{V_1, \dots, V_n}(\abs{x \T \beta} M) &\defeq
    \op{\pair{V_1, \dots, V_n}}(\abs{x \T \beta} M) \\
  k(V_1, \dots, V_n) &\defeq k(\pair{V_1, \dots, V_n}) \\
  \intertext{%
    Further, where possible, we omit empty parentheses in values and write:
  }
  \f &\defeq \f() \\
  \ell &\defeq \ell() \\
  k &\defeq k()
\end{align*}

We also adapt the tuple destructor to tuples of arbitrary finite size.
Using this destructor, we allow multiple variables in binding constructs such as sequencing or handler definitions.
For example, we set
\[
  \op{x_1, \dots, x_n}(k) \mapsto M
  \defeq
    \op{x}(k) \mapsto (\matchpair{x}{x_1, \dots, x_n} M)
\]
Similarly, we omit empty parentheses in binding constructs that bind no variables.
For the set of booleans, we set:
\[
  \ifThenelse{V}{M}{N} \defeq
    \matchcase{V}{\tru \maps M, \fls \maps N}{}
\]

%GDP J1 The following brought forward from above, as remarked there.
We may use infix or other suitable notation when writing function applications.
We may assume additional function symbols if appropriate defining terms are available. Examples include logical function symbols, such as $\f[or] \T \bool \times \bool \to \bool$,  and   relations such as inequality on locations, $\not=_{\loc} \T \loc \times \loc \to \bool$.

For operation symbols~$\op* \T \optype{\alpha}{\type{n}}$, we define the usual finitary operation applications by:
\[
  \op{V}(M_1, \dots, M_n) \defeq \op{V}(
     \abs{x \T \type{n}} \matchcase{x}{i \maps M_i}{i \in \type{n}}
  )
\]
and write handling definitions as:
\[
  \op{x}(k_1, \dots, k_n) \maps M_{\op*}
\]
where in $M_{\op*}$, we write $k_i$ instead of $k(i)$ for $1 \leq i \leq n$.
In particular, we have:
\[
  \op{V}() \defeq \op{V}(\abs{x \T \void} \matchcase{x}{}{})
\]
and the handling definitions for nullary operations do not contain the corresponding continuation variable.
This agrees with the discussion given in Examples~\ref{exa:handlers}.

%REV For a programmer the choice of syntax is no ideal (but the paper has different
%REV aims), e.g. the first computation in Example 2.2 in Haskell do-notion could be
%REV written do{c<-read; write_c}, where read:M(chr) and write:chr->M1.

%MP  The reason we do not use generic effects is that we wanted to give an example
%MP  before introducing the notation.
%MP  Maybe we could write the example with generic effect but no special notation
%MP  for them, but that is twice as ugly because you have two bindings for each result.

%GDP hmm, maybe repeat the example when we have the notation? 

%MP  I reused and slighly reworded what we had on generics before.
%MP  Do you know why we used the term generic values instead of generic effects?
%GDP J1 nope. A typo?
%MP  I changed it to generic effects.

% DONE
%GDP J1 Good.

Operation applications can be somewhat cumbersome for writing programs, and instead we may write computations using \emph{generic effects}~\cite{plotkin03algebraic}.
The generic effect corresponding to an operation symbol~$\op* \T \optype{\alpha}{\beta}$ is defined by:
\[
  \gen{op} \defeq  \lam{x \T \alpha} \op{x}(\abs{y \T \beta} \ret y) \T \alpha \to F \beta
\]
Computations
\[
  \bind{\gen{op} \, V}{y \T \beta} M \qquad \text{and} \qquad \op{V}(\abs{y \T \beta} M)
\]
behave equivalently, in that both first trigger the operation with parameter~$V$, bind the outcome to~$x$, and proceed as~$M$.

\begin{exa}
  The computations of Example~\ref{exa:operations} could be written using generic effects as follows:
  \[
    \bind{\gen{read} \, \one}{c \T \chr} \gen{write} \, c \seq \gen{write} \, c
  \]
  and
  \[
    \bind{\gen{get} \, \locvar}{x \T \nat} \gen{set} \pair{\locvar, x + 1} \seq \ret x
  \]
  Here $M; N$ is the usual abbreviation for a sequencing $\bind{M}{x \T \unit} N$,
  where the useless result of $M$ is bound to a dummy variable $x$.
\end{exa}

When a handler term contains handling terms only for operation symbols from a subset~$\Theta$ of the set of operation symbols,
  we assume that the remaining operations are handled by themselves (so they are ``passed through").
Such a handler is defined by:
\[
  \{ \op{x}(k) \maps M_{\op*} \}_{\op* \in \Theta} \defeq
    \left\{ \op{x}(k) \maps \begin{cases}
      M_{\op*} & \cond{\op* \in \Theta} \\
      \op{x}(\abs{y \T \beta} k(y)) & \cond{\op* \notin \Theta}
    \end{cases} \right\}_{\op*}
\]

Sometimes we do not wish to write a value continuation in handlers.
Then, we use the following abbreviation:
\[
  \handle{M}{H} \defeq \handleto{M}{H}{x \T A} \ret x
\]
which employs a standard value continuation --- the identity one.
This abbreviation can be considered as a generalisation to arbitrary algebraic effects of the simple exception handling construct discussed in Section~\ref{sec:exception_handlers}.

There is a difference between
\[
  \handleto{M}{H}{x \T A} N
\]
which is the full handling construct,
\[
  \bind{(\handle{M}{H})}{x \T A} N
\]
which takes the result of the handled computation and binds it to $x$ in $N$, and
\[
  \handle{(\bind{M}{x \T A} N)}{H}
\]
which handles the computation that evaluates $M$ and binds the result to $x$ in $N$.
Both the first and the second computation handle only effects triggered by $M$, while the third computation handles effects triggered by both $M$ and $N$.
Furthermore, in the first computation, $x$ binds the value returned by $M$, while in the second computation, $x$ binds the value returned by $M$ once handled with $H$.

\begin{exa}
  To see the difference between the first two computations and the third, set:
  \begin{align*}
    H &\defeq \{
      \raiseop{e}() \maps \ret 10
    \} \\
    M &\defeq \ret 5 \\
    N & \defeq \raiseop{e}()
  \end{align*}
  Then the first two computations raise exception $e$, while the third one returns $10$.
  To see the difference between the second computation and the other two, set:
  \begin{align*}
    H &\defeq \{
      \raiseop{e}() \maps \ret 10
    \} \\
    M & \defeq \raiseop{e}() \\
    N & \defeq \ret 5
  \end{align*}
  Then the second computation returns $5$, while the other two return $10$.
\end{exa}

\section{Examples}
\label{sec:examples}

We now give some more examples to further demonstrate the scope of handlers of algebraic effects. As before such examples can be understood using the informal operational understanding of handlers given above. The question of the correctness of our examples is addressed %briefly 
in Remark~\ref{rem:correctness}, after the notions of effect theories and handler correctness have been presented.

\subsection{Explicit Nondeterminism}

\newcommand{\listb}{\type{list}_{\mathbf{b}}}
\newcommand{\failop}{\newop{fail}}
\newcommand{\listvar}{l}

The evaluation of a nondeterministic computation usually takes only one of all the possible paths.
An alternative is to take all the paths in some order  and allow the possibility of a path's failing.
This kind of nondeterminism is represented slightly differently from binary nondeterminism.
In addition to the binary operation symbol $\chooseop*$, we  take a nullary operation symbol $\failop*$, representing a path that failed.
This interpretation of nondeterminism corresponds to Haskell's nondeterminism monad~\cite{peyton-jones03haskell}.

We consider a handler which extracts the results of a computation into a list (which can then be operated on by other computations).
Since our calculus has no polymorphic lists --- although they could easily be added --- we limit ourselves to lists of a single base type $\mathbf{b}$.
We take a base type $\listb$ and %GDPJ1 the appropriate 
function symbols:
  $\f[nil] \T \unit \to \listb$,
  $\f[cons] \T \mathbf{b} \times \listb \to \listb$,
  $\f[append] \T \listb \times \listb \to \listb$. 
%GDPJ1  and others.

Then, all the results of a computation $\ctx \stoup \kctx \ent M \T F \mathbf{b}$ can be extracted into a returned value of type $F \listb$ by
\[
  \ctx \stoup \kctx \ent \handleto{M}{\H{list}}{x \T \mathbf{b}} \ret \f[cons](x, \f[nil]) \T F \listb
\]
where $\H{list}$ is the handler given by:
\begin{align*}
   \ctx \stoup \kctx \ent {}&\{ \\
    &\quad \failop{}() \maps \ret \f[nil], \\
    &\quad \chooseop{}(k_1, k_2) \maps
      \bind{k_1}{\listvar_1 \T \listb}
      \bind{k_2}{\listvar_2 \T \listb}
      \ret \f[append](\listvar_1, \listvar_2) \\
  &\} \T \handtype{F \listb}
\end{align*}

\subsection{CCS}
\label{sub:ccs}

\newcommand{\nilop}{\newop{nil}}
\newcommand{\prefixop}{\newop{prefix}}
\newcommand{\CCSlabvar}{l}

%GDP J1 Quite a few changes below to the CCS section. some of this is to reflect the tripartite names; labels, actions %setup in CCS.
To represent (the finitary part of) Milner's CCS~\cite{milner89-calculus} we take
  a type $\type{name}$ of (channel) names and an equality function symbol $=_{\type{name}} \T \type{name} \times \type{name} \to \type{name}$; we write $\type{lab}$ for a type of labels, abbreviating $\type{name}_{+} + \type{name}_{-}$, and $\type{act}$ for a type of actions, abbreviating $\unit_{\tau} + \type{lab}_{\text{lab}}$.
We further take
  three operation symbols representing combinators which we consider as effect constructors:
  deadlock $\nilop* \T \optype{\unit}{\void}$,
  action prefix $\prefixop* \T \optype{\type{act}}{\unit}$,
  and sum $\chooseop* \T \optype{\unit}{\type{2}}$.
We use the usual notation and write
  $P$ instead of $M$ for processes,
  $a.P$ instead of $\prefixop{a}(P)$, and
  $P_1 + P_2$ instead of $\chooseop{}(P_1, P_2)$.
Processes $P$ do not terminate normally, only in deadlock, hence we represent them as computations $P \T F \void$.

We consider the other CCS combinators as effect deconstructors.
Both relabelling and restriction can be represented using handlers. In order to write these handlers we assume available an equality function symbol $=_{\type{act}}\T \type{act} \times \type{act} \to \type{act}$ on actions and a ``dual" function symbol $\overline{\;\cdot\;}\T \type{lab} \to \type{lab}$: both have evident definitions. We also feel free to omit evident conversions, from $\type{name}$ to $\type{label}$, and from $\type{label}$ to $\type{act}$.

Relabelling $P[m / \CCSlabvar]$ replaces all actions  with label $\CCSlabvar$ (respectively $\overline{\CCSlabvar}$) in $P$ by  actions  with label $m$ (respectively $\overline{m}$);
it can be represented using the following handler:
\[\begin{array}{lcll}
  \ctx, \CCSlabvar \T \type{lab}, m \T \type{lab} \stoup \kctx & \ent   &  \{   a.k \maps &  \ifThenelse{a = \tau}{k}{\\
                    &  && \ifThenelse{a = \CCSlabvar}{m.k}{\\
                    &  && \ifThenelse{a = \overline{\CCSlabvar}}{\overline{m}.k}{a.k}}}\}\\
                    & & \;\T & \hspace{-35pt} \handtype{F \void}
\end{array}\]
%\[
%  \ctx, a \T \type{act}, b \T \type{act} \stoup \kctx \ent \{
%    a'.k \maps \ifThenelse{a' = a}{b.k}{a'.k}
%  \} \T \handtype{F \void}
%\]
One can deal with more general versions of renaming involving finitely given functions from labels to labels similarly.

Restriction $P \backslash n$ blocks all actions with name $n$  in $P$;
it can be represented using the following handler:
\[
  \ctx, n \T \type{name} \stoup \kctx \ent \{
     a.k \maps \ifThenelse{a = n \;\mathsf{or}\;  a = \overline{n} }{\nilop*()}{a.k}
      \} \T \handtype{F \void}
\]
%\[
%  \ctx, a \T \type{name} \stoup \kctx \ent \{
%     a'.k \maps \ifThenelse{a' = a}{\nilop*()}{a'.k}
%      \} \T \handtype{F \void}
%\]
One can deal with more general versions of restriction involving finite sets of names similarly.
The two handlers give handling terms only for $\prefixop*$
because our convention for handling omitted operations
gives exactly the expected structural behaviour of relabelling and restriction on $\nilop*$ and $\chooseop*$.

We do however, not know how to represent the final CCS combinator --- parallel, written $P \mid Q$.
Parallel is also an effect deconstructor, but unlike relabelling and restriction, which are both \emph{unary} deconstructors,
it is a \emph{binary} deconstructor as it reacts to actions of \emph{both} its arguments.
For a discussion of the difficulties in the treatment of such deconstructors, see~\cite
{glabbeek10on-cspb}.

The next few examples concern the use of parameter-passing handlers. We sometimes wish to handle different %instances %MP A4 - instances have a specific meaning in Eff - also, occurrences seem more temporal
occurrences of the same operation differently, depending on the value of some parameter passed between the different occurrences.
%GDP Nov 25  For example, we may wish to suppress the output after a certain number of characters have been printed 
%out.
Although each handler prescribes a fixed handling term for each operation,
  we can use handlers on function types $P \to \C$ to obtain $\C$  handlers that pass around parameters of value type $P$.
Each handling term then has the type $P \to \C$, rather than $\C$,  and
  captured continuations $k$ have the type $\conttype{\alpha}{(P \to \C)}$, rather than $\conttype{\alpha}{\C}$.

\subsection{Interactive input and output (I/O)}
Suppose we wish to suppress output after a certain number of characters have been printed out.
For any computation type $\C$, we define %GDPJ1
a %GDP Nov 25 our
character-suppressing handler $\H{suppress}$ by:
\begin{align*}
  \ctx, n_{\text{max}} \T \nat \stoup \kctx \ent {}&\{ \\
    &\quad \writeop{c}(k \T \unit \to (\nat \to \C)) \maps \lam{n \T \nat} \\
    &\qquad \ifThenelse{n < n_{\text{max}}}{\writeop{c}(k() \, (n + 1))}
    {k() \, n_{\text{max}}} \\
  &\} \T \handtype{\nat \to \C}
\end{align*}
The handling term for $\writeop*$ is dependent on $n \T \nat$, the number of characters printed out.
If this number is less than $n_{\text{max}}$, the maximum number of characters we want to print out,
  we write out the character and then handle the continuation, but now passing an incremented parameter to it.
But if $n \geq n_{\text{max}}$,
  we do not perform the $\writeop*$ operation, but continue with the handled continuation.
It does not matter exactly which parameter we pass to it, as long as it at least $n_{\text{max}}$.
Still, we call the continuation as it may return a value or trigger other operations.
%GDP A5
(Using a convention introduced above  regarding operations of type $\optype{\alpha}{\type{n}}$, the definition of $\H{suppress}$ could have been written a little more elegantly, taking $k \T \nat \to \C$, and so on.) 

Since the type of the handler is not of the form $\handtype{F A}$,
  we need to specify a term for ``handling'' values in the handling construct, also dependent on the current value of the parameter.
For example, if we wish to handle $M \T F A$ with $\H{suppress}$, we write
\[
  \ctx \stoup \kctx \ent \handleto{M}{\H{suppress}}{x \T A}{\lam{n \T \nat} \ret x} \T \nat \to F A
\]
This means than no matter what the value of parameter $n$ is, we return the value $x$.
The handled computation has type $\nat \to F A$ and so, in order to obtain a computation of type $F A$,
  we need to apply it to the initial parameter $0 \T \nat$ as:
\[
  \ctx \stoup \kctx \ent (\handleto{M}{\H{suppress}}{x \T A}{\lam{n \T \nat} \ret x}) \, 0 \T F A
\]

\begin{rem}
  In the presence of parameters, the convention of handling omitted operations 
by themselves is still valid
  %GDP Nov 25 
  (this convention was used above for the input operation $\readop*$).
  What one wishes to do in the case of an operation that is not handled is to pass an unchanged parameter to the continuation, that is:
  \begin{align*}
    \op{y}(k) &\maps \lam{p \T P} \op{y}(\abs{x \T \beta} k(x) \, p) \\
    \intertext{%
      Since operations are defined pointwise on the function type, this is equivalent to
    }
    \op{y}(k) &\maps \op{y}(\abs{x \T \beta} \lam{p \T P} k(x) \, p) \\
    \intertext{%
      which, by $\eta$-equality, is equivalent to
    }
    \op{y}(k) &\maps \op{y}(\abs{x \T \beta} k(x))
  \end{align*}
  and is exactly what our convention assumes.
%
%GDP A1 well we give something....  Although we do not give an equational logic in this paper,
    The two equalities we have used here are 
    %GDP A1 the first (implicit) equality is not a  CBPV one. standard in call-by-push-value and  are
     discussed further in Section~\ref{sec:reasoning_about_handlers}.
\end{rem}

\subsection{Timeout}

The \emph{timeout} handler furnishes another example of parameter-passing. %GDPJ1
This runs a computation and waits for a given amount of time.
If the computation does not complete in given time, it terminates it, returning a default value $x_0$ instead.

We represent time using a single operation
  $\delayop* \T \optype{\nat}{\unit}$,
where $\delayop{t}(M)$ is a computation that stalls for $t$ units of time, and then proceeds as $M$.
For any value type $A$, the timeout handler $\H{timeout}$ is given by:
\begin{align*}
  \ctx, x_0 \T A, t_{\text{wait}} \T \nat \stoup \kctx \ent {}&\{ \\
    &\quad\delayop{t}(k \T \nat \to F A) \maps \lam{t_{\text{spent}} \T \nat} \\
      &\qquad\ifThenelse{t + t_{\text{spent}} \leq t_{\text{wait}} \\&\qquad}
        {\delayop{t}(k(t + t_{\text{spent}}))\\&\qquad}
        {\delayop{t_{\text{wait}} - t_{\text{spent}}}(\ret x_0)} \\
  &\} \T \handtype{\nat \to F A}
\end{align*}
%\begin{align*}
%  \ctx, x_0 \T A, t_{\text{wait}} \T \nat \stoup \kctx \ent {}&\{ \\
%    &\quad\delayop{t}(k \T \unit \to (\nat \to F A)) \maps \lam{t_{\text{spent}} \T \nat} \\
 %     &\qquad\ifThenelse{t + t_{\text{spent}} \leq t_{\text{wait}} \\&\qquad}
 %       {\delayop{t}(k() \, (t + t_{\text{spent}}))\\&\qquad}
 %       {\delayop{t_{\text{wait}} - t_{\text{spent}}}(\ret x_0)} \\
 % &\} \T \handtype{\nat \to F A}
%\end{align*}
(now making use of the above convention introduced regarding operations $\op* \T \optype{\alpha}{\type{n}}$). 
%GDP A5
The handler is used on a computation $M \T F A$ as follows:
\[
  \ctx, x_0 \T A, t_{\text{wait}} \T \nat \stoup \kctx \ent (\handleto{M}{\H{timeout}}{x \T A} \lam{t \T \nat} \ret x) \, 0 \T F A
\]
Note that the handling term preserves the time spent during the evaluation of the handled computation.

\subsection{Rollback}

 When a computation raises an exception while modifying the memory, for example,
  when a connection drops halfway through a database transaction, we may want to revert all modifications made during the computation.
This behaviour is termed \emph{rollback}.

Assuming, for the sake of simplicity, that there is only a single location, which is 
%GDP Nov 25 
given by a term $\locvar_0$, 
%GDP Nov 25 several operation subscripts instances of l_0 were missing, now added below.
an appropriate rollback handler $\H{rollback}$ is given by:
\[
  \ctx, n_{\text{init}} \T \nat \stoup \kctx \ent \{
    \raiseop{e}(k) \maps \setop{\locvar_0, n_{\text{init}}}(M' \, e)
  \} \T \handtype{\C}
\]
where $M'\T \exc \to \C$. To evaluate a computation $M$, rolling back to the initial state $n_{\text{init}}$ in the case of exceptions, we write:
\[
  \ctx \stoup \kctx \ent \getop{\locvar_0}(\abs{n_{\text{init}} \T \nat} \handle{M}{\H{rollback}}) \T \C
\]

An alternative is a parameter-passing handler, which does not modify the memory,
  but keeps track of all the changes to the location $\locvar_0$ in the parameter $n$.
Then, once the handled computation has returned a value, meaning that no exceptions have been raised, the parameter is committed to the memory.
This handler $\H{param-rollback}$ is:
\begin{align*}
  \ctx \stoup \kctx \ent {}&\{ \\
  &\quad\getop{\locvar_0}(k \T \conttype{\nat}{(\nat \to \C)}) \maps \lam{n \T \nat} k(n) \, n \\
  &\quad\setop{\locvar_0, n'}(k \T \conttype{\unit}{(\nat \to \C)}) \maps \lam{n \T \nat} k() \, n' \\
  &\quad\raiseop{e}() \maps \lam{n \T \nat} M' \, e \\
  &\} \T \handtype{\nat \to \C}
\end{align*}
It is used on a computation $M$ as follows:
\[
  \getop{\locvar_0}(\abs{n_{\text{init}} \T \nat}
    (
      \handleto{M}{\H{param-rollback}}{x \T A} \lam{n \T \nat} \setop{\locvar_0,n}(\ret x)
    ) \, n_{\text{init}}
  )
\]
Here the initial state is read, obtaining $n_{\text{init}}$, which is passed to the handler as the initial parameter value.
If no exception is raised and a value $x$ is returned, the state is updated to reflect the final value $n$ of the parameter.

\subsection{Stream redirection}

Let us conclude with a practical example of processes in a \textsc{Unix}-like operating system.
%GDP4 The main philosophy behind 
These processes %is that they 
read their input and write their output through standard input and output channels.
These channels are usually connected to a keyboard and a terminal window.
However, an output of one process can be piped to the input of another one, allowing multiple simple processes to be combined into more powerful ones.

%GDP4
We present a simplified model of files and devices.  We take a base type~$\type{file}$ to represent files and two operation symbols:
  $\readop* \T \optype{\type{channel}}{\chr}$ and $\writeop* \T \optype{\chr \times \type{channel}}{\unit}$.
Here, $\type{channel}$ is an abbreviation for the sum $\unit_{\text{std}} + \type{file}_{\text{file}}$, where $\text{std}$ represents the standard input and output channel, and $\text{file}(f)$ represents the file~$f$.
%MP J3 - I wrote the following explanation. I did not discuss the difference between writing and appending.
% The reason is that file contents are erased when one opens a file for writing.
% As our files are open at the beginning and never closed and then reopened, this behaviour does not manifest itself.
A more realistic %proper 
model would have to include an operation that allows one to open a file and thus obtain a needed channel, and a similar operation for closing an opened file. In our model, one may instead assume that all files have already been opened for reading and writing.

%MP A4 - is this a remnant of when redirections was at the beginning?
%Our first examples concern using handlers to achieve various redirections of output and input streams.
We begin with the redirection \texttt{p > out} which takes the output stream of a process \texttt{p} and writes it to a file \texttt{out}.
This is used to either automatically generate files or to log the activity of processes.
The redirection is written as:
\[
  \ctx \stoup \kctx \ent \letin{f \T \type{file}}{\mathtt{out}} \handle{\mathtt{p}}{\H{\texttt{>}}} \T \C
\]
where the handler~$\H{\texttt{>}}$ is given by:
\begin{align*}
  \ctx, f \T \type{file} \stoup \kctx \ent {} &\{ \\
    &\quad\writeop{c, ch}(k) \maps \matchcase{ch}{ \\
      &\qquad \text{std} \maps \writeop{c, \text{file}(f)}(k), \\
      &\qquad \text{file}(f') \maps \writeop{c, \text{file}(f')}(k) \\
    &\quad}{} \\
  &\} \T \handtype{\C}
\end{align*}

\textsc{Unix} allows some devices to present themselves as ordinary files.
This, for example, allows a program to print out a document by simply redirecting its output to a file that corresponds to the printer.
A general treatment of %GDP4 such 
 devices is beyond the scope of this paper, but some simple devices can be modelled with handlers.

An example is the null device \texttt{/dev/null}, which discards everything that is written to it.
The command \texttt{p > /dev/null} hence effectively suppresses the output of the process \texttt{p}.
The same behaviour can be achieved using the handler~$\H{\texttt{>\,/dev/null}}$, given by:
\begin{align*}
  \ctx \stoup \kctx \ent {}&\{ \\
    &\quad\writeop{c, ch}(k) \maps \matchcase{ch}{ \\
      &\qquad \text{std} \maps k, \\
      &\qquad \text{file}(f') \maps \writeop{c, \text{file}(f')}(k) \\
    &\quad}{} \\
  &\} \T \handtype{\C}
\end{align*}

A similar redirection \texttt{p < in} reads the file \texttt{in} and passes its contents to the process \texttt{p}.
This redirection can be represented using a handler that now replaces the standard input with a given file in all $\readop*$ operations.
It is given by:
\begin{align*}
  \ctx, f \T \type{file} \stoup \kctx \ent {}&\{ \\
    &\quad\readop{ch}(k) \maps \matchcase{ch}{ \\
      &\qquad \text{std} \maps \readop{\text{file}(f)}(\abs{c \T \chr} k(c)), \\
      &\qquad \text{file}(f') \maps \readop{\text{file}(f')}(\abs{c \T \chr} k(c))) \\
    &\quad}{} \\
  &\} \T \handtype{\C}
\end{align*}
Both redirections can be combined so \texttt{(p < in) > out} reads the input file and writes the processed contents to the output file.

\newcommand{\pipe}{\!\!|\!\!\ }

We next consider \textsc{Unix} pipes \texttt{p1 \pipe p2}, where the output of \texttt{p1} is fed to the input of \texttt{p2}.
Using handlers we can express simple cases of the pipe combinator  \texttt{|}.
For example, consider the pipe \texttt{yes \pipe p},
  where the process \texttt{yes} outputs an infinite stream made of a predetermined character (the default one being \texttt{y}).
Such a pipe then gives a way of routinely confirming a series of actions, for example deleting a large number of files.
(This is not always the best way, since processes usually provide a safer means of doing the same thing, but is often useful when they do not.)
This particular pipe may be written using the following handler:
\[
  \ctx, c \T \chr \stoup \kctx \ent \{
    \readop{\text{std}}(k) \maps k(c)
  \} \T \handtype{\C}
\]
The general pipe combinator \texttt{p1 \pipe p2} is harder to represent, because like the CCS parallel combinator, and unlike the above redirections and the simple pipe, it is a binary rather than a unary deconstructor.
%GDP J1 I thought why not move the references to the first place we have a binary deconstructor to deal with?
%I also removed the reference to the more recent work which comes after this paper.
%MP  J1 I agree. Previously, I put the reference below because your work with Rob is closer to CCS. In any case, you know best.
%MP J3 I shortened this now that binary deconstructors are introduced with CCS parallel.

%GDP4 
It would be very interesting to investigate to what extent the effects
%MP5 Made this a bit more concrete
% such as signals, process scheduling, or environment variables appearing
% in the \textsc{Unix} shell and other shell or scripting languages
%
%MP6 There is not too much distinction between shell and scripting languages,
%    and there are other (compiled) languages that have the same features
%    as well. It is in fact the operating system that supports the features,
%    so I mentioned just that. I also removed environment variables because
%    they can be realistically modelled by state and are not as interesting as
%    signals or scheduling.
supported by \textsc{Unix}
can be realistically modelled using the methods of the algebraic theory of effects: as well as files and devices there are, for example, signals or process scheduling.
%GDP 7 this sentence concerns realistic modelling and we already did not model files or devices realistically. 

\section{Semantics}
\label{sec:semantics}

%REV SEC4 introduces effect theories and defines the interpretation of the syntax
%REV introduced in Sec 2.
%GDP J1 OK I added a sentence at the beginning of the section, and also rewrote the introductory paragraph a little.

We now introduce effect theories and their interpretations and then give the denotational semantics of our language for handling algebraic effects. For the sake of simplicity, we largely limit ourselves to sets, but show in Section~\ref{sec:recursion} how everything adapts straightforwardly to $\omega$-cpos. 
%
%GDP5 I thought material on possible future generalisations belonged more naturally in the conclusion and after the %section on recursion.
%MP6 This list of general conditions we need from the category now appears nowhere. That seems fine - anyone wanting to generalize away from sets should expect this to hold.
%For the denotational semantics, we roughly need a cartesian closed category with finite products and coproducts,
  %just as for any language with first-class functions, finite products, and finite sums.
%More specifically to algebraic effects, we also need the construction of free models of effect theories  and the %parametric lifting of maps to homomorphisms from the free model;
%these are used to model both the sequencing and the handling constructs.

%For the sake of simplicity, we largely limit ourselves to sets, but show in Section~\ref{sec:recursion} how everything %adapts straightforwardly to $\omega$-cpos.
%However it may be possible to work over any category $\mathbf{V}$ that is locally countably presentable  as a cartesian %closed category;
%presumably one would use a more abstract notion of the effect theories of Section~\ref{sub:effect_theory} below, based %on Lawvere $\mathbf{V}$-theories
%(see \cite{plotkin04computational}). 
%%GDP5
%As we also have to deal with partiality, in the event that a handlers are incorrect (see below for details) we would also %need a suitable factorisation system, and it may be sufficient to require that its monos are closed under pullbacks and %admit universal quantification along projections. 

\subsection{Effect theories}
\label{sub:effect_theory}

We describe properties of effects with equations between \emph{templates}~$T$.
These describe the general shape of all computations, regardless of their type.
Assuming a given signature, templates are given by:
\[
  T \bnfis
    z(V) \bnfor
    \matchpair{V}{x, y}{T} \bnfor
    \matchcase{V}{\ell(x_\ell) \maps T_\ell}{\ell \in L} \bnfor
    \op{V}(\abs{x \T \beta} T)
\]
where $z$ ranges over a given set of \emph{template variables}.

In templates, we limit ourselves to \emph{signature values}.
These are values that can be typed as $\ctx \ent V \T \alpha$, where $\ctx$ is a context of value variables bound to signature types;
the typing rules for this judgement are, with the omission of continuation contexts and the rule for typing thunks, the same as those for values.

We build templates in a context of value variables, bound to signature types, and  a \emph{template context}
\[
  \tctx = z_1 \T \alpha_1, \dots, z_n \T \alpha_n
\]
of template variables~$z_j$, bound to arity signature types $\alpha_j$.
Note: $z_j \T \alpha_j$ does not represent a value of type~$\alpha_j$, but a computation dependent on such a value.
As templates describe common properties of computations, we do not assign them a type,
  but only check whether they are well-formed, relative to a value context and a template context.
The typing judgement for being  a well-formed template is $\ctx \stoup \tctx \ent T$; it is given by the following rules:
\begin{mathpar}
  \deduct[z \T \alpha \in \tctx]{
    \ctx \ent V \T \alpha
  }{
    \ctx \stoup \tctx \ent z(V)
  }

  \deduct{
    \ctx \ent V \T \alpha \times \beta \\
    \ctx, x \T \alpha, y \T \beta \stoup \tctx \ent T
  }{
    \ctx \stoup \tctx \ent \matchpair{V}{x, y}{T}
  }

  \deduct{
    \ctx \ent V \T \sumtype{\ell \in L} \alpha_\ell \\
    \ctx, x_\ell \T \alpha_\ell \stoup \tctx \ent T_\ell \cond{\ell \in L}
  }{
    \ctx \stoup \tctx \ent
       \matchcase{V}{\ell(x_\ell) \maps T_\ell}{\ell \in L}
  }

  \deduct[\op* \T \optype{\alpha}{\beta}]{
    \ctx \ent V \T \alpha \\
    \ctx, x \T \beta \stoup \tctx \ent T
  }{
    \ctx \stoup \tctx \ent \op{V}(\abs{x \T \beta} T)
  }
\end{mathpar}

An \emph{effect theory}~$\thy$ over a given signature %GDP Nov 25
is a finite set of equations
  $\ctx \stoup \tctx \ent T_1 = T_2$,
where $T_1$ and $T_2$ are well-formed relative to $\ctx$ and $\tctx$.

Some examples follow, continuing on from the corresponding signatures given in Example~\ref{exa:effects} and Section~\ref{sec:examples}.
We make use of abbreviations  for handling tuples, etc.,  in templates that are analogous to those introduced above for the other kinds of terms.
\goodbreak
\begin{exas}
  \nopagebreak
  \label{exa:effect_theories} \hfill
  \nopagebreak
  \begin{description}
  \setlength{\itemsep}{\bigskipamount}
  \item[Exceptions]
    The effect theory is the empty set, as exceptions satisfy no nontrivial 
    %GDP Nov 25
    equations.
    % ---
    %even more, as $\raiseop*$ is nullary (its outcome type is $\void$), there are no nontrivial equations we can write.
  \item[State]
    The effect theory consists of the following equations (for readability, we write this and other theories without contexts):
    \begin{align*}
             %GDP Nov 27
        \getop{\locvar}(\abs{x} z) &= z \\
        \getop{\locvar}(\abs{x} \setop{\locvar, x}(z)) &= z \\
      \setop{\locvar, x}(\setop{\locvar, x'}(z)) &= \setop{\locvar, x'}(z) \\
      \setop{\locvar, x}(\getop{\locvar}(\abs{x'} z(x')) &= \setop{\locvar, x}(z(x)) \\
      \getop{\locvar}(\abs{x} \getop{\locvar}(\abs{x'} z(x, x'))) &=
        \getop{\locvar}(\abs{x} z(x, x)) \\
      \setop{\locvar, x}(\setop{\locvar', x'}(z)) &= \setop{\locvar', x'}(\setop{\locvar, x}(z)) &\cond{\locvar \neq \locvar'} \\
      \setop{\locvar, x}(\getop{\locvar'}(\abs{x'} z(x')) &=
        \getop{\locvar'}(\abs{x'} \setop{\locvar, x}(z(x')))
        &\cond{\locvar \neq \locvar'} \\
      \getop{\locvar}(\abs{x} \getop{\locvar'}(\abs{x'} z(x, x'))) &=
        \getop{\locvar'}(\abs{x'} \getop{\locvar}(\abs{x} z(x, x'))
        &\cond{\locvar \neq \locvar'}
    \end{align*}
    We found it convenient to write the last three equations with a side condition $\locvar \neq \locvar'$.
    This still remains within the scope of our definition of an effect theory,
    reading $T_1 = T_2 \;(\locvar \neq \locvar')$ as an abbreviation of the equation
    \[
      T_1 = (\ifThenelse{\locvar \not =_{\loc}  \locvar'}{T_2}{T_1})
    \]
    %employing the  term for location inequality that is available as $\loc$  is a finite set.
%GDP J1 We have available =_{\loc} from above
  %GDP Nov 27
  There is some redundancy in these equations, with (see~\cite{plotkin02notions,Mellies10}) the first, fifth, and eight being consequences of the others for a suitable notion of equational consequence (a semantic such notion can be provided using the interpretations of effect theories defined in Section~\ref{sec:int-eff-ths} below).

\item[Read-only state]
    The effect theory consists of the following equations:
    \begin{align*}
      \getop{\locvar}(\abs{x} z) &= z \\
      \getop{\locvar}(\abs{x} \getop{\locvar}(\abs{x'} z(x, x'))) &=
        \getop{\locvar}(\abs{x} z(x, x)) \\
      \getop{\locvar}(\abs{x} \getop{\locvar'}(\abs{x'} z(x, x'))) &=
        \getop{\locvar'}(\abs{x'} \getop{\locvar}(\abs{x} z(x, x'))
        &\cond{\locvar \neq \locvar'}
      \end{align*}
   \item[Nondeterminism]
    The effect theory consists of the following equations:
    \begin{align*}
      \chooseop*(z, z) &= z \\
      \chooseop*(z_1, z_2) &= \chooseop*(z_2, z_1) \\
      \chooseop*(\chooseop*(z_1, z_2), z_3) &=
        \chooseop*(z_1, \chooseop*(z_2, z_3))
    \end{align*}
  %MP A1 Moved this here as it is not a combination of theories.  
  \item[Explicit nondeterminism]
    The effect theory consists of the following equations:
    \begin{align*}
      \chooseop*(\chooseop*(z_1, z_2), z_3) &=
        \chooseop*(z_1, \chooseop*(z_2, z_3))\\
      \chooseop*(z, \nilop*()) &= z \\
      \chooseop*(\nilop*(), z) &= z
    \end{align*}
    \item[I/O, Stream redirection]
%GDP4
      The effect theory is (again) the empty set.
  %MP A1 Moved this here as it is not a combination of theories.
  \item[Time]
    The effect theory consists of the following equations:
    \begin{align*}
      \delayop{0}(z) &= z \\
      \delayop{t_1}(\delayop{t_2}(z)) &= \delayop{t_1 + t_2}(z)
    \end{align*}
  \item[CCS]
    We take the effect theory for CCS to be the 
    %GDP Nov 25 sum 
    union of the effect theories for nondeterminism and explicit nondeterminism, i.e., it consists of the following equations:
     \begin{align*}
      \chooseop*(z, z) &= z \\
      \chooseop*(z_1, z_2) &= \chooseop*(z_2, z_1) \\
      \chooseop*(\chooseop*(z_1, z_2), z_3) &=
        \chooseop*(z_1, \chooseop*(z_2, z_3))\\
        \chooseop*(z, \nilop*()) &= z \\
      \chooseop*(\nilop*(), z) &= z
    \end{align*}
    These equations are those axiomatising strong bisimulation; for weak bisimulation one would add Milner's $\tau$ laws~\cite{milner89-calculus}.
    %
%GDP J1 I discuss both weak and strong bisimulation here; there is a remark later that both handlers are correct for both %theories.
%MP  J1 Great. 
  \item[Destructive exceptions]
    %GDP Nov 25 
    The signature of destructive exceptions is that of state together with that of exceptions, and its theory
        %GDP Nov 25 The effect theory (discussed as that of ``rollback" in~\cite{hyland06combining})
    %is the tensor product of the theories for state and exceptions.
    %Hence, it 
    consists of all the equations for state and the following two equations:
    \begin{align*}
      \getop{\locvar}(\abs{x} \raiseop{e}()) &= \raiseop{e}() \\
      \setop{\locvar, x}(\raiseop{e}()) &= \raiseop{e}()
    \end{align*}
    As $\raiseop*$ is nullary, we have written $\raiseop{e}()$ instead of $\raiseop{e}(\abs{x \T \void} T)$,
      emphasising that the continuation never gets evaluated.

    The equations imply that a memory operation followed by raising an exception is the same as just raising the exception.
    Effectively, this implies that all memory operations are moot if an exception occurs, hence the terminology of  destructive exceptions.
    % MP A1 It is not an instance because the theory is just a set of axioms and is not closed.
   Observe that the first equation is an instance of one occurring in the effect theory for state.
   %GDP Nov 25 I omitted this as we(I think!) do not discuss any notion of implication between equations. 
   %GDP above put back as we now have such a notion at least hinted at.
   %Observe that the first equation is implied by the ones occurring in the effect theory for state.
%
  The theory of destructive exceptions is discussed as that of ``rollback" in~\cite{hyland06combining}, where it is given as an example of the \emph{tensor} of two theories.

\end{description}
\end{exas}

\subsection{Interpreting effect theories} \label{sec:int-eff-ths}

We assume given an effect theory $\thy$.
We begin by interpreting each signature type $\alpha$ by a set $\itp{\alpha}$.

Suppose we are given an assignment $\itp{\mathbf{b}}$ of a set to each base type $\mathbf{b}$,
  % MP A1 This was a bit hard to read.
  % which is countable in the case that it is an arity base type.
  such that $\itp{\mathbf{b}}$ is countable in the case that $\mathbf{b}$ is an arity base type.
Finite products and sums of signature types are then interpreted using the corresponding set operations,
  and base types are interpreted by the assigned sets.
Note that $\itp{\alpha}$ is then a countable set for any arity signature type $\alpha$;
this will allow us to interpret effects in terms of countable equational theories.

An \emph{interpretation} of $\thy$ then consists of such an assignment together with a map
  $\itp{\f} \from \itp{\alpha} \to \itp{\beta}$
for each function symbol~$\f \T \alpha \to \beta$. The intended interpretations for each of our examples should be clear. For instance $\itp{\mathbf{\listb}}$ is the set of lists of % MP A1 booleans,
elements of $\itp{\mathbf{b}}$,
$\itp{\f[append] } \from \itp{\listb} \to \itp{\listb}$ is list concatenation, and  $\itp{\type{name}}$ is a given set with equality function $\itp{=_{\type{name}}} \T \itp{\type{name}} \times \itp{\type{name}}  \to \itp{\type{name}}$.

%GDP J1 I wanted to indicate what the interpretations would be in our examples, but without being tedious about it.

We assume given such an interpretation.
We can then interpret each well-typed signature value $\ctx \ent V \T \alpha$ in an evident way by a map
  $\itp{V} \from \itp{\ctx} \to \itp{\alpha}$,
where value contexts $\ctx$ are interpreted component-wise by setting
  $\itp{x_1 \T \alpha_1, \dots, x_m \T \alpha_m} =
  \itp{\alpha_1} \times \dots \times \itp{\alpha_m}$.

Take a set $|\model|$ and an \emph{operation}
\[
  \op{\model} \from \itp{\alpha} \times |\model|^{\itp{\beta}} \to |\model|
\]
for each operation symbol $\op* \T \optype{\alpha}{\beta}$.
Then, we can interpret a well-formed template $\ctx \stoup \tctx \ent T$ by a map
  $\itp{\ctx \stoup \tctx \ent T} \from \itp{\ctx} \times \itp{\tctx} \to |\model|$,
where template contexts $\tctx$ are interpreted component-wise by
  $\itp{z_1 \T \alpha_1, \dots, z_n \T \alpha_n} =
  |\model|^{\itp{\alpha_1}} \times \dots
  \times |\model|^{\itp{\alpha_n}}$.

We can then interpret templates:
 % MP A1 - changed this to a pointwise definition, consistent with the ones below.
 %\begin{align*}
 %  \itp{\ctx \stoup \tctx \ent z_j(V)} &=
 %    \Ev \circ \pair{\Pr j \circ \Pr 2, \itp{V} \circ \Pr 1} \cond{1 \leq j \leq n}\\
 %
 %   \itp{\ctx \stoup \tctx \ent \matchpair{V}{x, y}{T}} &=
 %    \itp{T} \circ
 %   ( \pair{
 %      \Id[\itp{\ctx}],
 %      \itp{V}
 %    } \times \Id[\itp{\tctx}]) \\
 %
 %  \itp{\ctx \stoup \tctx \ent \matchcase{V}{\ell(x_\ell) \maps T_\ell}{\ell \in L}} &=
 %    [\itp{T_\ell}]_{\ell \in L}
 %    \circ \Ds \circ (\pair{\Id[\itp{\ctx}], \itp{V}}  \times \Id[\itp{\tctx}])\\
 %
 %  \itp{\ctx \stoup \tctx \ent \op{V}(\abs{x \T \beta} T)} &=
 %    \op{\model} \circ
 %    \pair{
 %      \itp{V} \circ \Pr1,
 %      \Tr{\itp{\ctx, x \T \beta \stoup \tctx \ent T}}
 %    }
 %\end{align*}
 \begin{align*}
  \itp{\ctx \stoup \tctx \ent z_j(V)}(a, b) &=
    \Pr j(b)(\itp{V}(a)) \cond{1 \leq j \leq n}\\
   \itp{\ctx \stoup \tctx \ent \matchpair{V}{x, y}{T}}(a, b) &=
    \itp{T}(\pair{a, \Pr 1(\itp{V}(a)), \Pr 2(\itp{V}(a))}, b) \\
  \itp{\ctx \stoup \tctx \ent \matchcase{V}{\ell(x_\ell) \maps T_\ell}{\ell \in L}}(a, b) &=
    \itp{T_\ell}(\pair{a, x}, b) \cond{\text{if $\itp{V}(a) = \ell(x)$}} \\
  \itp{\ctx \stoup \tctx \ent \op{V}(\abs{x \T \beta} T)}(a, b) &=
    \op{\model}(\itp{V}(a), \itp{T}(\pair{a, -}, b))
\end{align*}
% where on the right, we have abbreviated the obvious typing judgements.
where on the right,
  $\Pr i \from A_1 \times \dots \times A_n \to A_i$
    is the $i$-th projection,
and we have abbreviated evident typing judgements. 
%GDP Nov 25
We also used a standard notation for functions, writing  $\itp{T}(\pair{a, -}, b)$ for the function 
$x \T \itp{\beta} \mapsto \itp{T}(\pair{a, x}, b)$.

% MP A1 - since we define things pointwise, we no longer need all these maps:
% In the above, we use the following maps:
%   $\Pr i \from A_1 \times \dots \times A_n \to A_i$
%     is the $i$-th projection;
%   $\Ev \from B^A \times A \to A$ is evaluation;
%   $\Id[A] \from A \to A$ is the identity;
%   $\pair{f_1, f_2} \from A \to B_1 \times B_2$ is the tuple of
%     $f_1 \from A \to B_1$ and $f_2 \from A \to B_2$;
%   then $f_1 \times f_2 = \pair{f_1 \circ \Pr 1, f_2 \circ \Pr 2} \from A_1\times A_2 \to B_1 \times B_2$ is the product of
%     $f_1 \from A_1 \to B_1$ and $f_2 \from A_2 \to B_2$;
%   $\Ds \from B \times (\sum_{i \in I} A_i)  \to \sum_{i \in I} B \times   A_i$ is the distributivity isomorphism;
%     $[f_i]_{i \in I} \from \sum_{i \in I} A_i \to B$ is the co-tuple of
%  a family of functions  $f_i \from A_i \to B$, ${i \in I}$; and
%   $\Tr{f} \from A \to C^B$ is the transpose of $f \from A \times B \to C$.
%   %
%   For the sake of readability, we have also omitted associativity and commutativity isomorphisms (and we do so again below).

\begin{defi}
  A \emph{model}~$\model$ of the effect theory~$\thy$ is a set $|\model|$,
    called the \emph{carrier} of the model,
    together with an \emph{operation}
  \[
    \op{\model} \from
      \itp{\alpha} \times |\model|^{\itp{\beta}} \to |\model|
  \]
  for each operation symbol $\op* \T \optype{\alpha}{\beta}$,
    such that $\itp{T_1} = \itp{T_2}$ holds for all the equations $\ctx \stoup \tctx \ent T_1 = T_2$ in $\thy$.

  A \emph{homomorphism}~$\homo \from \model_1 \lol \model_2$ is a map $ \homo \from |\model_1| \to |\model_2|$ such that
  \[
     \homo \circ \op{\model_1} =
    \op{\model_2} \circ
      (\Id[\itp{\alpha}] \times \homo^{\itp{\beta}})
  \]
  holds for all operation symbols $\op* \T \optype{\alpha}{\beta}$.

  The models of $\thy$ and the homomorphisms between them form a category $\Mod[\thy]$.
  This category is equipped with a forgetful functor $U \from \Mod[\thy] \to \Set$,  where
  $U\model \defeq  |\model|$ and $Uh \defeq h$.
\end{defi}

\begin{prop} \label{prop:free}
  The functor $U \from \Mod[\thy] \to \Set$ has a left adjoint $F \from \Set \to \Mod[\thy]$.
\end{prop}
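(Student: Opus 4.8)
The plan is to exhibit $F$ by the classical free-algebra construction, reading the effect theory $\thy$ as an ordinary single-sorted equational theory whose operations may have countable arity. Concretely, an operation symbol $\op* \T \optype{\alpha}{\beta}$ together with a parameter value $p \in \itp{\alpha}$ determines a single $\itp{\beta}$-ary operation, and since every arity signature type $\beta$ has $\itp{\beta}$ countable, the collection of all these constitutes a (possibly large, but countably-ranked) algebraic signature $\Sigma$; a model of $\thy$ in the sense of the preceding definition is then exactly a $\Sigma$-algebra satisfying the equations of $\thy$. I would construct the free model $FA$ on a set $A$ as the $\Sigma$-term algebra modulo the congruence generated by $\thy$, and then verify its universal property, which is precisely the adjunction $F \dashv U$.

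First I would build the absolutely free $\Sigma$-algebra on $A$, namely the set of well-founded $\Sigma$-branching trees whose leaves are labelled by elements of $A$. Set-theoretically this is the increasing union $\bigcup_{\kappa} T_\kappa$, where $T_0 = A$ and $T_{\kappa+1}$ adjoins, for each operation symbol $\op*$ and each $p \in \itp{\alpha}$, a node $\op*(p, t)$ for every map $t \from \itp{\beta} \to T_\kappa$, with unions taken at limit ordinals. Because each arity $\itp{\beta}$ is countable, every such tree has countable rank, so the construction stabilises at stage $\omega_1$ and the carrier is a genuine set rather than a proper class. The syntactic operations $\op*(p, -)$ turn this set into a $\Sigma$-algebra containing $A$ as its leaves.

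Next I would quotient by the least congruence $\approx$ closed under all the operations and containing every instance of every equation of $\thy$: for each equation $\ctx \stoup \tctx \ent T_1 = T_2$, each $a \in \itp{\ctx}$, and each assignment $b$ of a map $\itp{\alpha_j} \to \bigcup_\kappa T_\kappa$ to each template variable $z_j$, one relates $\itp{T_1}(a,b)$ and $\itp{T_2}(a,b)$ as computed by the template interpretation of Section~\ref{sec:int-eff-ths} applied to the term algebra itself. Since $\approx$ is a congruence, the syntactic operations descend to $|FA| \defeq (\bigcup_\kappa T_\kappa)/{\approx}$, and every equation of $\thy$ holds there by construction, so $FA$ is a model; the unit $\eta_A \from A \to |FA|$ sends each $a$ to the class of its leaf.

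Finally I would verify the universal property. Given a model $\model$ and a map $g \from A \to |\model|$, define $\homo$ on trees by well-founded recursion, sending leaves via $g$ and each node $\op*(p,t)$ to $\op{\model}(p, \homo \circ t)$; this is manifestly a $\Sigma$-homomorphism and the unique one extending $g$, since a homomorphism is determined on leaves. The single substantive point is that $\homo$ respects $\approx$ and hence factors through the quotient $FA \lol \model$: a routine induction on templates shows $\homo(\itp{T}(a,b)) = \itp{T}(a, \homo \circ b)$, so that $\homo$ identifies the two sides of each instance of each equation exactly because $\model$ satisfies $\thy$. This yields a bijection, natural in $A$ and $\model$, between homomorphisms $FA \lol \model$ and maps $A \to U\model$, i.e. $F \dashv U$. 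The \emph{main obstacle} is not the universal property but the infinitary bookkeeping: one must confirm that the tree construction really produces a set (which is where the countability of every $\itp{\beta}$ is used, via stabilisation at $\omega_1$) and that $\approx$ is correctly taken as the least congruence closed under all instantiations of template variables by trees, so that both the descended operations and the factorisation of $\homo$ are well defined. One may alternatively bypass the explicit construction entirely by observing that $\Mod[\thy]$ is the category of models of a countably-ranked equational theory and invoking the standard existence of free algebras for such theories, the countability assumption on arity base types being precisely what guarantees the left adjoint exists.
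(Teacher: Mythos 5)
Your proposal is correct, and its core is the same reduction the paper uses: an operation symbol $\op* \T \optype{\alpha}{\beta}$ together with a parameter $a \in \itp{\alpha}$ becomes a single operation of countable arity $\itp{\beta}$, so that $\Mod[\thy]$ is the category of models of an infinitary equational theory with countable arities. The difference is what happens after this translation. The paper makes the translated theory explicit (each template $T$ and each $c \in \itp{\ctx}$ yield an infinitary term $T^c$ over variables $z_i^a$, one for each template variable $z_i \T \alpha_i$ and each $a \in \itp{\alpha_i}$), observes that the resulting category of models is equivalent to $\Mod[\thy]$ compatibly with the forgetful functors, and then cites the classical existence theorem for free algebras of countable equational theories (Gr\"atzer). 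You instead inline that classical theorem: term trees built by transfinite recursion, stabilisation at $\omega_1$ by regularity (the precise place where countability of the arities $\itp{\beta}$ is used), quotient by the congruence generated by all instances of the equations, and the universal property via the commutation lemma $\homo(\itp{T}(a,b)) \simeq \itp{T}(a,\homo\circ b)$. Indeed, your closing remark --- that one may bypass the construction by invoking the standard free-algebra result for countably-ranked theories --- \emph{is} the paper's proof. What your version buys is self-containedness and an explicit accounting of where the countability hypothesis enters; what the paper's buys is brevity, delegating the infinitary bookkeeping to a standard reference. One small point to tighten: when you say the equations of $\thy$ hold in the quotient ``by construction'', you also need that every assignment of quotient elements to template variables lifts along the surjective quotient map (using choice) and that template interpretation commutes with that map --- the same commutation lemma you state later for $\homo$, now applied to the quotient homomorphism; this is routine, but it is the step that converts closure of the congruence under term-algebra instances into satisfaction of the equations in the quotient.
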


\begin{proof}
  The existence of the adjoint functor follows from the corresponding result in the context of countable equational theories~\cite{gratzer79universal}.

  \newcommand{\thyc}{\thy_\omega}
  First we obtain a countable equational theory $\thyc$ from $\thy$.
  The signature of $\thyc$ consists of
    operations $\op*_a$ of arity $|\itp{\beta}|$
    for each operation symbol $\op* \T \optype{\alpha}{\beta}$
    and each $a \in \itp{\alpha}$.
  As $\beta$ is an arity signature type, $\itp{\beta}$ is a countable set, hence each operation has a countable arity.

  Next, for each well-formed template
  \[
   \ctx \stoup z_1 \T \alpha_1, \dots, z_n \T \alpha_n \ent T
  \]
  and each $c \in \itp{\ctx}$, we get a $\thyc$ term $\tctx' \T T^c$,
    where the context $\tctx'$ consists of variables $z_i^a$ for each $1 \leq i \leq n$ and each $a \in \itp{\alpha_i}$.
  Since all the $\alpha_i$ are again arity signature types,
    the sets $\itp{\alpha_i}$ are countable, and so $\tctx$ has countably many variables.
  The term $T^c$ is defined recursively by:
  \begin{align*}
    (z_i(V))^c &= z_i^{\itp{V}(c)} \\
    (\matchpair{V}{x, y}{T})^c &= T^{\pair{c,\Pr1(\itp{V}(c)), \Pr2(\itp{V}(c))}} \\
    (\matchcase{V}{\ell(x_\ell) \maps T_\ell}{\ell \in L})^c &=
      T_{\ell'}^{\pair{c, a}} \cond{\itp{V}(c) = \In{\ell'}(a)} \\
    (\op{V}(\abs{x \T \beta} T))^c &= \op{\itp{V}(c)}(T^{\pair{c, b}})_{b \in \itp{\beta}}
  \end{align*}
  We then take $\thyc$ to be the equational theory generated by all equations $\tctx' \ent T_1^c = T_2^c$,
    where $\ctx \stoup \tctx \ent T_1 = T_2$ is a $\thy$ equation and $c \in \itp{\ctx}$.
  It is not difficult to see that the category of models of $\thyc$ is equivalent to $\Mod[\thy]$,
    with the equivalence being consistent with the forgetful functors;
  the result follows.
\end{proof}

\newcommand{\dagsub}[2]{#1^{\dagger_{#2}}}
The left adjoint $F$ is called the \emph{free model functor};
for a set~$A$, the model $F A$ is called the \emph{free model over~$A$};
and for a map $f \from A \to U \model$, the adjoint homomorphism $\dagsub{f}{\model} \from F A \lol \model$ is called the \emph{homomorphism induced by $f$}.
We write $\eta_A \from A \to FA$ for the unit, as usual.

\subsection{Interpreting values and computations}

We assume given
  a signature (so that value and computation terms are determined),
  an effect theory $\thy$ over that signature, and
  an interpretation of the effect theory
  (so that the category of models, etc., is determined).

We can then interpret value types $A$ by sets $\itp{A}$ and computation types $\C$ by models $\itp{\C}$ of the given effect theory~$\thy$.
Value types are interpreted in the same way as signature types,
  except for $U \C$, which is interpreted as the carrier of $\itp{\C}$.
The computation type $F A$ is interpreted by the free model on $\itp{A}$.
Products of computation types are interpreted by product models and function types are interpreted by exponent models.
These are defined as follows:
\begin{defi}
  % MP A1 - changed i and I to \ell and L as that is used elsewhere.
  For any family of models $\model_\ell$, where $\ell$ ranges over the index
  set~$L$, the \emph{product model}~$\prod_{\ell \in L} \model_\ell$ is
   the model with carrier $\prod_{\ell \in L} |\model_\ell|$ and
  component-wise defined operations.
  For any model $\model$ and any set $A$, the
  \emph{exponent model}~$\model^A$ is the model with
  carrier $| \model|^A$ and pointwise defined operations.
\end{defi}
\newcommand{\eqdef}{=_{\mathrm{def}}}
\newcommand{\pto}{\rightharpoonup}
Contexts are again interpreted component-wise,
  with $\ctx = x_1 \T A_1, \dots, x_m \T A_m$ being interpreted by
  $\itp{\Gamma} \eqdef \itp{A_1} \times \dots \times \itp{A_m}$,
  and
  $\kctx = k_1 \T \conttype{\beta_1}{\C_1}, \dots, k_n \T \conttype{\beta_n}{\C_n}$ being interpreted by
  $\itp{\kctx} \eqdef U \itp{\C_1}^{\itp{\beta_1}} \times \dots \times U \itp{\C_n}^{\itp{\beta_n}}$.

We would like %GDPP wish 
to interpret
  well-typed value terms $\ctx \stoup \kctx \ent V \T A$ by maps of the form
  \[\itp{\ctx \stoup \kctx \ent V \T A} \from \itp{\ctx} \times \itp{\kctx} \to \itp{A}\]
  well-typed computation terms $\ctx \stoup \kctx \ent M \T \C$ by maps of the form 
  \[\itp{\ctx \stoup \kctx \ent M \T \C} \from \itp{\ctx} \times \itp{\kctx} \to U \itp{\C}\]
  and well-typed handler terms $\ctx \stoup \kctx \ent H \T \handtype{\C}$ by maps of the form
  \[
    % MP5 Fixed this one for the sake of consistency
    % (it is not too wide and it looks just fine visually)
    % \itp{H}\from \itp{\ctx} \times \itp{\kctx}
    \itp{\ctx \stoup \kctx \ent H \T \handtype{\C}} \from \itp{\ctx} \times \itp{\kctx}
    \to
    \prod_{\op* \T \optype{\alpha}{\beta}} \itp{\alpha}
    \times
    U \itp{\C}^{\itp{\beta}} \rightarrow U \itp{\C}
  \]
  such that, for every $c \in  \itp{\ctx} \times \itp{\kctx} $, the set $U \itp{\C}$,
  together with the operations $\Pr {\op*}(\itp{H}(c))$, forms a model of  $\thy$.

However, as discussed informally above, not all handlers  receive an interpretation.
As handler terms occur in computation terms and computation terms occur in value terms,
  computation terms and value terms may also not receive an interpretation.
  So, instead, we interpret  
  well-typed value terms $\ctx \stoup \kctx \ent V \T A$ by partial maps 
    \[\itp{\ctx \stoup \kctx \ent V \T A} \from \itp{\ctx} \times \itp{\kctx} \pto \itp{A}\]
  on $ \itp{\ctx} \times \itp{\kctx}$, and similarly for computation and handler terms.

 Fix contexts
  $\ctx = x_1 \T A_1, \dots, x_m \T A_m$ and
  $\kctx = k_1 \T \conttype{\beta_1}{\C_1}, \dots, k_n \T \conttype{\beta_n}{\C_n}$.
Value terms are  interpreted as follows:
\begin{align*}
  \itp{\ctx \stoup \kctx \ent x_i \T A_i}(a, b)  &\simeq  \Pr i(a)  \cond{1 \leq i \leq m}\\
  \itp{\ctx \stoup \kctx \ent \f(V) \T \beta}(c)  &\simeq
    \itp{\f}(\itp{V}(c))
    \cond{\f \T \alpha \to \beta} \\
  \itp{\ctx \stoup \kctx \ent \one \T \unit}(c)  &\simeq
   \ast\\ %GDPP \mathrm{t} \\
  \itp{\ctx \stoup \kctx \ent \pair{V, W} \T A \times B}(c)  &\simeq
    \pair{\itp{V}(c) , \itp{W}(c) } \\
  \itp{\ctx \stoup \kctx \ent \ell(V) \T \sumtype{\ell \in L} A_\ell}(c)  &\simeq
    \In \ell(\itp{V}(c)) \\
  %MP4 Hope I fixed this
  %GDP4 you did
  % \itp{\C}(c) %GDPP \itp{\ctx \stoup \kctx \ent M \T \C}
  \itp{\ctx \stoup \kctx \ent \thunk M \T U \C}(c) &\simeq \itp{M}(c)
\end{align*}
In the above, we have used  \emph{Kleene equality $ e \simeq e'$}
  which holds if either both the expressions $e$ and $e'$ are defined and equal, or both are undefined.
Further,  typing judgements are abbreviated on the right-hand side,
  $\ast$ is the unique element of the one-element set $\mathbbm{1}$
  %$\mathrm{t} \from A \to \mathbbm{1}$ is the map to the one-element set,
  and $\In \ell \from A_\ell \to \sum_{\ell \in L} A_\ell$ is the $\ell$-th injection into a disjoint sum.

Next, computation terms, other than the handling construct (for which see below) are interpreted as follows:
%GDP A1 I realised the first definition was not consistent with the other definitions - see discussion on "principle" below.
%Also I need to mention some typings to bring in, otherwise unexplained, ranges of quantification) This caused a bizarre %large blank gap - I hope you can fix it!
\begin{align*}
    %MP4 Hope I fixed this
    %GDP4 I changed a little more to get the bracketing right.
  %\itp{\ctx \stoup \kctx \ent \matchpair{V}{x, y}{M} \T \C}(a,b) &\simeq
    %\itp{M}(c)({c, \itp{V}(c)})\\%GDPP   \itp{M}(c)(\pair{c, \itp{V}(c)}) \\
  %
  %&\itp{\ctx \stoup \kctx \ent \matchpair{V}{x, y}{M} \T \C}(a, b) \simeq
    %\itp{M}(\pair{a, \itp{V}(a, b)}, b)\\ 
  %
  &\itp{\ctx \stoup \kctx \ent \matchpair{V}{x, y}{M} \T \C}(a, b) \\
  &\quad\simeq
    \begin{cases}  \itp{M}(\pair{a, \Pr 1( \itp{V}(a, b)), \Pr 2 (\itp{V}(a, b))}, b) & (\text{if }\forall x\in \itp{A},y \in \itp{B}.\, \itp{M}(\pair{a,x,y},b)\downarrow )\\
                            \text{undefined}              & (\text{otherwise})
    \end{cases}\\ 
  &\quad(\text{where $\ctx \stoup \kctx \ent V \T A \times B$ and $\ctx, x \T A, y \T B \stoup \kctx \ent M \T \C$}) \\[1ex]
      %&\itp{\ctx \stoup \kctx \ent \matchpair{V}{x, y}{M} \T \C}(a, b) \simeq
    %\itp{M}(\pair{a, \Pr 1( \itp{V}(a, b)), \Pr 2 (\itp{V}(a, b)}), b)\\ 
        %
  &\itp{\ctx \stoup \kctx \ent \matchcase{V}{\ell(x_\ell) \maps M_\ell}{\ell \in L} \T \C}(c) \\
  &\quad\simeq
    \begin{cases}
      \itp{M_{\ell}(\pair{a, x},b)} & (\text{if $\itp{V}(a,b) \simeq \ell(x)$ and }
                                      \forall \ell' \in L, x_{\ell'} \in \itp{A_{\ell'}}.\, \itp{M_{\ell'}}(\pair{a, x_{\ell'}},b)\downarrow) \\
      \text{undefined}              & (\text{otherwise})
    \end{cases} \\
   &\quad(\text{where $\ctx \stoup \kctx \ent V \T \sumtype{\ell \in L} A_\ell$ and }
          \ctx, x_\ell \T A_\ell \stoup \kctx \ent M_\ell \T \C\text{, for } \ell \in L) \\[1ex]
              %\end{align*}
    %(where  $\ctx, x_\ell \T A_\ell \stoup \kctx \ent M_\ell \T \C $, for $\ell \in L$).
     %
%    \begin{align*}
   % [\itp{M_\ell}]_{\ell \in L}
    %\circ \Ds
    %\circ \pair{\Id[\itp{\ctx} \times \itp{\kctx}], \itp{V}} {\tt To Do}\\
    %
  %MP4 Hope I fixed this
  %GDP4 you did
  % \itp{\ctx \stoup \kctx \ent \force V \T \C}(c) \simeq  \itp{\C}(c)\\ %GDPP\itp{\ctx \stoup \kctx \ent V \T U \C} \\
  &\itp{\ctx \stoup \kctx \ent \force V \T \C}(c) \simeq \itp{V}(c) \\[1ex] %GDPP\itp{\ctx \stoup \kctx \ent V \T U \C} \\
  &\itp{\ctx \stoup \kctx \ent \ret V \T F A}(c) \simeq
      \eta_A(\itp{V}(c)) \\[1ex]
  &\itp{\ctx \stoup \kctx \ent \bind{M}{x \T A} N \T \C}(a,b) \\
  &\quad\simeq
    \begin{cases}
      U(\dagsub{\itp{N}(\pair{a,-},b)}{\itp{\C}})(\itp{M}(a,b)) & (\text{if }\forall x \in \itp{A}.\, \itp{N}(\pair{a,x},b)\downarrow) \\
      \text{undefined}              & (\text{otherwise})
    \end{cases} \\[1ex]
%{\itp{N}^{\dagger_{\itp{\C}}} }(c,\itp{M}(c))
%GDPP      \circ \pair{\Id[\itp{\ctx} \times \itp{\kctx}],\itp{M}} \\
  %
  &\itp{\ctx \stoup \kctx \ent \pair{M_\ell}_{\ell \in L} \T \prodtype{\ell \in L} \C_\ell}(c) \simeq
    \pair{\itp{M_\ell}(c)}_{\ell \in L} \\[1ex]
  &\itp{\ctx \stoup \kctx \ent \prj{\ell} M \T \C_\ell}(c) \simeq
    {\Pr \ell}( \mathord{\itp{M}}(c)) \\[1ex]
  &\itp{\ctx \stoup \kctx \ent \lam{x \T A} M \T A \to \C}(a,b) \\[1ex]
  &\quad\simeq
    \begin{cases}
      \itp{M}(\pair{a, -},b) & (\text{if }\forall x \in \itp{A}. \, \itp{M}(\pair{a, x},b)\downarrow ) \\
      \text{undefined}              & (\text{otherwise})
    \end{cases} \\[1ex]
    %\Tr{\itp{\ctx, x \T A \stoup \kctx \ent M \T \C}} \\
    %
  &\itp{\ctx \stoup \kctx \ent M \, V \T \C}(c) \simeq
     \itp{M}(c)(\itp{V}(c)) \\[1ex]
%GDPP      \Ev \circ \pair{
 %     \itp{M},
 %     \itp{V}
  %  } \\
    %
  &\itp{\ctx \stoup \kctx \ent \op{V}(\abs{x \T \beta} M) \T \C}(a,b) \\
  &\quad\simeq
    \begin{cases}
      \op*_{\itp{\C}}(\itp{V}(a,b),
      \itp{M}(\pair{a,-},b)) & (\text{if }\forall x \in \itp{\beta}.\, \itp{M}(\pair{a, x},b)\downarrow ) \\
      \text{undefined}              & (\text{otherwise})
    \end{cases} \\[1ex]
  %  \op*_{\itp{\C}} \circ \pair{
  %    \itp{V},
  %    \Tr{\itp{\ctx, x \T \beta \stoup \kctx \ent M \T \C}}
  %  } \\
    %
  &\itp{\ctx \stoup \kctx \ent k_j(V) \T \C_i}(a, b) \simeq  
  \Pr j(b)(\itp{V}(a, b))  
%GDPP  \Ev \circ \pair{\Pr j \circ \Pr 2, \itp{V}}
 \cond{1 \leq j \leq n}
\end{align*}
In the above, we have used \emph{existence}  assertions $e\downarrow$ which hold if, and only if, the expression~$e$ is defined. Typing judgements are again abbreviated on the right-hand side and
%, \emph{parameterised lifting}
%    $f^{\dagger_{\model}} \from A \times U F B \to U \model$
%of a map $f \from A \times B \to U \model$ is given by
%\[
 % {f}^{\dagger_{\model}} \defeq U \bar{f} \circ \St{A, B} \from A \times U F B \to U \model
%\]
%where $\bar{f}$ is the homomorphism induced by $f$,
%and $\St{A,B} \T A \times U F B \rightarrow U F (A \times B)$ is the \emph{strength} of the monad $U F$;
we  used the evident tupling and projections associated to labelled products. 

Consider a handler
$
  \ctx \stoup \kctx \ent H \T \handtype{\C}
$
where
\[H =  \{
    \op{x \T \alpha}(k \T \conttype{\beta}{\C}) \maps M_{\op*}
  \}_{\op* \T \optype{\alpha}{\beta}}
\]
The interpretations of the handling terms $M_{\op*}$ in $H$ yield partial maps
\[
  \itp{\ctx, x \T \alpha \stoup \kctx, k \T \conttype{\beta}{\C} \ent M_{\op*} \T \C}
    \from ( \itp{\ctx} \times \itp{\alpha} ) \times ( \itp{\kctx} \times U \itp{\C}^{\itp{\beta}}) \pto U \itp{\C}
\]
and by suitable rearrangement and transposition, one obtains (total) maps
\[
  \op{H} \from \itp{\ctx} \times \itp{\kctx} \to (\itp{\alpha} \times U \itp{\C}^{\itp{\beta}} \pto U \itp{\C})
\]
Hence, for any $c \in \itp{\ctx} \times \itp{\kctx}$, we obtain partial maps %operations 
$\op{H}(c)$ on $U \itp{\C}$.

\begin{defi}
  A handler $\ctx \stoup \kctx \ent H \T \handtype{\C}$ is \emph{correct at $c$} %GDPP
  if the $\op{H}(c)$ are all total  %defined
   %and if, for all $c \in \itp{\ctx} \times \itp{\kctx}$, t
   and the set $U \itp{\C}$ together with the $\op{H}(c)$ forms a model of $\thy$. 
   %, which we write as $\model_H(c)$. 
  %
  The handler is \emph{correct} if it is correct at all $c$.
\end{defi}

%\newcommand{\Trm}[1]{\Map{tr}^{-1}(#1)}
%MP A1 - Do we need this definition at all?
%GDP A4 what definition? We definitely discuss later whether handlers are correct a lot. We don't need it for the %semantics per se.
We can then give the interpretation of handlers:
\[
  \itp{\ctx \stoup \kctx \ent H \T \handtype{\C}}(c) \simeq
    \begin{cases}
      \langle \op{H}(c) \rangle_{\op*} & \cond{\text{if $\ctx \stoup \kctx \ent H \T \handtype{\C}$ is correct at $c$}} \\
      \text{undefined} & \cond{\text{otherwise}}
    \end{cases}
\]
and the interpretation of the handling construct follows the lines discussed informally above:
%
%\[\ctx \stoup \kctx \ent \handleto{M}{H}{x \T A} N \T \C\]
%
%is a bit complicated because of the possibility of free variables in $H$.
%Suppose that the interpretations of $M$, $H$, and $\C$ are all defined (necessary for that of the handling construct to %be defined).
%Then for every
  %$c \in \itp{\ctx} \times \itp{\kctx}$
%define $\model_c$ to be the model of $\thy$ with carrier $U \itp{\C}$ and operations $\Pr {\op*}(\itp{H}(c))$.
%Taking the product of all these models we obtain a model $\model_H$ of $\thy$ with carrier
 % $U \itp{\C}^{\itp{\ctx} \times \itp{\kctx}}$.
%Next, transposing $\itp{\ctx, x \T A \stoup \kctx \ent N \T \C}$ (modulo a rearrangement of factors)
  %we obtain a map  $\itp{A} \to U \itp{\C}^{\itp{\ctx} \times \itp{\kctx}}$.
%The adjoint homomorphism of this map has type $F \itp{A} \lol \itp{\C}^{\itp{\ctx} \times \itp{\kctx}}$ and this,
  %in turn, yields a map
 %$\itp{\ctx} \times \itp{\kctx} \times U F \itp{A} \to U \itp{\C}$.
%Combining this last map with the interpretation of $M$ we finally obtain the desired map $\itp{\ctx} \times \itp{\kctx} \to U 
%\itp{\C}$.
%
%Writing this out we obtain the interpretation of the handling construct:
%
\begin{multline*}
  \itp{\ctx \stoup \kctx \ent \handleto{M}{H}{x \T A} N \T \C}(a,b) \\
  \simeq \left \{  \begin{array}{ll}
U(\dagsub{\itp{N}(\pair{a,-},b)}{%\model_H(a,b)
\pair{ U \itp{\C},\itp{H}(a,b)}
})(\itp{M}(a,b)) & (\forall x \in \itp{A}.\, \itp{N}(\pair{a,x},b)\downarrow\\
                                                                                                & \mbox{and } H \mbox{ is correct at } \pair{a,b}) \\
\mbox{undefined} & (\mbox{otherwise})
\end{array}
\right .
\end{multline*}
%
%\begin{multline*}
%  \itp{\ctx \stoup \kctx \ent \handleto{M}{H}{x \T A} N \T \C} \simeq \\
%  \Trm{\Tr{\itp{\ctx, x \T A \stoup \kctx \ent N \T \C}}^{\dagger_{\model_H}} }
 % \circ
 % \pair{
  %  \Id[\itp{\ctx} \times \itp{\kctx}],
   % \itp{\ctx \stoup \kctx \ent M \T FA}
 % }
%\end{multline*}

%GDP A1 I thought it worth giving the following explanation (and it was worthwhile doing so since I only then noticed that %the principle had not previously been followed for the first matching construct1
The existence conditions adopted above for the existence of term denotations are quite strict. For example, in the case of the first matching construct one might instead have written:
\[\itp{\ctx \stoup \kctx \ent \matchpair{V}{x, y}{M} \T \C}(a, b) \simeq
    \itp{M}(\pair{a, \Pr 1( \itp{V}(a, b)), \Pr 2 (\itp{V}(a, b))}, b)\]
and there are similar alternatives for terms of any of the forms $\matchcase{V}{\ell(x_\ell) \maps M_\ell}{\ell \in L} $ or  $ \bind{M}{x \T A} N$ or  $\handleto{M}{H}{x \T A} N$. Our choices followed the principle that the denotation of a term should exist if, and only, that of all its subterms exist (taking due account of variable binding occurrences).

The correctness of handlers is clearly of central concern.
If the effect theory is empty, then any handler is correct,
  but, in general --- and unsurprisingly --- correctness is undecidable.
Indeed, as will be discussed in Section~\ref{sec:correctness_of_handlers},
  even the decidability of the correctness of very simple handlers is a $\Pi_2$-complete problem.

\begin{rem}
\label{rem:correctness}
  All the handlers, given in Examples~\ref{exa:handlers} and Section~\ref{sec:examples} are correct.
  In particular, the exception handler and the stream redirection handlers are trivially correct as the corresponding effect theories are empty. Also, the CCS relabelling and restriction handlers are correct for both the weak and the strong bisimulation theories.
  %GDP added clarifying remark above re CCS handlers. FYI weak bisimulation gives problems for the CCS parallel %construct.

  Both the rollback handlers are also correct in the presence of destructive exceptions, given in Examples~\ref{exa:effect_theories}.
  %MP J1 - one other thought about the rollback handlers. The one where you record the state and reapply it in the case %of exceptions (non-parameter passing one) does not behave good in the
  % presence of recursion. The parameter-passing one works correctly as it keeps the state unmodified in the case of %divergence.
  %GDP J2 To do; maybe make a remark re the continuing correctness of the handlers (and add a forward reference to it %before)
  %MP J2 - As I understand we will not make this remark then.
  %GDP J3 need to check forward reference. I did not see any difference between the two rollback handers as %divergence means nothing more ever happens - it is not some kind of error that can be recovered from.
  %GDP J3 It seems OK not to have a forward reference.
  %MP J3 I agree - more detail in e-mail.
  The standard exception handler, however, is not correct for destructive exceptions,
    as it intercepts an exception and provides a replacement computation instead,
    but does not correct any modifications made to the state.
  In particular, the handler does not respect the equation
  \[
    \setop{\locvar, x}(\raiseop{e}()) = \raiseop{e}()
  \]
\end{rem}

  Another possible approach that may well be worth investigating to giving our language a denotational semantics would be to consider only terms that are guaranteed to receive an interpretation under suitable assumptions.
  One could 
  introduce conditional judgements of the form
    $\ctx \stoup \kctx \stoup \Phi \ent V \T A$
  (and similarly for computations and handlers), where $\Phi$ is a list of universally quantified equations  
%  In particular, $\Phi$ would consist of equations 
guaranteeing that handlers in $V$ respect the equations of the effect theory. % $\thy$. 
One would also have a logic for establishing the assertions (the logic for algebraic effects of~\cite{plotkin08a-logic} should be helpful here). Thus the typing judgement and the logic would be %closely 
  integrated in a way somewhat reminiscent of %for example, 
  Martin-L\"{o}f type theory. % (although for a different reason). 
%GDP A6 remark moved from earlier, where it was in the middle of the definition of the semantics. Also no longer %signalled as a remark since it is, to my mind, not one. Also Phi need not be exactly the conditions asserting that the %handler equations hold. They may be stronger, or even weaker if one has already established part of the equations. In %any case there will be a rule allowing interaction of the logic with these new judgments to accomplish such variations in %Phi's.

%GDP J1 changed "example" to "remark" ---  as it is!
%MP  J1 Indeed!

%GDP J3 I am making a start on the new semantics. Below is the old one. 
\newcommand{\oldsem}[1]{}

\oldsem{\subsection{Interpreting values and computations}

We assume given
  a signature (so that value and computation terms are determined),
  an effect theory $\thy$ over that signature, and
  an interpretation of the effect theory
  (so that the category of models, etc., is determined).

We can then interpret value types $A$ by sets $\itp{A}$ and computation types $\C$ by models $\itp{\C}$ of the given effect theory~$\thy$.
Value types are interpreted in the same way as signature types,
  except for $U \C$, which is interpreted as the carrier of $\itp{\C}$.
The computation type $F A$ is interpreted by the free model on $\itp{A}$.
Products of computation types are interpreted by product models and function types are interpreted by exponent models.
These are defined as follows:
\begin{defi}
  For any family of models $\model_i$, where $i$ ranges over the index
  set~$I$, the \emph{product model}~$\prod_{i \in I} \model_i$ is
   the model with carrier $\prod_{i \in I} |\model_i|$ and
  component-wise defined operations.
  For any model $\model$ and any set $A$, the
  \emph{exponent model}~$\model^A$ is the model with
  carrier $| \model|^A$ and pointwise defined operations.
\end{defi}
Contexts are again interpreted componentwise,
  with $x_1 \T A_1, \dots, x_m \T A_m$ being interpreted by
  $\itp{A_1} \times \dots \times \itp{A_m}$,
  and
  $k_1 \T \conttype{\beta_1}{\C_1}, \dots, k_n \T \conttype{\beta_n}{\C_n}$ being interpreted by
  $U \itp{\C_1}^{\itp{\beta_1}} \times \dots \times U \itp{\C_n}^{\itp{\beta_n}}$.

We wish to interpret
  well-typed values $\ctx \stoup \kctx \ent V \T A$ by maps
  $\itp{V} \from \itp{\ctx} \times \itp{\kctx} \to \itp{A}$,
  well-typed computations $\ctx \stoup \kctx \ent M \T \C$ by maps
  $\itp{M} \from \itp{\ctx} \times \itp{\kctx} \to U \itp{\C}$,
  and well-typed handlers $\ctx \stoup \kctx \ent H \T \handtype{\C}$ by maps
  \[
    \itp{H}\from \itp{\ctx} \times \itp{\kctx}
    \to
    \prod_{\op* \T \optype{\alpha}{\beta}} \itp{\alpha}
    \times
    U \itp{\C}^{\itp{\beta}} \rightarrow U \itp{\C}
  \]
  such that, for every $c \in  \itp{\ctx} \times \itp{\kctx} $, the set $U \itp{\C}$,
  together with the operations $\Pr {\op*}(\itp{H}(c))$, forms a model of  $\thy$.

However, as discussed informally above, not all handlers  receive an interpretation.
As handler terms occur in computation terms and computation terms occur in value terms,
  computation terms and value terms may also not receive an interpretation.
Consequently we only have that,
  if the interpretation $\itp{V}$ of a well-typed value $\ctx \stoup \kctx \ent V \T A$ is defined,
  then it is a map $\itp{\ctx} \times \itp{\kctx} \rightarrow \itp{A}$,
  and similarly for computation terms and handler terms.
When giving interpretations it is then convenient to use \emph{Kleene equality $\simeq$}
  where two sides are equal if either they are both defined and equal or they are both undefined.

%REV The set-theoretic interpretation in Sec 4.2 and 4.3 is parameterized w.r.t. an
%REV interpretation of the signature and the effect theory. What is not standard is
%REV the interpretation of terms (and handlers), which uses Kleene's equality. The
%REV key clause is the interpretation of handlers (after Def 4.6). There is an
%REV alternative approach, where terms (and handlers) are interpreted by partial
%REV maps, namely [[Gamma|K |- H:C handler]](c)=op_H(c) if op_H(c) is defined and
%REV form a model of T, undefined otherwise.

%MP  So, this differs from our definition in that operations are defined, but as maps that are nowhere defined. That seems a bit cleaner than our solution.

%GDP Let's see, as per above.

%REV In fact, a more appropriate
%REV interpretation, i.e. more compliant with "standard" categorical logic, should
%REV consider judgements of the form "Gamma|K|Phi |- ...", where Phi are assertions
%REV (equations should suffice), and "Gamma|K|Phi |- H:C handler" is derivable,
%REV provided one can "derive" from Phi that op_H satisfy the equations of the effect
%REV theory T.

%MP  As I said above, I worry that having conditional judgments is an overkill.
%MP  I do not mind having conditional equations.
%MP  Is there some reason why we couldn't have only the latter but not the former?

%GDP I would not go the referee's route at all, as per above. 

%MP  I put a remark that discusses our reasons for Kleene equality.

% DONE

%GDP J1 to consider again. To do.
%MP  J1 I assume this is something you want to think over yourself first, right?

\begin{rem}
  \label{rem:conditional}
There are other approaches to giving our language a denotational semantics. It is possible to modify the  definition of the interpretation, so that, e.g.,  the interpretation $\itp{V}$ of a well-typed value $\ctx \stoup \kctx \ent V \T A$  can be defined for some, if not all,  values of $\itp{\ctx} \times \itp{\kctx} $; that is, $\itp{V}$ is interpreted as a partial function $\itp{\ctx} \times \itp{\kctx} \rightharpoonup  \itp{A}$. However it is not clear to us that this approach would generalise well to a suitable class of categories as the sources of partiality, a failure to obey equations, would have to fit with a notion of partial morphism.

  Another possible approach would be to consider only terms that are guaranteed to receive an interpretation.
  In this case, one could, for example, introduce conditional judgements of the form
    $\ctx \stoup \kctx \stoup \Phi \ent V \T A$
  (and similarly for computations and handlers), where $\Phi$ is a list of equations. 
  In particular, $\Phi$ would consist of equations stating that handlers in $V$ respect the equations of the effect theory $\thy$. One would also have a logic for establishing the assertions (the logic for algebraic effects should he helpful here). Thus the typing judgement and the logic would be closely integrated as in, for example, Martin-L\"{o}f type theory (although for a different reason). 

But since this results in a more involved type system,  and since the aim of this paper is to introduce the concept of handlers, we opted for the simpler approach employing Kleene equality.
\end{rem}

Fix contexts
  $\ctx = x_1 \T \alpha_1, \dots, x_m \T \alpha_m$ and
  $\kctx = k_1 \T \conttype{\beta_1}{\C_1}, \dots, k_n \T \conttype{\beta_n}{\C_n}$.
Value terms are  interpreted as follows:
\begin{align*}
  \itp{\ctx \stoup \kctx \ent x_i \T A_i} &\simeq  \Pr i \circ \Pr 1 \cond{1 \leq i \leq m}\\
  \itp{\ctx \stoup \kctx \ent \f(V) \T \alpha} &\simeq
    \itp{\f} \circ \itp{V}
    \cond{\f \T \alpha \to \beta} \\
  \itp{\ctx \stoup \kctx \ent \one \T \unit} &\simeq
    \mathrm{t} \\
  \itp{\ctx \stoup \kctx \ent \pair{V, W} \T A \times B} &\simeq
    \pair{\itp{V}, \itp{W}} \\
  \itp{\ctx \stoup \kctx \ent \ell(V) \T \sumtype{\ell \in L} A_\ell} &\simeq
    \In \ell \circ \itp{V} \\
  \itp{\ctx \stoup \kctx \ent \thunk M \T U \C} &\simeq  \itp{\ctx \stoup \kctx \ent M \T \C}
\end{align*}
In the above,
  typing judgements are abbreviated on the right-hand side,
  $\mathrm{t} \from A \to \mathbbm{1}$ is the map to the one-element set,
  and $\In \ell \from A_\ell \to \sum_{\ell \in L} A_\ell$ is the $\ell$-th injection into a disjoint sum.

Next, computation terms, other the handling construct (for which see below) are interpreted as follows:
\begin{align*}
  \itp{\ctx \stoup \kctx \ent \matchpair{V}{x, y}{M} \T \C} &\simeq
    \itp{M}
    \circ \pair{\Id[\itp{\ctx} \times \itp{\kctx}], \itp{V}} \\
  \itp{\ctx \stoup \kctx \ent \matchcase{V}{\ell(x_\ell) \maps M_\ell}{\ell \in L} \T \C} &\simeq
    [\itp{M_\ell}]_{\ell \in L}
    \circ \Ds
    \circ \pair{\Id[\itp{\ctx} \times \itp{\kctx}], \itp{V}} \\
  \itp{\ctx \stoup \kctx \ent \force V \T \C} &\simeq  \itp{\ctx \stoup \kctx \ent V \T U \C} \\
  \itp{\ctx \stoup \kctx \ent \ret V \T F A} &\simeq
    \eta_A \circ \itp{V} \\
  \itp{\ctx \stoup \kctx \ent \bind{M}{x \T A} N \T \C} &\simeq
    {\itp{N}^{\dagger_{\itp{\C}}} }
      \circ \pair{\Id[\itp{\ctx} \times \itp{\kctx}],\itp{M}} \\
  \itp{\ctx \stoup \kctx \ent \pair{M_\ell}_{\ell \in L} \T \prodtype{\ell \in L} \C_\ell} &\simeq
    \pair{\itp{M_\ell}}_{\ell \in L} \\
  \itp{\ctx \stoup \kctx \ent \prj{\ell} M \T \C_\ell} &\simeq
    {\Pr \ell} \circ \mathord{\itp{M}} \\
  \itp{\ctx \stoup \kctx \ent \lam{x \T A} M \T A \to \C} &\simeq
    \Tr{\itp{\ctx, x \T A \stoup \kctx \ent M \T \C}} \\
  \itp{\ctx \stoup \kctx \ent M \, V \T \C} &\simeq
    \Ev \circ \pair{
      \itp{M},
      \itp{V}
    } \\
  \itp{\ctx \stoup \kctx \ent \op{V}(\abs{x \T \beta} M) \T \C} &\simeq
    \op*_{\itp{\C}} \circ \pair{
      \itp{V},
      \Tr{\itp{\ctx, x \T \beta \stoup \kctx \ent M \T \C}}
    } \\
  \itp{\ctx \stoup \kctx \ent k_j(V) \T \C_i} &\simeq  \Ev \circ \pair{\Pr j \circ \Pr 2, \itp{V}} \cond{1 \leq j \leq n}
\end{align*}

In the above, typing judgements are again abbreviated on the right-hand side, \emph{parameterised lifting}
    $f^{\dagger_{\model}} \from A \times U F B \to U \model$
of a map $f \from A \times B \to U \model$ is given by
\[
  {f}^{\dagger_{\model}} \defeq U \dag{f}{\model} \circ \St{A, B} \from A \times U F B \to U \model
\]
where $\bar{f}$ is the homomorphism induced by $f$,
and $\St{A,B} \T A \times U F B \rightarrow U F (A \times B)$ is the \emph{strength} of the monad $U F$;
we have also used the evident tupling and projections associated to labelled products.

Consider a handler
$
  \ctx \stoup \kctx \ent H \T \handtype{\C}
$
where
\[H =  \{
    \op{x \T \alpha}(k \T \conttype{\beta}{\C}) \maps M_{\op*}
  \}_{\op* \T \optype{\alpha}{\beta}}
\]
If defined, the interpretations of the handling terms $M_{\op*}$ in $H$ yield maps
\[
  \itp{\ctx, x \T \alpha \stoup \kctx, k \T \conttype{\beta}{\C} \ent M_{\op*} \T \C}
    \from ( \itp{\ctx} \times \itp{\alpha} ) \times ( \itp{\kctx} \times U \itp{\C}^{\itp{\beta}}) \to U \itp{\C}
\]
and by suitable rearrangement and transposition, one obtains maps
\[
  \op{H} \from \itp{\ctx} \times \itp{\kctx} \to (\itp{\alpha} \times U \itp{\C}^{\itp{\beta}} \to U \itp{\C})
\]
Hence, for any $c \in \itp{\ctx} \times \itp{\kctx}$, we obtain operations $\op{H}(c)$ on $U \itp{\C}$.

\begin{defi}
  A handler $\ctx \stoup \kctx \ent H \T \handtype{\C}$ is \emph{correct} if the $\op{H}$ are all defined and if, for all $c \in \itp{\ctx} \times \itp{\kctx}$, the set $U \itp{\C}$, together with the $\op{H}(c)$, is a model of $\thy$.
\end{defi}

We can then give the interpretation of handlers:
\[
  \itp{\ctx \stoup \kctx \ent H \T \handtype{\C}} \simeq
    \begin{cases}
      c \mapsto \langle \op{H}(c) \rangle_{\op*} & \cond{\text{if $\ctx \stoup \kctx \ent H \T \handtype{\C}$ is correct}} \\
      \text{undefined} & \cond{\text{otherwise}}
    \end{cases}
\]

\newcommand{\Trm}[1]{\Map{tr}^{-1}(#1)}
The interpretation of the handling construct
\[\ctx \stoup \kctx \ent \handleto{M}{H}{x \T A} N \T \C\]
is a bit complicated because of the possibility of free variables in $H$.
Suppose that the interpretations of $M$, $H$, and $\C$ are all defined (necessary for that of the handling construct to be defined).
Then for every
  $c \in \itp{\ctx} \times \itp{\kctx}$
define $\model_c$ to be the model of $\thy$ with carrier $U \itp{\C}$ and operations $\Pr {\op*}(\itp{H}(c))$.
Taking the product of all these models we obtain a model $\model_H$ of $\thy$ with carrier
  $U \itp{\C}^{\itp{\ctx} \times \itp{\kctx}}$.
Next, transposing $\itp{\ctx, x \T A \stoup \kctx \ent N \T \C}$ (modulo a rearrangement of factors)
  we obtain a map  $\itp{A} \to U \itp{\C}^{\itp{\ctx} \times \itp{\kctx}}$.
The adjoint homomorphism of this map has type $F \itp{A} \lol \itp{\C}^{\itp{\ctx} \times \itp{\kctx}}$ and this,
  in turn, yields a map
 $\itp{\ctx} \times \itp{\kctx} \times U F \itp{A} \to U \itp{\C}$.
Combining this last map with the interpretation of $M$ we finally obtain the desired map $\itp{\ctx} \times \itp{\kctx} \to U \itp{\C}$.

Writing this out we obtain the interpretation of the handling construct:
\begin{multline*}
  \itp{\ctx \stoup \kctx \ent \handleto{M}{H}{x \T A} N \T \C} \simeq \\
  \Trm{\Tr{\itp{\ctx, x \T A \stoup \kctx \ent N \T \C}}^{\dagger_{\model_H}} }
  \circ
  \pair{
    \Id[\itp{\ctx} \times \itp{\kctx}],
    \itp{\ctx \stoup \kctx \ent M \T FA}
  }
\end{multline*}

The correctness of handlers is clearly of central concern.
If the effect theory is empty, then any handler is correct,
  but, in general --- and unsurprisingly --- correctness is undecidable.
Indeed, as will be discussed in Section~\ref{sec:correctness_of_handlers},
  even the decidability of the correctness of very simple handlers is a $\Pi_2$-complete problem.

\begin{rem}
\label{rem:correctness}
  All the handlers, given in Examples~\ref{exa:handlers} and Section~\ref{sec:examples} are correct.
  In particular, the exception handler and the stream redirection handlers are trivially correct as the corresponding effect theories are empty. Also, the CCS relabelling and restriction handlers are correct for both the weak and the strong bisimulation theories.
  %GDP added clarifying remark above re CCS handlers. FYI weak bisimulation gives problems for the CCS parallel %construct.

  Both the rollback handlers are also correct in the presence of destructive exceptions, given in Examples~\ref{exa:effect_theories}.
  %MP J1 - one other thought about the rollback handlers. The one where you record the state and reapply it in the case %of exceptions (non-parameter passing one) does not behave good in the
  % presence of recursion. The parameter-passing one works correctly as it keeps the state unmodified in the case of %divergence.
  %GDP J2 To do; maybe make a remark re the continuing correctness of the handlers (and add a forward reference to it %before)
  %MP J2 - As I understand we will not make this remark then.
  %GDP J2 need to check forward reference, and plan to add remark in recursion section. Will do once UNIX stuff %stabilises. I did not see any difference between the two rollback handers as divergence means nothing more ever %happens - it is not some kind of error that can be recovered from.
  The standard exception handler, however, is not correct for destructive exceptions,
    as it intercepts an exception and provides a replacement computation instead,
    but does not correct any modifications made to the state.
  In particular, the handler does not respect the equation
  \[
    \setop{\locvar, x}(\raiseop{e}()) = \raiseop{e}()
  \]
\end{rem}

%GDP J1 changed "example" to "remark" ---  as it is!
%MP  J1 Indeed!

}

\section{Reasoning about handlers}
\label{sec:reasoning_about_handlers}
\newcommand{\omitcontext}[1]{ }
\newcommand{\defined}{\!\downarrow}
%REV Sec 5 is affected by the choice made in Sec 4 on how to interpret terms. The
%REV alternative is to consider equational judgements "Gamma|K|Phi |- M=N:C" (and
%REV similar judgements for values and handlers), thus avoiding the use of "weak
%REV equality" and "definedness". What is missing is a completeness result w.r.t.
%REV categorical models. This is a standard sanity check, but it requires a more
%REV precise formalization of the (equational) reasoning and a categorical semantics
%REV that is not confined to Set, so that one can define syntactic models.

%MP  Also said before: I think that the main point of the paper is to introduce handlers. Do you think we can push this into a paper on the logic?

%GDP As per above. 

%MP  I just reworded the sentence to emphasise that we won't even try to formalise the logic.

% DONE

%GDP J1 I omitted the points about soundness and completeness: the first is a standard obligation and there is surely no %way we will get completeness results.
%MP  J1 I agree.

The semantics we have just given allows us to reason about handlers.
We are interested in two questions in particular:
  which computations are equal,
  and which handlers are correct?
We content ourselves in this section with  some initial observations, and do not attempt to provide  a 
full-fledged %formal
logic for algebraic effects and handlers.
%GDP J1 rewritten a little
%leaving their formalisation
% and corresponding standard results such as soundness or completeness 
%for future work.

%\subsection{Equations}
%\label{sub:equations}
\newcommand{\myeq}{\bumpeq}
We fix a signature, and an effect theory $\thy$ and its interpretation, and consider as well-formed formulas  $\ctx \stoup \kctx \ent \varphi$  those that can be built up from atomic formulas by the usual boolean connectives and universal and existential quantification  over value types, e.g., $\forall x\T A.\, \varphi$, and continuations, e.g., $\forall  k \T \conttype{\alpha}{\C}.\, \varphi $. As atomic formulas we take existence (or definedness) and Kleene equality assertions for value terms, $V\!\defined$ and $V \simeq W$ (with evident well-formedness conditions), and for computation and handler terms.  In the case of a handler $H$ the existence assertion $H\defined$ can equivalently be understood as that of  the correctness of  $H$.  There is a convenient ``Kleene inequation" $V \lesssim W$, which abbreviates $V\!\defined \;\Rightarrow\; V \simeq W$ (and similarly for computation and handler terms).

There is an evident inductively defined satisfaction relation $c \models_{\ctx \stoup \kctx } \varphi$ for $c \in \itp{\ctx} \times \itp{\kctx}$ and $ \varphi$  well-formed relative to $\ctx$ and $\kctx$. In particular $c \models_{\ctx \stoup \kctx } V\defined$ holds if, and only if, $\itp{V}(c)\downarrow$ and $c \models_{\ctx \stoup \kctx } V \simeq W$ holds if, and only if,  $\itp{V}(c) \simeq \itp{W}(c)$, and similarly for computation and handler terms. We say that $\models_{\ctx \stoup \kctx } \varphi$ holds if, and only if $c \models_{\ctx \stoup \kctx } \varphi$ holds for all $c \in \itp{\ctx} \times \itp{\kctx}$; below we generally just say that $\varphi$ holds, and the intended contexts $\ctx$ and $\kctx$ will be clear. %, sometimes specifying the contexts. 

%As interpretations of value terms, computation terms, or handlers may not be defined, we need  
%somewhat non-standard  equational judgements,   as well as   definedness ones.
%Given two computations $\ctx \stoup \kctx \ent M \T \C$ and $\ctx \stoup \kctx \ent N \T \C$,
  %the \emph{weak equality} judgement
%
%\[\ctx \stoup \kctx \ent M  \myeq N \T \C\]
%
%holds if $\itp{\ctx \stoup \kctx \ent M \T \C}$ and $\itp{\ctx \stoup \kctx \ent N \T \C}$ are equal in case they are both %defined.
%The \emph{definedness} judgement
%
%\[\ctx \stoup \kctx \ent M\defined\]
%
%holds if $\itp{\ctx \stoup \kctx \ent M \T \C}$  is defined.

%The judgements
%%
%\begin{mathpar}
%  \ctx \stoup \kctx \ent V \myeq W \T A \\
%  \ctx \stoup \kctx \ent V \defined \\
%  \ctx \stoup \kctx \ent H \defined
%\end{mathpar}
%%
%are understood analogously; the final one can equivalently be understood as asserting handler correctness.

%The weak equality and definedness judgements interact. For example we have a weakened form of transitivity:
%%
%\[
%  \deduct{
%    M_1 \myeq M_2 \quad
%    M_2 \defined \quad
%    M_2  \myeq M_3
%  }{
%    M_1 \myeq M_3
%  }
%\]
%In stating this, we  have ignored evident contexts and types. Where possible we do the same below.
%GDP J2 I wanted to lighten the presentation up if I could, hence this remark and the removal of Gamma | K \vdash's %below.
%MP J2 - much better.

%Our calculus extends Levy's call-by-push-value, hence 
We first consider equations. 
Versions of Levy's call-by-push-value equations~\cite{levy06call-by-push-value}   hold.
%MP6 Why \lessim - the LHS is defined exactly when the RHS is.
%GDP7 It is not: see the existence formula for return on p27.
% For example we have a Kleene $\beta$-inequation for the sequencing construct:
For example we have a Kleene $\beta$-inequality for the sequencing construct:
\[
  \omitcontext{\ctx \stoup \kctx \ent} \bind{\ret x}{x \T A} M \lesssim M 
  %\omitcontext{\ctx \stoup \kctx \ent} \bind{\ret x}{x \T A} M \simeq M 
\]
%\[
%  \omitcontext{\ctx \stoup \kctx \ent} \bind{\ret V}{x \T A} M \myeq M[V / x] \T \C
%\]
and a Kleene $\eta$-equality for functions:
\[
  \omitcontext{\ctx \stoup \kctx \ent} \lam{x \T A} M x \simeq M
\]
Next, we have equations that describe the component-wise and pointwise behaviour of operations on product and function types, respectively:
\begin{align*}
  \omitcontext{\ctx \stoup \kctx \ent} \op{V}(\abs{x \T \alpha} \pair{M_\ell}_{\ell \in L}) &\simeq \pair{\op{V}(\abs{x \T \alpha} M_\ell)}_{\ell \in L} \T \prodtype{\ell \in L} \C_\ell \\
  \omitcontext{\ctx \stoup \kctx \ent} \op{V}(\abs{x \T \alpha} \lam{y \T A} M) &\simeq \lam{y \T A} \op{V}(\abs{x \T \alpha} M) \T A \to \C
\end{align*}
We also have an equation giving the commutativity between operations and sequencing:
%(also called the \emph{algebraicity}):
\[
  \omitcontext{\ctx \stoup \kctx \ent} \bind{\op{V}(\abs{x \T \alpha} M)}{y \T A} N \simeq \op{V}(\abs{x \T \alpha} \bind{M}{y \T A} N) \T \C
\]
This equation holds as sequencing is defined using the homomorphism induced by the universality of the free model,
  thus it maps an operation on $F A$ to an operation on $\C$.
The equations given by Levy~\cite{levy06call-by-push-value} for two specific operations: printing and divergence (divergence is considered in Section~\ref{sec:recursion}), are  instances of the above three equations.
%GDP J1 which equations do you mean? There is no reference given. Why is it worth saying the last sentence anyway?
%MP J1 I repeated the reference. I think it is worth saying the sentence because the axioms Paul gives for printing and %divergence (Figure 11) are exactly instances of our axioms.

Further equations  are inherited from 
%We further have a rule for the inheritance of equations 
from the effect theory $\thy$.
%First we can obtain computations from  templates  by substituting values for value variables and abstracted 
%computations for template variables:
%
%\begin{defi} \label{defi:instantiation}
Given a  a template
  \[
    x_1 \T \alpha_1, \dots, x_m \T \alpha_m \stoup z_1 \T \beta_1, \dots, z_n \T \beta_n \ent T
  \]
  and %MP6 $n$
 %GDP7 I wanted the n to get them distinct....
 distinct  continuation variables $k_1, \dots, k_n$ we write $T[k_1/z_1,\ldots, k_n/z_n]$ for the computation term obtained by replacing each occurrence of a $z_i$ in $T$ by one of the corresponding $k_i$. If $\ctx$ contains the type assignments $x_1 \T \alpha_1, \dots, x_m \T \alpha_m$ and $\kctx$ contains  the type assignments $k_1 \T \conttype{\beta_1}{\C} , \dots, k_n \T \conttype{\beta_n}{\C}$ then $\ctx \stoup \kctx \ent T[k_1/z_1,\ldots, k_n/z_n] \T \C$ holds.
%
%   The \emph{substitution} of a template
%  \[
%    x_1 \T \alpha_1, \dots, x_m \T \alpha_m \stoup z_1 \T \beta_1, \dots, z_n \T \beta_n \ent T
%  \]
%  by values $\ctx \stoup \kctx \ent V_i \T \alpha_i$  ($1 \leq i \leq m$)
%  and computations $\ctx, y_j \T \beta_j \stoup \kctx \ent M_j \T \C$  ($1 \leq j \leq n$)
%  is the computation written
%  \[
%    \ctx \stoup \kctx \ent T[(V_i / x_i)_i, (\abs{y_j\T \beta_j} M_j / z_j)_j] \T \C
%  \]
 % and obtained by a standard capture-avoiding substitution
   % of values $V_i$ for the value variables $x_i$ and
    %of abstracted computations $\abs{y_j\T \beta_j} M_j$ for the template variables $z_j$.
  %In particular, each occurrence $z_j(V)$ of a template variable is replaced by the computation $M_j[V / y_j]$.
  %Substitution behaves structurally on the three other template constructors.
%\end{defi}
%
We then have that  \[T_1[k_1/z_1,\ldots, k_n/z_n] \ \simeq T_2[k_1/z_1,\ldots, k_n/z_n] \]
holds for every equation 
\[x_1 \T \alpha_1, \dots, x_m \T \alpha_m \stoup z_1 \T \beta_1, \dots, z_n \T \beta_n \ent T_1 = T_2 \]
in $\thy$ and for every type assignment $k_1 \T \conttype{\beta_1}{\C} , \dots, k_n \T \conttype{\beta_n}{\C}$.
%

%The inheritance rule is then:
%\[
%  \deduct{
%    x_1 \T \alpha_1, \dots, x_m \T \alpha_m \stoup z_1 \T \beta_1, \dots, z_n \T \beta_n \ent T_1 = T_2 \in \thy
%  }{
%    \ctx \stoup \kctx \ent T_1[(V_i / x_i)_i, (\abs{y_j} M_j / z_j)_j] \myeq T_2[(V_i / x_i)_i, (\abs{y_j} M_j / z_j)_j] \T \C
 % }
%\]
%%
%where  $\ctx \stoup \kctx \ent V_i \T \alpha_i$  ($1 \leq i \leq m$) and $\ctx, y_j \T \beta_j \stoup \kctx \ent M_j \T \C$  
%($1 \leq j \leq n$).

We next give two Kleene inequations for the handling construct that state the universal properties of the induced homomorphism:
 %MP6 Simplified this statement.
 % that on values, it extends the inducing map and that it acts homomorphically on operations.
 it extends the inducing map on values, and it acts homomorphically on operations.
For any handler
  $H = \{ \op{y}(k) \maps M_{\op*} \}_{\op* \T \optype{\alpha}{\beta}}$,
the inequations are:
\begin{align*}
  \omitcontext{\ctx \stoup \kctx \ent} \handleto{\ret x}{H}{x \T A} N &\;\lesssim\; N \\%MP6 \T \\
  %MP6 Removed the break as the inequality fits the line just fine
  \omitcontext{\ctx \stoup \kctx \ent} \handleto{\op{y}(\abs{x' \T \beta} M)}{H}{x \T A} N &\;\lesssim\;
  M_{\op*}[\abs{x' \T \beta}  \handleto{M}{H}{x \T A} N/k] %MP6\T 
\end{align*}
%
%\begin{align*}
%  \omitcontext{\ctx \stoup \kctx \ent} \handleto{\ret V}{H}{x \T A} N &\myeq N[V / x] \T \C \\
%  \omitcontext{\ctx \stoup \kctx \ent} \handleto{\op{V}(\abs{x' \T \beta} M)}{H}{x \T A} N &\myeq\\
%  &  \hspace{-147pt}M_{\op*}[V/y, \abs{x' \T \beta}  \handleto{M}{H}{x \T A} N/k]\T \C
%\end{align*}
where substitutions of the form $\abs{x' \T \beta} M' / k$ replace each occurrence of a computation $k(V)$ by $M'[V / x']$ ---
recall that continuation variables always appear only in the form~$k(V)$.

As mentioned in Section~\ref{terms}, we have:
\[
  \omitcontext{\ctx \stoup \kctx \ent} \bind{M}{x \T A} N \simeq \handleto{M}{\{ \}}{x \T A} N
\]
stating that sequencing is equivalent to the special case of handling in which all operations are handled by themselves.
% MP6 changed -inequality to -equivalence and formulas to equivalences.
%GDP7 changed them back - see my response above.
%GDP7 PS I prefer "equality" to "equivalence" as the last means "if and only if", which is a logical connective.
One can then observe that (given the existence assertions below) the two handler inequations generalise the formulas given above for the sequencing construct, viz. the $\beta$-inequality and the commutativity of operations and sequencing.

There are other equations for exception handlers,
  given by Benton and Kennedy~\cite{benton01exceptional}, and Levy~\cite{levy06monads},
  and one would wish to generalise these to our general handlers.
It turns out that these equations fail for general handlers,
  but do hold for particular classes of handlers, see:~\cite{pretnar10the-logic}.
We do not consider this issue further here.

We next consider existence assertions. They all spell out the conditions for term interpretations to exist. For example the following hold:
%
%MP6 Aligned these three and added a missing \defined in the last case.
\begin{align*}
% MP A1 - I assume we would have an existence assertion for each term - this was a combination of two terms:
% \bind{\ret V}{x \T A} M\defined \,\;&\Leftrightarrow\; \, V\defined \,\wedge\, \forall x \T A.\, M \defined \\
\ret V \defined \,\;&\Leftrightarrow\; \, V\defined \\
\bind{M}{x \T A} N\defined \,\;&\Leftrightarrow\; \, M\defined \,\wedge\, \forall x \T A.\, N \defined \\
\lam{x \T A} M \defined \,\;&\Leftrightarrow\; \, \forall x \T A.\, M \defined \\
\op{V}(\abs{x \T \alpha}  M)\defined \,\;&\Leftrightarrow\; \, V\defined \,\wedge\, \forall x \T \alpha.\, M \defined
\end{align*}

Turning to handler existence, or correctness, we first need to be able to replace operations in templates by their corresponding definitions in  handlers.
So, let us consider a handler $\ctx \stoup \kctx \ent H \T \handtype{\C}$,  where $H$ is
    $\{ \op{x \T \alpha}(k \T \conttype{\beta}{\C}) \maps M_{\op*} \}_{\op* \T \optype{\alpha}{\beta}}$,
and a template variable context $\tctx$.
Let $\kctx'$  be the continuation context with a continuation variable $k_z \T \conttype{\alpha}{\C}$ for each template variable $z \T \alpha \in \tctx$.
Then, for every template term $\ctx' \stoup \tctx \ent T$
  we recursively define a computation term  $\ctx, \ctx' \stoup \kctx, \kctx' \ent T^H \T \C$ by:
\begin{align*}
  z(V)^H &= k_z(V) \\
  (\matchpair{V}{x_1, x_2}{T})^H &= \matchpair{V}{x_1, x_2}{T^H}\\
  (\matchcase{V}{\ell(x_\ell) \maps T_\ell}{\ell \in L})^H &=
    \matchcase{V}{\ell(x_\ell) \maps T_\ell^H}{\ell \in L} \\
  \op{V}(\abs{y \T \beta} T)^H &=
    M_{\op*}[V / x, \abs{y \T \beta} T^H / k]
\end{align*}
The following then holds:
\[ H \defined \;\;\, \Leftrightarrow \;\;\,  \bigwedge \{  \forall x \T \alpha.\, M_{\op*} \defined \; \mid\; \op* \T \optype{\alpha}{\beta} \}
\,\wedge\,
\bigwedge \{ T_1^H \simeq T_2^H \;\mid\; \ctx' \stoup \tctx \ent T_1 = T_2 \in \thy\}
\]
%
%\[
%  \frac{
%    \begin{gathered}
%      \ctx, x \T \alpha \stoup \kctx, k \T \beta \to \C \ent M_{\op*} \defined \cond{\op* \T \optype{\alpha}{\beta}} \\
%      \ctx, \ctx' \stoup \kctx, \kctx' \ent T_1^H \myeq T_2^H \T \C
%        \cond{\ctx' \stoup \tctx \ent T_1 = T_2 \in \thy}
%    \end{gathered}
%  }{
%    \ctx \stoup \kctx \ent H \defined
%  }
%\]
%
asserting that a handler is correct when its operations are  defined and respect the equations of the effect theory. In particular, $\{\}\defined$ holds, where $\{\}$ is the empty handler, the one defining all operations by themselves.
Regarding the handling construct itself, the following holds:
\[\handleto{M}{H}{x \T A} N \defined \,\;\Leftrightarrow\; \,  M\defined \,\wedge\, H\defined  \,\wedge\,  \forall x \T A.\, N\defined  \]

%GDP A1 There are  similar  ``existence conditions" for all the other term constructs. Applying them recursively, one %obtains, for any value term $A$, a finite set $\Phi$  of universally quantified Kleene equations such that the following %holds:
%
%\[A\defined \,\;\Leftrightarrow\; \,   \bigwedge \Phi\]  
%
%and similarly for computation and handler terms. Replacing Kleene equality by ordinary equality and taking the 
%equations in an appropriate order one obtains the lists  of universally quantified equations that one would naturally %consider in the alternative approach to the denotational semantics of our language mentioned above in Remark~\ref
%{rem:conditional}. 
%GDP A6 Phi need not be so specific, so I deleted this part.

\section{Deciding handler correctness}
\label{sec:correctness_of_handlers}
%GDP J1 I thought to make this a separate section as it has a different character to the rest of the section.
%MP J1 Yes, it makes sense. Does it make sense to keep sub:equations because it is the only subsection of the %reasoning section?
%GDP J2 I agree, and made this change (which involved changing a forward reference)

%GDP J3 changed the title of this section
The decidability of handler correctness is an interesting question and one potentially pertinent for compiler writers.
We now give some results on the decidability of some natural classes of handlers;
their proofs are given in Appendix~\ref{apdx:decidability_of_correctness}.

\begin{defi}
  A handler $\ctx \stoup \kctx \ent H \T \handtype{\C}$ over a given signature
  is \emph{simple} if 
  \begin{itemize}
  \item up to reordering, $\ctx$ has the form
  \[
    x_1 \T \alpha_1, \dots, x_m \T \alpha_m, f_1 \T U (\beta_1 \to \C), \dots, f_n \T U (\beta_n \to \C)
  \]
 \item  up to reordering, $\kctx$ has the form
  \[
    k'_1 \T \conttype{\beta'_1}{\C}, \ldots, k'_p \T \conttype{\beta'_p}{\C}
  \]
  and
  \item for each $\op* \T \optype{\alpha}{\beta}$, there is a template
  \[
   x_1 \T \alpha_1, \dots, x_m \T \alpha_m, x \T \alpha
   \stoup z_1 \T \beta_1, \dots, z_n \T \beta_n,
   z'_1 \T \beta'_1, \dots, z'_p \T \beta'_p,
   z \T \beta \ent T_{\op* }
  \]
  such that  the handling term
  \[
    \ctx, x \T \alpha \stoup \kctx, k \T \conttype{\beta}{\C} \ent M_{\op*} \T \C
  \]
  is obtained by the substitution of $T_{\op*}$
  that replaces each $x_i$ by itself, $x$ by itself,
  each $z_j$ by $(y_j\T \beta_j.\, (\force f_j) y_j)$, 
  each $z'_l$ by  $(y'_l \T \beta'_l.\, k'_l(y'_l))$,   
  and $z$ by $(y \T \beta.\, k(y))$.
  %
  %is obtained by the substitution of $T_{\op*}$
  %that replaces $x$ and each $x_i$ by itself,
  %$z$ by $(y \T \beta.\, k(y))$,
  %each $z_j$ by $(y_j\T \beta_j.\, (\force f_j) y_j)$, and
  %each $z'_l$ by  $(y'_l \T \beta'_l.\, k'_l(y'_l))$.

  \end{itemize} 
\end{defi}

\noindent In essence, simple handlers define handling computations in terms of 
%polymorphic 
%GDP J1 I thought "algebraic" was more accurate, as lambda abstractions are also polymorphic (see remark in text on %parametric handlers) but are excluded.
%MP J1 Agreed.
algebraic constructs only.
The temporary-state handler, the CCS relabelling and restriction handlers, and stream redirection handlers are all simple;
none of the parameter-passing handlers are simple as they all contain lambda abstractions in their handling terms.
The exception handler
\[
  \{
    \raiseop{y \T \exc}(k \T \conttype{\void}{\C}) \maps \matchcase{y}{e(z) \maps N_e}{e \in \exc}
  \}
\]
is also not simple, as computation terms $N_e$ may be arbitrary.
One can instead use the %an equivalent 
simple handler
\[
  f \T U (\exc \to \C) \ent \{
    \raiseop{e \T \exc}(k \T \conttype{\void}{\C}) \maps (\force f) \, e
  \}
\]
Then, letting $f$ be the thunk of
\[
  \lam{y \T \exc} \matchcase{y}{e(z) \maps N_e}{e \in \exc}
\] 
one obtains exactly the same behaviour.

A signature is \emph{simple} if it has no base types or function symbols.
In that case there is a unique interpretation, the trivial one, and we will omit mention of it.
Simple signatures and theories are equivalent to ones in which all operation symbols are $\type{n}$-ary for some $n$.
The signatures given above for exceptions (over a finite set), read-only state, I/O, and nondeterminism are all simple, but none of the others are.

\begin{thm} \label{thm:pi2}
 The problem of deciding, given a simple signature, effect theory, and closed simple handler
 $ \ent H \T \handtype{F \void\,}$ over the given signature,
 whether the handler is correct, is $\Pi_2$-complete.
\end{thm}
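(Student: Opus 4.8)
The plan is to prove the two halves of completeness separately: that handler correctness for this class of inputs lies in $\Pi_2$, and that it is $\Pi_2$-hard. Throughout, note that for a closed simple handler the contexts $\ctx$ and $\kctx$ are empty, and that for a simple signature every signature type denotes a finite set while, by Proposition~\ref{prop:free}, the countable theory $\thy_\omega$ has finitely many operation symbols, each of finite arity, and finitely many axioms.

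For the \textbf{upper bound} I would start from the correctness characterisation established in Section~\ref{sec:reasoning_about_handlers}, namely that $H$ is correct iff all the handling terms are defined and $T_1^{H} \simeq T_2^{H}$ holds for every effect-theory equation $\ctx' \stoup \tctx \ent T_1 = T_2$ in $\thy$. For a simple handler each $M_{\op*}$ is built purely from operations, matching and continuation applications, with no nested handling construct, so by the existence assertions of Section~\ref{sec:reasoning_about_handlers} it is always defined and the induced operations are always total; hence only the equational conditions remain. Spelling these out, $H$ is correct iff for every such equation and every valuation $v$ of its template variables by elements of $U\itp{F\void}$ (equivalently, by finite tuples of such elements, the arities being finite), the two ground terms $T_1^{H}[v]$ and $T_2^{H}[v]$ denote the same element of $U\itp{F\void}$. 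Since $U\itp{F\void}$ is the free model of the finite theory $\thy_\omega$, equality in it is provability from finitely many equations and so is recursively enumerable, i.e.\ $\Sigma_1$. Coding elements of $U\itp{F\void}$ by ground terms, correctness reads ``for all valuations $v$ (a universal quantifier over a countable set) the finite conjunction of $\Sigma_1$ assertions $T_1^{H}[v]=T_2^{H}[v]$ holds'', which is manifestly $\Pi_2$.

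For \textbf{hardness} I would reduce the $\Pi_2$-complete totality problem $\mathrm{TOT}=\{M : M \text{ halts on every input}\}$. Given a deterministic machine $M$, I would build a simple signature and effect theory $\thy_M$ simulating $M$ by string rewriting, exactly as in the classical undecidability proof for the word problem of finitely presented monoids: a unary operation symbol for each tape, state and head symbol, a nullary symbol $\mathsf{e}$ for the empty tape and a nullary $\mathsf{h}$, reading a word as unary operations nested over $\mathsf{e}$, so that every element of $U\itp{F\void}$ is the class of such a word. Each transition of $M$ becomes a template equation $u(z)=v(z)$, arranged so that for the initial-configuration word $I_\gamma$ on input $\gamma$ and the fixed term $\mathsf{Halt}=\mathsf{h}(\mathsf{e})$ one has $I_\gamma = \mathsf{Halt}$ in $U\itp{F\void}$ iff $M$ halts on $\gamma$. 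I would then add one fresh unary operation $\mathsf{q}$, a single further axiom $\mathsf{q}(z)=\mathsf{Halt}$ (harmless, as $\mathsf{q}$ occurs in no configuration word), and the closed simple handler $H_M$ that passes every simulation operation through to itself and reinterprets $\mathsf{q}$ by the fixed context building the initial configuration around its continuation, $\mathsf{q}(k)\mapsto I[\,k()\,]$. Because $H_M$ handles the simulation operations by themselves, for their equations $T_1^{H_M}=T_1$ and $T_2^{H_M}=T_2$, so they hold automatically; the only genuine constraint comes from the $\mathsf{q}$-axiom, whose reinterpretation at valuation $z\mapsto\gamma$ asserts precisely $I_\gamma=\mathsf{Halt}$. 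As $\gamma$ ranges over all of $U\itp{F\void}$, hence over all input words, $H_M$ is correct iff $M$ halts on every input, i.e.\ $M\in\mathrm{TOT}$; the map $M\mapsto(\text{signature},\thy_M,H_M)$ is computable and lands among closed simple handlers of type $\handtype{F\void}$, giving the required many-one reduction.

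I expect the \textbf{main obstacle} to be the hardness direction, and within it the \emph{consistency} of the machine simulation: it is routine to force $I_\gamma=\mathsf{Halt}$ from a halting computation, but one must also guarantee the converse, that a non-halting input leaves $I_\gamma$ and $\mathsf{Halt}$ in distinct classes of the congruence generated by $\thy_M$. This is exactly the delicate step in the classical word-problem constructions, handled by exploiting determinism and restricting which rewrites apply to valid configurations; I would import such a construction rather than redo it. A secondary point to verify is that every element of $U\itp{F\void}$ genuinely codes an input, so that the valuation quantifier corresponds to ``all inputs'' with no spurious instances; the encoding of words as unary operations over $\mathsf{e}$ secures this. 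Combining the $\Pi_2$ upper bound with $\Pi_2$-hardness yields $\Pi_2$-completeness.
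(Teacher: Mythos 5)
Your upper bound is essentially the paper's own: for a closed simple handler the handling terms are automatically total, so correctness reduces to the reinterpreted axioms holding in $U\itp{F\void}$, which is the initial model of the induced finitary equational theory; an equation holds there iff every closed instance is provable, a $\forall\exists$ and hence $\Pi_2$ condition. The paper packages exactly this as a correspondence between simple handlers and interpretations of a finitary theory $\thy_f$ in itself, combined with the upper-bound half of Lemma~\ref{lem:pi2}. That half of your argument is sound.

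The hardness half has a genuine gap, and it is precisely the point you wave away at the end. Correctness of $H_M$ requires the reinterpreted axiom $\mathsf{q}(z)=\mathsf{Halt}$ to hold at \emph{every} valuation of $z$ in $U\itp{F\void}$, i.e., at every congruence class of closed terms over the simulation signature --- not only at those coding input words. That signature necessarily contains the state symbols, $\mathsf{h}$, and $\mathsf{q}$ itself, so there are closed ``junk'' terms (for instance $\mathsf{h}(\mathsf{e})$, terms containing a state symbol, or terms containing $\mathsf{q}$) that are not provably equal to any input word; your claim that ``the encoding of words as unary operations over $\mathsf{e}$ secures this'' is false. For such junk $\gamma$, the term $I[\gamma]$ is not a well-formed initial configuration, and the very consistency properties you invoke to keep the simulation faithful (only legitimate configuration rewrites apply) guarantee that $I[\gamma]$ is \emph{not} congruent to $\mathsf{Halt}$. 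Hence even when $M$ is total, $H_M$ is incorrect, and the reduction fails; worse, faithfulness of the simulation and correctness-under-totality pull in opposite directions. The paper's construction is designed to dodge exactly this: it reduces from $\Pi_2$ sentences $\forall x\exists y\,\varphi(x,y)$ of arithmetic using equations that are primitive-recursive \emph{definitions} of total functions, plus a searching symbol $\f[try]$ and a symbol $\f[exists]$ reinterpreted as $\underline{1}$, and the crucial step in its proof is that (when the sentence is true) \emph{every closed term is provably equal to a numeral}, so quantification over all valuations really is quantification over genuine inputs. Some such sufficient-completeness property is the idea missing from your reduction. One could try to graft it onto a Thue-system simulation, e.g.\ by adding cleanup transitions collapsing every ill-formed configuration, but then the confluence/consistency argument (which the paper carries out explicitly via a left-linear, non-overlapping, Church--Rosser rewriting system) has to be redone for the extended system; you cannot simply import the classical word-problem construction unchanged.
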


The polymorphic nature of template variables means that a template can define a whole family of handlers, one for each computation type.
This leads one to the definition of a uniformly simple family of handlers:

\begin{defi}
  A family of handlers $\{\ctx_{\C} \stoup \kctx_{\C} \ent H_{\C} \T \handtype{\C}\}_{\C}$ over a given signature,
  where $\C$ ranges over all computation types,
  is \emph{uniformly simple} if 
  \begin{itemize}
  \item up to reordering, $\ctx_{\C}$ is
  \[
    x_1 \T \alpha_1, \dots, x_m \T \alpha_m, f_1 \T U (\beta_1 \to \C), \dots, f_n \T U (\beta_n \to \C)
  \]
  \item up to reordering, $\kctx_{\C}$ is
  \[
    k'_1 \T \conttype{\beta'_1}{\C}, \ldots, k'_p \T \conttype{\beta'_p}{\C}
  \]
  and
  \item  for each $\op* \T \optype{\alpha}{\beta}$, there is a template
  \[
   x_1 \T \alpha_1, \dots, x_m \T \alpha_m, x \T \alpha
   \stoup z_1 \T \beta_1, \dots, z_n \T \beta_n,
   z'_1 \T \beta'_1, \dots, z'_p \T \beta'_p,
   z \T \beta \ent T_{\op* }
  \]
  such that for each computation type $\C$,
     the handling term
  \[
    \ctx_{\C}, x \T \alpha \stoup \kctx_{\C}, k \T \conttype{\beta}{\C} \ent M_{\op*} \T \C
  \]
   is obtained by the substitution of $T_{\op*}$
  that replaces each $x_i$ by itself, $x$ by itself,
  each $z_j$ by $(y_j\T \beta_j.\, (\force f_j) y_j)$, 
  each $z'_l$ by  $(y'_l \T \beta'_l.\, k'_l(y'_l))$,   
  and $z$ by $(y \T \beta.\, k(y))$.
  %
  %is obtained by the substitution of $T_{\op*}$
  %that replaces $x$ and each $x_i$ by itself,
  %$z$ by $(y \T \beta.\, k(y))$,
  %each $z_j$ by $(y_j\T \beta_j.\, (\force f_j) y_j)$, and
  %each $z'_l$ by  $(y'_l \T \beta'_l.\, k'_l(y'_l))$.

  \end{itemize}
  
\end{defi}

\noindent If a family of handlers is uniformly simple then, as is evident, any member of the family is simple.
Conversely, because of the  polymorphic nature of templates, any simple handler can be generalised to obtain a uniformly simple one.
Of the simple handlers given above, the temporary-state handler, the stream redirection handlers, and the alternative exception handler are also uniformly simple; the CCS relabelling and restriction handlers are not as we define them only for the computation type $F \void$.

Note that a uniformly simple family of handlers is determined by the various variables, types and template terms discussed in the above definition.
As there are finitely many of these altogether, uniformly simple families of handlers can be finitely presented and so it is proper to ask decidability questions about them.

\begin{defi}
  A family of handlers $\{\ctx_{\C} \stoup \kctx_{\C} \ent H_{\C} \T \handtype{\C}\}_{\C}$ over a given signature,
  where $\C$ ranges over all computation types,
  is \emph{correct} if each handler in the family is correct.
\end{defi}

Since a uniformly simple family of handlers cannot use properties of a specific computation type,
  it cannot be as contrived as an arbitrary family of handlers;
we may therefore expect it to be easier to decide its correctness.
As we now see, correctness can become semidecidable:
\begin{thm}
  \label{thm:sigma1}
  The problem of deciding, given a simple signature, effect theory,
  and a uniformly simple family of closed handlers $\{ \ent H_{\C} \T \handtype{\C\,}\}_{\C}$ over the given signature,
  whether the family of handlers is correct, is $\Sigma_1$-complete.
\end{thm}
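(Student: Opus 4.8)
The plan is to prove the two halves separately: membership in $\Sigma_1$ and $\Sigma_1$-hardness. The unifying observation is that, for a uniformly simple family of \emph{closed} handlers, correctness is not really about definedness at all but purely about equational consequence. Indeed, since $\ctx_{\C}$ and $\kctx_{\C}$ are empty, each handling term $M_{\op*}$ is built by substitution into a template $T_{\op*}$, using only operation applications, matching, and applications of the handled continuation $k$; such terms always receive a total interpretation. Hence the existence assertions $\forall x.\, M_{\op*}\defined$ in the correctness characterisation of Section~\ref{sec:reasoning_about_handlers} hold automatically, and by that characterisation the handler $H_{\C}$ is correct at a given $\C$ exactly when $T_1^{H_{\C}} \simeq T_2^{H_{\C}}$ holds in $\itp{\C}$ for every equation $\ctx' \stoup \tctx \ent T_1 = T_2$ of $\thy$, where $(-)^{H}$ is the template translation of Section~\ref{sec:reasoning_about_handlers}. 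Because the family is uniform, $(-)^{H_{\C}}$ is independent of $\C$: it replaces each operation symbol by its defining template, producing from each $\thy$-axiom a single derived identity $\widehat{T_1} = \widehat{T_2}$ between templates.

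For membership, note that each $\itp{\C}$ is a $\thy$-model, so if every derived identity is a consequence of $\thy$ then the family is correct. Conversely, taking $\C = F A$ for the value type $A$ that is the $k$-fold sum of $\unit$, the model $\itp{F A}$ is precisely the free $\thy$-model on $k$ generators; by Birkhoff's theorem a $k$-variable identity holds in that model iff it is derivable from $\thy$. Since each derived identity has finitely many template variables, correctness of the family is therefore equivalent to the assertion that all of the (finitely many) derived identities are equational consequences of $\thy$. Equational derivability from a finite presentation is recursively enumerable (enumerate Birkhoff proofs), so one semidecides correctness by searching in parallel for derivations of all the derived identities and halting once all are found. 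This places the problem in $\Sigma_1$.

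For hardness I reduce from the uniform word problem for finite equational theories over simple signatures --- given $\thy$ and an equation $s = t$, decide whether $\thy \vdash s = t$ --- which is $\Sigma_1$-complete, since one can simulate a Turing or Minsky machine by such a theory. Given an instance with $s,t$ templates in variables $z_1,\dots,z_k$, I extend the signature by a fresh $\type{k}$-ary operation symbol $\mathsf{g}$ and set $\thy' \defeq \thy \cup \{\,\mathsf{g}(z_1,\dots,z_k) = s\,\}$, a definitional and hence conservative extension of $\thy$. I then take the uniformly simple closed family $H_{\C}$ that handles every operation of $\thy$ by itself and handles $\mathsf{g}$ by the template $t$, i.e.\ $T_{\mathsf{g}} = t$ with each $z_i$ read off the $i$-th branch of the continuation $k$. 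The derived identities coming from the axioms of $\thy$ are then unchanged and trivially consequences, while the derived identity coming from $\mathsf{g}(z_1,\dots,z_k) = s$ is $t = s$. Thus the family is correct iff $\thy' \vdash s = t$, which by conservativity holds iff $\thy \vdash s = t$, completing the reduction; care is taken only to keep the constructed signature simple.

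The main obstacle is the membership argument, and within it the step collapsing correctness --- a condition quantified over the proper class of all computation types --- into a finite set of equational-consequence questions. This requires both the observation that closedness makes totality automatic, so that no definedness obligations survive, and the Birkhoff argument that the finite free models, all of which occur among the $\itp{\C}$, suffice to witness failure of an identity; the $\Sigma_1$ upper bound then rests only on the recursive enumerability of equational logic.
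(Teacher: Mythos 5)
Your proposal is correct, and its membership half is essentially the paper's own argument: the paper likewise reduces correctness of a uniformly simple closed family to a self-interpretation of the finitary equational theory $\thy_f$ associated (via Proposition~\ref{prop:free}) to the simple effect theory, observes that the models $\itp{\C}$ include all free models $F\type{n}$ over finite sets, and concludes --- by exactly your soundness-plus-free-model cycle --- that correctness is equivalent to provability of the derived identities in $\thy_f$, hence r.e.; your explicit remark that closedness makes all definedness obligations vacuous is implicit in the paper's identification of such handlers with interpretations. Where you genuinely diverge is in the hardness half. The paper proves a self-contained Lemma~\ref{lem:sigma1}: from a $\Sigma_1$ sentence $\exists y.\,\varphi(y)$ of arithmetic it builds an equational theory encoding the relevant primitive recursive functions together with symbols $\mathsf{try}$ and $\mathsf{exists}$ satisfying $\mathsf{try}(y) = \mathsf{f}_k(y) \lor \mathsf{try}(\mathsf{succ}(y))$ and $\mathsf{exists}() = \mathsf{try}(\underline{0})$, and shows by a Church--Rosser rewriting analysis that the self-interpretation sending $\mathsf{exists}$ to $\underline{1}$ is provable iff the sentence is true; this construction is shared, mutatis mutandis, with the $\Pi_2$ lemma underlying Theorem~\ref{thm:pi2}. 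You instead reduce from the uniform word problem for finitely presented equational theories, importing its $\Sigma_1$-completeness as a classical (Markov--Post) fact, and your definitional-extension trick --- adjoin a fresh $\type{k}$-ary operation $\mathsf{g}$ with axiom $\mathsf{g}(z_1,\dots,z_k) = s$, handle all old operations by themselves and $\mathsf{g}$ by $t$, so that the unique nontrivial derived identity is $t = s$, then invoke conservativity --- is a valid reduction producing a simple signature and a uniformly simple closed family. The trade-off: your route is shorter and more modular, and it shows hardness already for handlers that act identically on all but one operation, but it rests on an external computability theorem; the paper's route is longer but self-contained, and its machinery does double duty for the $\Pi_2$-completeness result.
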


Effect theories with a simple signature correspond
  (see the proof of Proposition~\ref{prop:free})
  to ordinary finite equational theories,
  i.e., those with finitely many finitary function symbols and finitely many axioms.
Consequently notions such as decidability can be transferred to them.
(All the above examples of simple signatures and theories are decidable in this sense.)
With this understanding, we have:
\begin{thm} \label{thm:decidable}
 The problem of deciding, given a simple signature, decidable effect theory,
 and a uniformly simple family of %GDP J1
 closed handlers $\{ \ent H_{\C} \T \handtype{\C\,}\}_{\C}$ over the given signature,
 whether the family of handlers is correct, is decidable.
\end{thm}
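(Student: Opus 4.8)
The plan is to reduce the correctness of a uniformly simple family to a finite conjunction of equational-provability queries in $\thy$, each decidable by hypothesis. Since the signature is simple, by the correspondence noted after Proposition~\ref{prop:free} we may regard $\thy$ as an ordinary finite equational theory with finitely many finitary operation symbols, so that each computation type $\C$ gives a $\thy$-algebra $U\itp{\C}$; in particular, for each finite set $\type{n}$ the free model $F\type{n}$ is available as such a $\C$. A uniformly simple family is presented by a single $\C$-independent template $T_{\op*}$ per operation symbol $\op*$, and since the handlers $H_{\C}$ are closed and simple, each handling term $M_{\op*}$ is built purely from algebraic constructs (operations, matching, and continuation applications, with no nested handlers). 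Hence, by the existence assertions of Section~\ref{sec:reasoning_about_handlers}, the operations $\op{H_{\C}}$ are always total, and the correctness characterisation of that section says that $H_{\C}$ is correct precisely when $T_1^H \keq T_2^H$ holds in $\itp{\C}$ for every axiom $\ctx' \stoup \tctx \ent T_1 = T_2$ of $\thy$.

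First I would make explicit that, reading each $T_{\op*}$ as a derived $\thy$-term in the variables supplied by its continuation slots, the reinterpreted operation $\op{H_{\C}}$ is exactly the derived operation that $T_{\op*}$ names in the model $\itp{\C}$, and that the syntactic term $T^H$ of Section~\ref{sec:reasoning_about_handlers} corresponds to the $\thy$-term $\widehat{T}$ obtained from $T$ by replacing each operation symbol by its template. Crucially, because the family is uniform, $\widehat{T}$ does not depend on $\C$: it is one fixed $\thy$-term whose free variables are the continuation variables $k_z$ associated to the template variables $z$. Consequently $T_1^H \keq T_2^H$ holds in $\itp{\C}$ if, and only if, $\itp{\C} \models \widehat{T_1} = \widehat{T_2}$, an equation between ordinary $\thy$-terms.

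The crux is then to show that requiring this for all computation types $\C$ is equivalent to $\thy \vdash \widehat{T_1} = \widehat{T_2}$. One direction is immediate: every $U\itp{\C}$ is a $\thy$-model, so any $\thy$-provable equation holds in each. For the converse it suffices to note that the finitely generated free models $F\type{n}$ (taking $n$ to be at least the number of distinct continuation variables occurring) are themselves among the $\itp{\C}$, and that the carrier of $F\type{n}$ is the $\thy$-term algebra on $n$ generators; hence $\widehat{T_1}$ and $\widehat{T_2}$ denote the same element of $F\type{n}$ exactly when $\thy \vdash \widehat{T_1} = \widehat{T_2}$, by completeness of equational logic. Thus validity in all the $\itp{\C}$ collapses to provability from $\thy$.

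Finally, since $\thy$ has only finitely many axioms, the family is correct if, and only if, the finite conjunction $\bigwedge\{\, \thy \vdash \widehat{T_1} = \widehat{T_2} \;\mid\; (\ctx' \stoup \tctx \ent T_1 = T_2) \in \thy \,\}$ holds; as $\thy$ is decidable each conjunct is decidable, and hence so is correctness of the family. The main obstacle is the crux step above, namely pinning down that the computation types realise enough models—specifically the finitely generated free models—for satisfaction in all $\itp{\C}$ to coincide with equational consequence from $\thy$. Once this is secured, decidability follows at once from that of $\thy$, the only difference from Theorem~\ref{thm:sigma1} being that there only semidecidability (recursive enumerability) of $\thy$-provability is available, whereas decidability of $\thy$ promotes the $\Sigma_1$ bound to an outright decision procedure.
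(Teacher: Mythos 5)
Your proposal is correct and follows essentially the same route as the paper: both reduce correctness of the uniformly simple family to provability, in the finitary equational theory corresponding to the simple signature, of the equations obtained by substituting the templates for the operations (i.e., the induced interpretation of the theory in itself), using crucially that the free models $F\type{n}$ occur among the $\itp{\C}$ so that validity in all computation-type models collapses to validity in free models over finite sets and hence, by completeness, to provability. Decidability then follows from decidability of $\thy$, exactly as in the paper's appeal to the proofs of Theorems~\ref{thm:sigma1} and~\ref{thm:pi2}.
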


\section{Recursion}
\label{sec:recursion}

%REV I did not read this section. The paper will be already a substantial
%REV contribution, if it focuses on theories based on algebraic equations.

%GDP J1 let's say that we still wish to keep this section as recursion is important and we wish to say what we have done %about it.
%MP J1 Yes, it would be a shame to throw it away.

%GDP4
We now sketch some changes to the syntax and semantics that enable the incorporation of recursion. We begin with syntax. Signatures are as before except that we always assume an additional operation symbol $\botop* \T \optype{\unit}{\void}$ representing nontermination.
To provide a recursion facility, we follow~\cite{levy06call-by-push-value} and
  add a fixed-point constructor $\rec{x \T U \C} M$, typed as follows:
\[
  \deduct{
    \ctx, x \T U \C \stoup \kctx \ent M \T \C
  }{
    \ctx \stoup \kctx \ent \rec{x \T U \C} M \T \C
  }
\]

\newcommand{\cpo}{$\omega$-cpo }
\newcommand{\cpos}{$\omega$-cpos }

Turning to semantics, templates are defined as before, but effect theories now consist of
\emph{inequations} $\ctx \stoup \tctx \ent T_1 \leq T_2$ rather than equations.
We  always assume that the effect theory contains the inequation $\cdot \stoup z \T \unit \ent \botop*() \leq z$,
  stating that $\botop*$ is the least element.
  As examples one can take those of Section~\ref{sec:semantics} with this additional inequation, reading equations as conjunctions of two inequations;  see~\cite{hyland06combining} for further examples.
%GDP J3 importing the examples given above.

In order to %, eventually, MP A4 - not eventually, but now.
interpret recursion, we move, as remarked above, from the category of sets to $\wCpo$, the category of  \cpos and continuous functions.
First we assume given an assignment of \cpos $\itp{\mathbf{b}}$ to base types $\mathbf{b}$, such that $\itp{\mathbf{b}}$ is flat and countable  in the case that $\mathbf{b}$ is an arity base type;
this yields interpretations of signature types~$\alpha$ as $\omega$-cpos~$\itp{\alpha}$, which are flat and countable in the case of arity signature types.
(A \cpo is flat if no two distinct elements are comparable.)
An interpretation of an effect theory then consists of such an assignment and
  an assignment $\itp{\f} \from \itp{\alpha} \to \itp{\beta}$ of continuous functions to function symbols 
  $\f \T \alpha \to \beta$,
  and well-formed base values $\ctx \ent V \T \alpha$ are interpreted
  as continuous functions $\itp{V} \from \itp{\ctx} \to \itp{\alpha}$ much as before.

Similarly, given an \cpo $|\model|$ together with operations, that is, continuous functions
  $\op{\model} \from \itp{\alpha} \times |\model|^{\itp{\beta}} \to |\model|$,
  for each operation symbol $\op* \T \optype{\alpha}{\beta}$,
  well-formed templates
    $\ctx \stoup \tctx \ent T$
  can be interpreted as continuous functions
    $\itp{T} \from \itp{\ctx} \times \itp{\tctx} \to |\model|$.

With that,
  one can define models  $\model$ of effect theories $\thy$
  as \cpos together with operations for each operation symbol
  such that such that $\itp{T_1} \leq \itp{T_2}$ holds
  for all the inequations $\ctx \stoup \tctx \ent T_1 \leq T_2$ in $\thy$.
Defining homomorphisms in the evident way,
  one obtains a category of models $\Mod[\thy]$,
  equipped with a forgetful functor $U \from \Mod[\thy] \to \wCpo$.
As we discuss below, the forgetful functor has a left adjoint $F \from \wCpo \to \Mod[\thy]$.
Both functors have continuous strengths and are locally continuous,
  by which is meant that they act continuously on the hom $\omega$-cpos.
It is also important to note that the unique continuous homomorphism
  $\dagsub{f}{\model} \T F A \lol \model$
  induced by a continuous map $f \T A \to U \model$ is itself a continuous function of $f$.
As effect theories assume  least elements,
  carriers of models are pointed $\omega$-cpos (i.e., $\omega$-cpos with a least element)
  and homomorphisms are strict continuous functions (i.e., continuous functions preserving least elements).

The monads $UF$ obtained in this way are the standard ones that occur in semantics.
For example, the monad obtained from the theory for exceptions together with a least element is $P \mapsto (P + \exc)_\bot$,
  where the \emph{lifting} $Q_\bot$ is the $\omega$-cpo obtained from $Q$ by adding a new least element;
the monad for nondeterminism, together with a least element, is $P \mapsto \mathcal{P}(P_\bot)$,
  where $\mathcal{P}$ is the convex powerdomain monad;
and the monad for state, together with a least element and inequations sating that the state operations are strict, is $P \mapsto (S \times P)_\bot^S$.
See~\cite{hyland06combining} for further discussion and references.

To show the existence of the free model functor and its properties, one can proceed along related lines to before.
%Instead of working with a signature of  operation symbols of countable arity,
%one switches to 
One again has families $\op*_a$ of operations of countable arity, but now 
parameterised by elements $a$ of given \cpos $P$, and one considers inequations instead of equations between the resulting infinitary terms.
%Function symbols 
Operations are interpreted by continuous functions varying continuously over the parameter $\omega$-cpos;
this corresponds to the fact that one is working with continuous functions
  $\op{\model} \from \itp{\alpha} \times |\model|^{\itp{\beta}} \to |\model|$
with $\itp{\alpha}$ an $\omega$-cpo.

%MP  Is this right, should $g_a$ be a family,
%MP  or is it a family of operation symbols $g_a$ (in which case I would say $\op*_a$).

%GDP It is right, but maybe a bit awkward. The thing is, we use the words "operation symbols"
%GDP in two different ways, one for the op's that give rise to effects and the
%GDP other is for the function symbols we use in the infinitary theories.
%GDP of course we could call the latter function symbols instead, but then we will get
%GDP another clash of words. Simplest to leave things as they are, I think.

%MP  The only infinitary theories that do not use operation symbols \op_a, but
%MP  function symbols f_k are the ones in the appendix. I am under the impression
%MP  that the above discussion deals with the construction of \thy_\omega in
%MP  \ref{prop:free}. In that case, I would just switch the above notation from
%MP  $g_a$ to $\op*_a$. I would leave the rest of the paragraph as it is.

%GDP J1 Hopefully right now!
%MP J1 I think yes.

One thereby obtains continuously parameterised countably infinitary inequational theories;
these can be shown to be equivalent to the discrete countable Lawvere $\wCpo$-theories:
  see~\cite{power06hylandcountable},
  and see too \cite{plotkin06some} for further discussion of parameterised equational logic.
The existence of the free model functor and its continuity properties follow from the fact that
  $\wCpo$ is locally countably presentable as a cartesian closed category (see \cite{power06hylandcountable}).

Value types are now interpreted by $\omega$-cpos and computation types are interpreted by models whose carriers are pointed $\omega$-cpos,
  again making use of product and exponentiation models.
Values, computations (including the handling construct) and handlers are interpreted by partial maps analogously to before, except that as well as requiring functions be total, one requires that they are also continuous.

The fixed-point constructor is interpreted using the usual least fixed-point interpretation:
\begin{multline*}
    \itp{\ctx \stoup \kctx \ent \rec{x \T U \C} M \T \C}(a,b) \\
 \simeq\left \{\begin{array}{ll}
          \mu \abs{x\T \itp{U\C}} \itp{M}(\pair{a, x}, b) & (\forall x \in \itp{U \C}.\, \itp{M}(\pair{a,x},b)\downarrow)\\
          \mbox{undefined} & (\mbox{otherwise})
  \end{array}
  \right  .
\end{multline*}
%
%\[
 % \itp{\ctx \stoup \kctx \ent \rec{x \T U \C} M \T \C} \simeq
   % \lam{a \T \itp{\ctx}} \lam{\kappa \T \itp{\kctx}}
     %   \mu \abs{x\T \itp{U\C}} \itp{M}((a, x), \kappa)
%\]
%
where, as usual, $\mu \abs{x\T P} f(x)$ is the least fixed-point of a continuous function $f$ on a pointed \cpo $P$.
%Note that the interpretation is defined if, and only if, $\itp{\ctx, x \T U \C \stoup \kctx \ent  M \T \C}$ is.

One can show  by structural induction that the sets of elements of $\itp{\ctx} \times \itp{\kctx}$ at which the denotations $\itp{\ctx \stoup \kctx \ent V \T A}$  of value terms exist form a sub-cpo of $\itp{\ctx} \times \itp{\kctx}$ and  that the denotations are continuous when restricted to that sub-cpo, and, further, that the same holds for computation terms and handlers.   (One may then observe that, as a consequence, the above continuity requirement  is redundant.)

%The interpretation of the handling construct proceeds as before except that,
%  rather than taking $\model_H$ to be the product of all models    $\model_c$, for $c \in \itp{\ctx} \times \itp{\kctx}$,
%  one takes it to be the submodel of that product
%  whose carrier is the \cpo
%    $U \itp{\C}^{ \itp{\ctx} \times \itp{\kctx}}$
%  of continuous functions from $ \itp{\ctx} \times \itp{\kctx}$ to $U \itp{\C}$.
%
%In this context, note  that correct handlers cannot redefine $\botop*$ as the theory of $\botop*$ fixes it uniquely.
Correct handlers cannot redefine $\botop*$ as the theory of $\botop*$ fixes it uniquely.
The handlers given in the various examples above, as detailed in Remark~\ref{rem:correctness},  remain correct in the presence of recursion, understanding  them as defining $\botop*$  by itself (and with the addition of the inequation stating that $\botop*$ is the least element to the corresponding theories).
%GDP J3 said something about handlers here. 

\section*{Conclusion} 

The current work opens some immediate questions.
The most important is how to simultaneously handle two computations to describe parallel combinators,
  e.g., that of CCS or the \textsc{Unix} pipe combinator.
Understanding this would bring parallelism within the ambit of the algebraic theory of effects.

Next, the logical ideas of Section~\ref{sec:reasoning_about_handlers} should be worked out more fully and merged with the general logic for algebraic effects~\cite{plotkin08a-logic,pretnar10the-logic}.
There is a close correspondence between the handling construct and the free model principle in the logic, which should be examined in detail.

It would be worthwhile to extend the results of Section~\ref{sec:correctness_of_handlers} further,
  whether to wider classes of signatures, handlers, or theories.
One would also like to have analogous results for (the interpretation over) \cpos\hspace{-3pt}.

Only correct handlers can be interpreted, but, as we have seen, obtaining mechanisms that ensure correctness is hard.
One option would be to drop equations altogether, when handlers are  interpreted as models of absolutely free theories (i.e., those with no equations). This would be correctly implementable if one removed the connection with the equalities %GDP7
expected for the real effects between handled computations.

%In this connection, it is worth noting that equations do not completely capture the physical limitations of
%effects, for example time, hence it is conceivable that some correct handlers may not be feasible.
%We may expect uniform handlers to be feasible, as they cannot use the properties of a specific data type and
%so, one may imagine, cannot be as contrived.
%Note too that a single monad or algebraic theory may model distinct effects; for example, the complexity
%monad $\mathbb{N} \times -$ may be used to model either space or time.

%GDP Matija, I have omitted this para as I don't know what it means for a handler to
%GDP be feasible. They are surely all feasible in the sense that it is obvious what
%GDP an implementation should be doing.  The point is that when an effect is
%GDP triggered rather than actually carrying it out, one resorts to the handler, so
%GDP no matter what physical thing the effects would do if carried out, one never
%GDP gets that far in any case. The point of correctness is that one can replace the
%GDP handled computations by semantically equivalent ones without altering the
%GDP meaning of the program. It may well be that there is a point here that I have
%GDP forgotten from our californian discussions. Please remind me if so.

%GDP J1 I am assuming there is nothing to do here.
%MP  J1 Yes, I kept this paragraph just because I wasn't sure that we finished this discussion.

More routinely, perhaps, the work done on combinations of effects in~\cite{hyland06combining} should be extended to combinations of handlers,
  and there should be a general operational semantics~\cite{plotkin01adequacy} which includes that of Benton and Kennedy~\cite{benton01exceptional}.

%GDP5 The following seemed more appropriate here than earlier.
In so far as possible, one would like to work in a general categorical setting as regards both the denotational semantics and the logic. Considering only the denotational semantics,  it may be possible to generalise the above semantics to work over any category $\mathbf{V}$ that is locally countably presentable  as a cartesian closed category.
Presumably one would use a more abstract notion of the effect theories of
% MP A1 - I think effect theories are also a part of 4.2, so I changed the reference from 4.1 to 4.
%Section~\ref{sub:effect_theory}
Section~\ref{sec:semantics}
above,  %below, 
based on Lawvere $\mathbf{V}$-theories
(see \cite{plotkin04computational}) or, perhaps, just discrete Lawvere $\mathbf{V}$-theories~\cite{power06hylandcountable}. As one needs  to deal with partiality when handlers are incorrect, one would  require a suitable factorisation system.
%, and it may be sufficient to require that the monos of the factorisation system are closed under pullbacks and admit %universal quantification along projections. 
%GDP7 From John we know the pullback condition is not needed, but also (I had remembered and then forgotten) that %we may need Beck-Chevalley.  Figuring what is actually needed is future work, so I have said less now.

Finally, one should develop the programming language aspects further.
For example, while call-by-push-value serves well as a fundamental calculus and as an intermediate language,
  it may be more realistic, or at least more in accordance with current practice,
  to find a formulation of handlers in a call-by-value context (and call-by-name also has some interest).
It would also be important to increase ease of programming,
  for example by allowing abstraction on handlers, rather than, as above, making use of their global variables;
  one would also like syntactic support for parametric handlers;
  some ideas along these lines can be found in~\cite{plotkin09handlers}.
In general, perhaps handlers could become more first-class entities.

%GDP I don't think we should refer to eff - there is no publication to refer to and we are "writing for history" The
%GDP eff publication can refer to this one. I managed to have the first-class idea without eff, and making things first-
%GDP  class is hardly an original thought in the general PL context. What IS different in eff is the idea that handlers 
%GDP should denote homomorphisms, and the knock-one effects for them as first-class objects. In other words if you 
%GDP want handlers to be first-class objects the homomorphism idea is the way to go. Your publication can 
%GDP emphasize this point, and you will have the above para to contrast with!
%GDP If you are not convinced by all this just remove the mention of first class objects. I still don't think there is any 
%GDP scientifically existing eff to refer to.

%MP I am convinced. Let us keep it as it is.

%MP  There is an Eff paper on arXiv now, waiting to be accepted to JLAP.
%MP  Should we refer to this one, and maybe also to other languages that
%MP  use algebraic effects (Frank by Conor McBride, ...?) I do not know
%MP  what the usual tradition is in this case when so much time goes by
%MP  between conference and journal paper. 

%GDP J1 I prefer to keep the paper as it is rather than add ahistorical references, meaning things that happened after %paper submission.
%MP J1 Agreed.

\section*{Acknowledgments}

We  thank
  Andrej Bauer,
  Andrzej Filinski,
  Ohad Kammar,
  Paul Levy,
  John Power,
  Mojca Pretnar,
  Alex Simpson,
  and an anonymous referee
  %GDP J1 we just had one referee!
  %MP J1 I know, I just did not want to point that out to the general public.
  %GDP J2 the change does not say there were only two referees!!
  %MP J2 What I meant to say is that by saying "anonymous referees" I wanted
  %to leave the number of referees unspecified. I think people assume there is
  %usually more than one, so saying "an anonymous referee" suggest that
  %something went wrong in the process, and I did not want to point that out.
  %Anyway, the referee probably knows that she/he was the only one, so I guess
  %we better thank just him, otherwise it will sound as a generic acknowledgment.
  %I will let you be the judge because you definitelly know more about how
  %things should be done.
for their insightful comments and support.

\bibliography{bibliography}
\bibliographystyle{plain}

\appendix

\section{Decidability of handler correctness}
\label{apdx:decidability_of_correctness}

%REV The results in the appendix show that the "correctness" of handlers is a
%REV difficult problem. These results are certainly interesting, and suggest further
%REV work, like identifying classes of equational theories for which the problem in
%REV Lemma A.2 becomes decidable.

%GDP I want to strengthen these results, I expect as per previous e-mails?

%MP  What route of strengthening do you have in mind?
%MP  a) we had before that correctness of all (not just closed)
%MP     simply uniform families of handlers over decidable theories is decidable.
%MP     You changed that to closed because the proof didn't work out, but you
%MP     said you wanted to return to it later. I assume you still want to do that?
%MP  b) one option we have is that we can do a converse:
%MP     the question whether a simply uniform family of handlers over a given
%MP     simple signature and an effect theory is correct is
%MP     (semi-)decidable if, and only if, the effect theory is (semi-)decidable.
%MP  c) did you want to introduce a third type of handlers besides uniform & simply uniform?

%MP  I would not go further into this right now. If you think otherwise, feel
%MP  free to change it.

% DONE

We prove our results on handler correctness by reducing correctness to related questions in equational logic.
An interpretation of an equational theory $\thy$ in another $\thy'$ is given by
  an assignment of a $\thy'$-term with free variables included in $x_1, \dots, x_n$ to every function symbol of arity $n$ of $\thy$.
This results in an interpretation of every $\thy$-term by a $\thy'$-term,
  and so in an interpretation of every $\thy$-equation by a $\thy'$-equation.

\begin{lem}
  \label{lem:pi2}
 Given a finitary equational theory with finite signature and finitely many axioms,
 and an interpretation of this theory in itself,
 it is a $\Pi_2$-complete problem to decide  whether the interpretation  holds in the initial model of the theory.
\end{lem}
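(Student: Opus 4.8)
\textbf{The plan} is to prove the two bounds separately, treating the interpretation framing as a thin wrapper around the $\Pi_2$-completeness of inductive validity. First recall the meaning of the statement: a self-interpretation $I$ of a finitary equational theory $\mathcal{T}$ \emph{holds in the initial model} exactly when, for every axiom $(l = r)$ of $\mathcal{T}$, the interpreted equation $I(l) = I(r)$ is valid in the initial algebra $T_{\mathcal{T}}$ (equivalently, the initial algebra with operations reinterpreted through $I$ is again a model of $\mathcal{T}$). Since the initial algebra is the ground term algebra modulo provable equality, an equation with variables holds there iff every ground instance of it is provably equal. So the whole condition unfolds to $\bigwedge_{(l=r)\in\mathcal{T}}\, \forall \sigma.\; \mathcal{T} \vdash I(l)\sigma = I(r)\sigma$, a finite conjunction of predicates of the form ``for all ground substitutions $\sigma$, a $\Sigma_1$ (proof-search) predicate holds''. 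This is a $\Pi_2$ formula, giving membership in $\Pi_2$.

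For $\Pi_2$-hardness I would reduce from $\mathrm{TOTAL} = \{M : M \text{ halts on every input}\}$, a standard $\Pi_2$-complete set. Given $M$, I build a finite (many-sorted) theory $\mathcal{T}_M$ with numerals $0,s$ of a sort $\mathbf{Nat}$, a sort of configurations with operations encoding one transition step, a symbol $\mathrm{run} \from \mathbf{Nat} \to \mathbf{Bool}$, a constant $\top$, and two fresh symbols $a,b \from \mathbf{Nat} \to \mathbf{Bool}$. The axioms simulate $M$: one equation per transition rewriting a configuration to its successor, one feeding $\mathrm{run}(n)$ into the start configuration on input $n$, one collapsing every halting configuration to $\top$, and the single extra axiom $a(x) = b(x)$. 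The self-interpretation $I_M$ is the identity on every symbol except $a \mapsto \mathrm{run}(x)$ and $b \mapsto \top$, so that $I_M(a(x) = b(x))$ is exactly $\mathrm{run}(x) = \top$, while every other interpreted axiom coincides with an axiom of $\mathcal{T}_M$ and so holds in its initial model automatically.

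It then remains to verify that $\mathrm{run}(n) = \top$ holds in the initial model of $\mathcal{T}_M$ iff $M$ halts on $n$. If $M$ halts, the finite computation yields a chain of axiom instances proving $\mathrm{run}(n) = \top$, so the equation holds in the initial algebra. For the converse I would construct a two-element $\mathbf{Bool}$-model interpreting each configuration term (and each $\mathrm{run}(n)$) by $\top$ precisely when the computation from it halts; since ``halting from a configuration'' is invariant under stepping and holds at halting configurations, every axiom is satisfied, yet $\mathrm{run}(n) = \top$ fails there whenever $M$ diverges on $n$. As the initial algebra maps uniquely into this model, validity of $\mathrm{run}(n) = \top$ in the initial model forces halting. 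Because the ground terms of sort $\mathbf{Nat}$ are exactly the numerals, the interpreted axiom $\mathrm{run}(x) = \top$ holds in the initial model iff $M$ halts on all $n$, i.e.\ iff $M \in \mathrm{TOTAL}$, completing the reduction.

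\textbf{The main obstacle} is the correctness of the simulation. The delicate points are (i) faithfulness in both directions: a naive normal-form argument fails because the rewriting does not terminate on diverging inputs, which is exactly why the separating $\mathbf{Bool}$-model is needed to show non-provability; and (ii) the bookkeeping that keeps the ground $\mathbf{Nat}$-terms equal to the numerals, so that the universal quantifier implicit in ``holds in the initial model'' ranges precisely over the inputs of $M$ --- this is what the sorting (or an equivalent single-sorted encoding guarding $\mathrm{run}$ by a definedness predicate) buys. The remaining facts --- that adjoining $a,b$ with $a(x)=b(x)$ is conservative over the computation part, and that the original $\mathcal{T}_M$-axioms are fixed by $I_M$ --- are routine.
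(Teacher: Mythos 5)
Your $\Pi_2$ upper bound is the same as the paper's. For hardness you take a genuinely different route in two respects: you reduce from $\mathrm{TOTAL}$ via a direct Turing-machine simulation, where the paper reduces from a $\Pi_2$ sentence of arithmetic encoded by primitive-recursive function symbols plus an unbounded-search symbol $\mathsf{try}$; and you establish the unprovability direction with a separating model, where the paper orients the axioms into a left-linear, non-overlapping (hence Church--Rosser) rewrite system and analyses reductions. Within the many-sorted setting you set up, both of these substitutions are sound, and the separating-model argument is a clean alternative to the confluence analysis.

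The genuine gap is the sort structure. The lemma, as it is used in the proof of Theorem~\ref{thm:pi2}, must hold for \emph{single-sorted} finitary theories: those are exactly the theories $\thy_f$ arising from simple signatures, and hardness for the larger class of many-sorted theories does not transfer to this subclass. Your reduction leans essentially on the sorts --- ``the ground terms of sort $\mathbf{Nat}$ are exactly the numerals'' is what makes the implicit universal quantifier range precisely over the inputs of $M$. Collapsing the sorts naively destroys the reduction: in the single-sorted theory the interpreted axiom $\mathsf{run}(x) = \top$ must hold at \emph{every} ground term, and already the instance $x := \top$ yields $\mathsf{run}(\top) = \top$, where $\mathsf{run}(\top)$ unfolds to the evaluation of a junk configuration on which no transition axiom fires; one checks (by a model, or by confluence) that it is not provably equal to $\top$, so the interpretation fails even when $M$ is total. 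Moreover, equational logic has no ``definedness predicate'' to guard with, so the parenthetical fix you offer is not available as stated. A real repair needs an extra idea: either a retraction-style numeralizer $\pi$ with axioms $\pi(0)=0$, $\pi(\mathsf{s}(x))=\mathsf{s}(\pi(x))$, and $\pi(g(\dots))=0$ for every other symbol $g$, interpreting $\mathsf{a} \mapsto \mathsf{run}(\pi(x))$ --- and then one must verify that $\pi$ provably maps every ground term to a numeral, is provably the identity on numerals, and that the separating model can be extended over the junk elements; or the paper's own design, which encodes the computation entirely by provably total primitive-recursive symbols so that in the ``yes'' case every ground term is provably equal to a numeral (this is also why the paper massages the relation $R$ to be monotone in its second argument, so that $\mathsf{try}(m,n)$ collapses for all $n$, not just $n=0$). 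In short, the junk-term problem you relegate to ``routine bookkeeping'' is the actual technical crux of the lemma.
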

\begin{proof}
  \newcommand{\und}[1]{\underline{#1}}
  The problem is clearly in $\Pi_2$ as an axiom holds in the initial model of such an equational theory
  if, and only if,
  there exists a proof of all its closed instances.
  Conversely, take a $\Pi_2$ sentence of Peano Arithmetic.
  Without loss of generality, we can assume this to be of the form
    $\fra{x} \exs{y} \varphi(x, y)$,
  where $\varphi(x,y)$ defines a primitive recursive relation $R(m,n)$.
  Changing to  $\fra{x} \exs{y} \exs{y' \leq y} \varphi(x, y')$ if necessary,
    we can assume that if $R(m,n)$ holds and $n \leq n'$, then $R(m,n')$ holds too.
  We now define  a finitary equational theory $\thy$ with finite signature and finitely many axioms,
    and an interpretation of it in itself,
    such that $\fra{x} \exs{y} \varphi(x, y)$ is  true
    if, and only if,
    the interpretation of the axioms holds in the initial model of the theory.

  Let $f_1, \dots, f_k$ be a sequence of primitive recursive functions,
    including a function (coding) disjunction,
    each definable in terms of the previous ones by composition or primitive recursion,
    such that $f_k$ is the characteristic function of $R$.
  Take a constant symbol $\f[zero]$ and a unary function symbol $\f[succ]$.
  Next, for all the primitive recursive functions $f_i$ take:
    a function symbol $\f_i$ of the same arity, and
    axioms corresponding to the primitive recursive definition of $f_i$,
    written using $\f[zero]$, $\f[succ]$, and the $\f_j$ with $j < i$.

  Next, take a binary function symbol $\f[try]$ and a unary function symbol $\f[exists]$,
    together with the following two axioms:
  \begin{align*}
    \f[try](x, y) &= \f_k(x, y)  \lor \f[try](x, \f[succ](y)) \\
    \f[exists](x) &= \f[try](x, \und{0})
  \end{align*}
  where $\lor$ is the function symbol corresponding to disjunction and $\und{n}$ is the $n$-th numeral, defined using $\f[zero]$ and $\f[succ]$. All this defines the theory $\thy$.

  Finally, take the interpretation of the theory $\thy$ in itself where each function symbol $\f$ other than $\f[exists]$ is interpreted by itself
  --- more precisely, the term $\f(x_1, \dots, x_n)$ ---
  and $\f[exists]$ is interpreted by the term $\und{1}$.
  This  interpretation of the axioms evidently holds in the initial model of $\thy$
  if, and only if,
  $\f[try](t, \und{0}) = \und{1}$ is provable for all closed terms $t$.

  Next, assume that the sentence $\fra{x} \exs{y} \varphi(x, y)$ is true and choose $m$.
  Then there exists an $n$ such that $\f_k(\und{m}, \und{n}) = \und{1}$ is provable in $\thy$.
  The following sequence of equations is then provable in $\thy$:
  \begin{align*}
    \f[try](\und{m}, \und{0})
    &= \f_k(\und{m}, \und{0}) \lor \f[try](\und{m}, \und{1}) \\
    &= \f_k(\und{m}, \und{0}) \lor \f_k(\und{m}, \und{1}) \lor \f[try](\und{m}, \und{2})  \\
    &\;\,\vdots \\
    &= \f_k(\und{m}, \und{0}) \lor \cdots \lor \f_k(\und{m}, \und{n}) \lor \f[try](\und{m}, \und{n + 1}) \\
    &= \und{1}
  \end{align*}
  (with the last holding as $\f_k(\und{m}, \und{n}) = \und{1}$ is provable).
  More generally, as $\f_k(\und{m}, \und{n'}) = \und{1}$ whenever $n' \geq n$,
    a similar argument shows that
    $\f[try](\und{m}, \und{n}) = \und{1}$ is provable for any $m$ and $n$.
  It follows that all closed terms of $\thy$ are provably equal to numerals.
  Therefore, as we have also shown that
    $\f[try](\und{m}, \und{0}) = \und{1}$ is provable for all $m$,
    the interpretation of the axioms holds in the initial model.

  Conversely, assume that $\f[try](t, \und{0}) = \und{1}$ is provable in $\thy$ for all closed terms $t$.
  Then, in particular, $\f[try](\und{m}, \und{0}) = \und{1}$ is provable in $\thy$ for all $m$.
  We analyse proofs in $\thy$ via the first-order term rewriting system obtained by orienting all the axioms of $\thy$ from left to right.
  As the reduction rules are left-linear and there are no overlaps between them, we have a Church-Rosser system~\cite{terese03term}.
  Hence, for any two terms $t$ and $u$, the equation $t = u$ is provable if, and only if, $t$ and $u$ reduce to a common term.

  In particular, by our assumption, for every  $m$, there is some number of steps $s$ such that
    $\f[try](\und{m}, \und{0}) \to^s \und{1}$.
  We first observe that $t \lor u \to^* \und{1}$ if, and only if, $t \to^* \und{1}$ or $u \to^* \und{1}$.
  Using this observation, it follows by induction on $s$ that
    there exists an $n$ such that $\f_k(\und{m}, \und{n}) \to^* \und{1}$.
  Therefore the sentence $\fra{x} \exs{y} \varphi(x, y)$ is true, concluding the proof.
\end{proof}

\begin{proof}[Proof of Theorem~\ref{thm:pi2}]
  Given a simple signature and theory $\thy$,
    following the proof of Proposition~\ref{prop:free}
    one recursively constructs a corresponding finitary equational theory $\thy_f$
    with finite signature and finitely many axioms whose models are in 1-1 correspondence with those of the given theory.
  Handlers $\ent H \T \handtype{F \type{0}\,}$ then correspond to interpretations of $\thy_f$ in itself,
    and are correct if, and only if, the interpretation of $\thy_f$ in itself holds in the model of $\thy_f$
    corresponding to the model $F\itp{\type{0}} = \itp{F\type{0}}$ of $\thy$.
  As $F\itp{\type{0}}$ is the initial model of $\thy$ (being its free model on the empty set),
    that corresponding model of $\thy$ is its initial model.

  Conversely, given a finitary equational theory $\thy$ with finite signature and finitely many axioms,
    one immediately constructs a simple signature and theory $\thy_s$
    such that the above construction yields back the original finitary equational theory $\thy$
    (up to a bijection of its function symbols).
  Further, given an interpretation of $\thy$ in itself one immediately constructs
  a handler $\ent H \T \handtype{F \type{0}\,}$ to which the interpretation of $\thy$ in itself  corresponds.

  The result then follows by combining these facts with Lemma~\ref{lem:pi2}.
\end{proof}

\begin{lem}
  \label{lem:sigma1}
  Given a finitary equational theory with finite signature and finitely many axioms,
  and an interpretation of this theory in itself,
  it is a $\Sigma_1$-complete problem to decide  whether the interpretation is provable in the theory.
\end{lem}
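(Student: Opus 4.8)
The plan is to prove membership in $\Sigma_1$ directly, and then to obtain $\Sigma_1$-hardness by a reduction closely modelled on that of Lemma~\ref{lem:pi2}. For membership, note that the interpretation is provable precisely when each of the finitely many interpreted axioms is provable in the theory. Provability of a single equation in a finitary equational theory with finitely many axioms is semidecidable, since one may enumerate all equational-logic derivations from the axioms and test whether any of them concludes with the given equation. A finite conjunction of $\Sigma_1$ predicates is again $\Sigma_1$, so the problem lies in $\Sigma_1$.

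For hardness I would reduce from the $\Sigma_1$-complete set of true $\Sigma_1$ sentences of Peano Arithmetic. Fix such a sentence, which without loss of generality has the form $\exists y.\, \varphi(y)$ with $\varphi$ defining a primitive recursive predicate. As in Lemma~\ref{lem:pi2} I would take a sequence of primitive recursive functions $f_1, \dots, f_k$, including disjunction and with $f_k$ the characteristic function of the predicate, each defined from the earlier ones by composition or primitive recursion, and form the equational theory with a constant $\f[zero]$, a unary $\f[succ]$, a symbol $\f_i$ for each $f_i$, and the axioms encoding their primitive recursive definitions. The only departure from Lemma~\ref{lem:pi2} is that the search is now started from a fixed point rather than a parameter: I would add a unary symbol $\f[try]$ and a \emph{constant} $\f[goal]$, with axioms
\[
  \f[try](y) = \f_k(y) \lor \f[try](\f[succ](y)),
  \qquad
  \f[goal] = \f[try](\underline{0}).
\]
The self-interpretation sends $\f[goal]$ to the numeral $\underline{1}$ and every other symbol $\f$ of arity $n$ to the term $\f(x_1, \dots, x_n)$.

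Since $\f[goal]$ occurs in no other axiom, every interpreted axiom apart from the one for $\f[goal]$ is translated to itself and hence is trivially provable, while the axiom for $\f[goal]$ is translated to the \emph{closed} equation $\underline{1} = \f[try](\underline{0})$. Thus the interpretation is provable if, and only if, $\f[try](\underline{0}) = \underline{1}$ is provable. I would then reuse the term-rewriting analysis of Lemma~\ref{lem:pi2}: orienting the axioms left to right yields a left-linear, overlap-free and therefore Church-Rosser system, so that $\f[try](\underline{0}) = \underline{1}$ is provable exactly when the two sides share a reduct, i.e. when $\f[try](\underline{0}) \to^* \underline{1}$. Using that $t \lor u \to^* \underline{1}$ iff $t \to^* \underline{1}$ or $u \to^* \underline{1}$, together with the fact that each closed term $\f_k(\underline{n})$ reduces to a numeral, an induction on reduction length shows $\f[try](\underline{0}) \to^* \underline{1}$ iff $\f_k(\underline{n}) \to^* \underline{1}$ for some $n$, that is, iff $\exists y.\, \varphi(y)$ is true; the converse direction collapses the unfolding $\f[try](\underline{0}) = \f_k(\underline{0}) \lor \cdots \lor \f_k(\underline{N}) \lor \f[try](\underline{N+1})$ to $\underline{1}$ at the least witness $N$, using $\underline{1} \lor x = \underline{1}$. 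As the reduction is plainly computable, this gives $\Sigma_1$-hardness.

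I expect the real point --- as opposed to routine reuse of the rewriting machinery of Lemma~\ref{lem:pi2} --- to be the deliberate choice that makes the complexity drop from $\Pi_2$ to $\Sigma_1$: by phrasing the goal through a constant $\f[goal]$ rather than a parametrised symbol, the single nontrivial interpreted axiom becomes the \emph{closed} equation $\underline{1} = \f[try](\underline{0})$, for which provability and truth in the initial model coincide and amount to the bare existential $\exists y.\, \varphi(y)$, with no outer universal quantifier over closed instances of the kind that forced the $\Pi_2$ bound in Lemma~\ref{lem:pi2}.
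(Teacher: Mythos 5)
Your proposal is correct and is essentially the paper's own proof: the paper likewise reduces from a true $\Sigma_1$ sentence $\exists y.\,\varphi(y)$ of Peano Arithmetic, makes $\f[try]$ unary and the goal symbol (called $\f[exists]$ there, your $\f[goal]$) a constant interpreted as $\underline{1}$, so that provability of the interpretation comes down to provability of the closed equation $\f[try](\underline{0}) = \underline{1}$, which is then settled by the same Church--Rosser rewriting argument as in Lemma~\ref{lem:pi2}. Your write-up merely fills in the steps the paper dismisses as proceeding ``entirely analogously to, but a little more simply than'' that of Lemma~\ref{lem:pi2}, and your closing observation about why the closed goal equation is what drops the complexity to $\Sigma_1$ accords with the paper's construction.
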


\begin{proof}
  \newcommand{\und}[1]{\underline{#1}}
  The proof is very similar to that of Lemma~\ref{lem:pi2}.
  The problem is clearly in $\Sigma_1$.
  Conversely, consider a $\Sigma_1$ sentence of Peano Arithmetic.
  This can be taken, without loss of generality,
    to be of the form $\exs{y} \varphi(y)$,
    where $\varphi(y)$ defines a primitive recursive predicate $R$.

  One then defines a finitary equational theory $\thy$ with finite signature and finitely many axioms,
    and an interpretation of it in itself,
    such that that  $ \exs{y} \varphi(y)$ is true
    if, and only if,
    the interpretation of the axioms holds in all free models of the theory over finite sets.
  The theory $\thy$ is much like before, except that $\f[try]$ is unary and $\f[exists]$ is a constant,
  the corresponding axioms are
  \begin{align*}
    \f[try](y) &= \f_k(y) \lor \f[try](\f[succ](y)) \\
    \f[exists]() &= \f[try](\und{0})
  \end{align*}
  and one considers the interpretation where each function symbol other than $\f[exists]$ is interpreted by itself
  and $\f[exists]$ is interpreted by $\und{1}$.
  This interpretation of the axioms evidently holds in all free models of $\thy$ over finite sets
  if, and only if,
  $\f[try](\und{0}) = \und{1}$ is provable.

  The rest of the proof then proceeds entirely analogously to, but a little more simply than, that of Lemma~\ref{lem:pi2}.
\end{proof}

\begin{proof}[Proof of Theorem~\ref{thm:sigma1}]
The proof proceeds analogously to that of Theorem~\ref{thm:pi2} except that one considers uniformly simple handlers
  $\{\ent H_{\C} \T \handtype{\C}\}_{\C}$.
% $\{\ctx \stoup \kctx \ent H_{\C} \T \handtype{\C}\}_{\C}$.
%GDPDP This was false before the contexts were removed.
%    I am sure the theorem can be extended to the general case,
%    but it is not worth doing this now, especially considering our time constraints.
%    Please leave the clues in the %-type text so we might do something stronger later.

In particular there is, as before, a finitary equational theory $\thy_f$ corresponding to any given theory $\thy$ over a given simple signature.
As before,  a given uniformly simple handler
  $\{\ent H_{\C} \T \handtype{\C}\}_{\C}$
% $\{\ctx \stoup \kctx \ent H_{\C} \T \handtype{\C}\}_{\C}$
then corresponds to an interpretation of $\thy_f$ in itself.
The handler is correct if that interpretation holds in all models of $\thy_f$ corresponding to those of $\thy$ of the form $\itp{\C}$.
%GDP Nov 25, so i corrected this  \C.
Since the $\itp{\C}$ include all free models of $\thy$ of the form $F\type{n}$,
  this last is equivalent to the interpretation holding in all free models of $\thy_f$ over finite sets,
  and so to its being provable in $\thy_f$.
\end{proof}

\begin{proof}[Proof of Theorem~\ref{thm:decidable}]
  Following the proof of Theorem~\ref{thm:sigma1}, we see  that,
    given a theory $\thy$ over a given simple signature,
    determining the correctness of a given uniformly simple handler
    $\{ \ent H_{\C} \T \handtype{\C}\}_{\C}$
  % $\{\ctx \stoup \kctx \ent H_{\C} \T \handtype{\C}\}_{\C}$
    amounts to determining whether the corresponding interpretation of $\thy_f$ in itself is provable in $\thy_f$.
  However that is decidable since $\thy_f$ is.
\end{proof}

\end{document}